\documentclass[10pt,journal]{IEEEtran}

\usepackage{booktabs} 
\usepackage{amsmath}
\usepackage{amsthm}
\usepackage{amsfonts}
\usepackage[english]{babel}
\usepackage{tikz}
\usepackage[utf8]{inputenc}
\usepackage{calc}
\usepackage{mathtools}

\usepackage{algorithm2e}
\usepackage{subcaption}

\usepackage[subject={Todo},author={Nicolas}]{pdfcomment}

\graphicspath{{}{figures/}}



 \newtheorem{theo}{Theorem}
\newtheorem{prop}{Proposition}
\newtheorem{corol}{Corollary}
\newtheorem{conj}{Conjecture}

\newcommand{\1}{\underline{1}}

\begin{document}

\title{Randomized Work Stealing versus Sharing in Large-scale Systems with Non-exponential Job Sizes}

\author{B. Van Houdt\\
Dept. Mathematics and Computer Science\\
University of Antwerp, Belgium}


\maketitle
\begin{abstract}
Work sharing and work stealing are two scheduling paradigms to redistribute work when performing distributed computations.
In work sharing, processors attempt to migrate pending jobs to other processors in the hope of reducing response times. 
In work stealing, on the other hand, underutilized processors attempt to steal jobs from other processors. 
Both paradigms generate a certain communication overhead and the question addressed in this paper is which
of the two reduces the response time the most given that they use the same amount of communication overhead.

Prior work presented explicit bounds, for large scale systems, on when randomized work sharing outperforms randomized work
stealing in case of Poisson arrivals and exponential job sizes and indicated that work sharing
is best when the load is below $\phi -1 \approx 0.6180$, with $\phi$ being the golden ratio. 

In this paper we revisit this
problem and study the impact of the job size distribution using a mean field model.
We present an efficient method to determine the boundary between the regions where sharing or stealing
is best for a given job size distribution, as well as bounds that apply to any (phase-type) job size distribution.
The main insight is that work stealing benefits significantly from having more variable job sizes and 
work sharing may become inferior to work stealing for loads as small as $1/2 + \epsilon$ for any $\epsilon > 0$.
\end{abstract}

\section{Introduction}

Work sharing and stealing are two fundamental scheduling paradigms to redistribute work in a 
distributed computing environment. The idea of work stealing is that any processor that becomes idle may 
attempt to steal a job from another processor with pending jobs. Work sharing on the other hand implies that
processors with pending jobs attempt to pass some of these jobs to idle processors. 
For instance, schedulers part of the Cilk programming language (developed at MIT in the 1990s), the Java
fork/join framework and the .NET Task Parallel Library implement work stealing. 

A particular class of work stealing and sharing strategies that has received considerable attention are the
so-called randomized work stealing/sharing strategies \cite{blumofe1,blumofe2,eager1,mitzenmacher1,mirchandaney1,gast1,minnebo2}. 
Under such a strategy  a processor that intends to
initiate a job transfer (either using stealing or sharing) probes another processor at random to see whether
a job can be transferred. Clearly, the more probes a processor uses, the more likely it becomes that a job can be transferred (between an idle
processor and a processor with pending jobs), which in turn reduces the mean response time. 

The main objective of this paper is to study whether work stealing or sharing achieves the lowest mean response time
in a large homogeneous system provided that both paradigms use the same average number of probe messages per time unit,
called the probe rate.
For Poisson arrivals and exponential job durations (with mean $1$) the following result was proven in \cite{minnebo2} using mean field models. 
As the system size tends to infinity and given that both paradigms use the same overall probe rate $r_{overall}$, 
sharing outperforms stealing  if and only if
\begin{align}\label{eq:testexp}
 \lambda < \frac{\sqrt{(r_{overall}+1)(r_{overall}+5)}-(r_{overall}+1)}{2},
\end{align}
in terms of the mean response time (as well as in the decay rate of the
queue length distribution). As $r_{overall}$ approaches zero, 
the right-hand side decreases to $\phi-1$, where $\phi=(1+\sqrt{5})/2$ is the golden ratio, which indicates
that work sharing prevails for any $r_{overall}$ when $\lambda < (-1+\sqrt{5})/2 \approx 0.6180$. 
In this paper we revisit this problem, but relax the assumption on exponential job sizes (by considering
phase-type distributions).   

Work stealing and sharing is mostly used in practice in the context of dynamic multithreading
where ongoing jobs spawn new jobs that are stored locally (and may be subsequently stolen or shared). They can however also be used 
in a context where all the jobs enter the system via one or multiple dispatchers to complement 
classic load balancing strategies such as the Join-the-Shortest-Queue among $d$ random choices (JSQd) \cite{vvedenskaya3,mitzenmacher2,bramsonLB,ying1}
or Join-the-Idle-Queue (JIQ) \cite{lu1,stolyar1}. The setting considered in this paper then corresponds to assuming new incoming jobs are
assigned among the processors in the system in a random manner. Future work in this
direction may exist in studying how these paradigms perform when combined with a more advanced load balancing algorithm
such as JSQd or JIQ. The model considered in this paper may also be applicable to a setting where a job is initially
assigned to a specific server for reasons such as data locality and can subsequently be migrated in case the
server is currently overloaded.

The modeling approach used in this paper exists in defining a mean field model, that is validated using simulation, and
studying the unique fixed point of this model to identify the region (in terms of $\lambda$ and $r_{overall}$) 
where work stealing/sharing is best for non-exponential job sizes. 
We further indicate that
by relying on the results in \cite{vanhoudt_global_PACM} one can formally
prove that the fixed point of the mean field model corresponds to the limit of the stationary distributions
of the finite systems provided that we truncate the queues and limit ourselves
to hyperexponential job size distributions (see Section \ref{sec:validate} for a brief discussion). 

Below we highlight some of the main contributions. Contributions 2) to 6) are valid in the limit as the number of servers tends to infinity under the
assumption that the unique fixed point of the
mean field model is indeed the limit of the stationary distributions of the finite systems (as suggested
by simulation, see Section \ref{sec:validate}).
\begin{enumerate}
\item We present a mean field model for work stealing/sharing and prove that this model
has a unique fixed point that can be computed easily using matrix analytic methods. 
\item We devise a simple test to determine whether work stealing or sharing is best for
a given job size distribution, arrival and probe rate.
\item We prove that there exists a $\lambda^* \in [1/2,1)$ (that depends on the job size distribution
and probe rate) and an $r^*_{overall} \in (0,\lambda^2/(1-\lambda)]$ (that depends on the job size distribution
and arrival rate $\lambda$) such that stealing is best if and only if the arrival rate exceeds $\lambda^*$
and work sharing is best if and only if the probe rate exceeds $r^*_{overall}$.
\item We identify a region (in terms of the arrival and probe rate) 
where sharing is best  and present a conjecture
for the region where stealing is best for any phase-type job size distribution.
\item We show that work stealing benefits from having more variable job sizes and for
highly variable job size distributions (and probe rates below 1/2) stealing can outperform
work sharing as soon as the arrival rate exceeds 1/2. 
\item We derive explicit bounds for the case where the probe rate tends to zero. 
\end{enumerate}
The main insight is that when stealing and sharing use the same number of probes
per time unit, stealing benefits more from having more variable job sizes. The intuition behind this result goes as
follows: as the response time of a job reduces when a job is transferred to another server, transferring more
jobs should result in a larger reduction of the mean response time. If the overall probe rate is fixed, the number
of successful transfers is determined by the probability that a probe is successful. Under work sharing all
probes have a probability of $1-\lambda$ of being successful, irrespective of the job size distribution.
Under work stealing this probability is determined by the probability to have pending jobs. This
latter probability is expected to increase as the job sizes becomes more variable. So 
under sharing more job size variability has no impact on the number of successful probes per time unit,
while under stealing more variability yields more successes. 

The remainder of this paper is organized as follows.
In Section \ref{sec:system} we describe the system and work stealing/sharing strategies under consideration.
The mean field model is introduced in Section \ref{sec:meanfield} and a method to compute its unique fixed point 
is presented in Section \ref{sec:MPH1}. The mean field model is validated using simulation in Section \ref{sec:validate}. 
In Section \ref{sec:versus} we indicate how to determine whether stealing or sharing is best for a given job size
distribution, while
numerical results can be found in Section \ref{sec:num}. Bounds on the region where stealing/sharing is best
for any (phase-type) distribution are derived in Section \ref{sec:bounds}. Finally, explicit results for these
bounds for sufficiently small probe rates are established in Section \ref{sec:small} and conclusions are
drawn in Section \ref{sec:concl}. 

\section{System description and strategies}\label{sec:system}

The system analyzed in this paper has the following characteristics:
\begin{enumerate}
\item  The system consists of $N$ homogeneous servers that process incoming jobs in FCFS order and each server
has an infinite buffer to store jobs. 
\item Each server is subject to its own local Poisson arrival process with rate $\lambda$. 
\item  The time required to transfer probe messages and jobs 
between different servers can be neglected in comparison with the processing time, i.e., 
the probe and job transfers are assumed to be instantaneous. 
\end{enumerate}

This setting is similar to the one considered in \cite{eager1,mitzenmacher1,mirchandaney1,gast1,minnebo2}, the main difference is that 
the job sizes are not assumed to be exponential. 
Instead we assume the job sizes follow a continuous time phase-type distribution with mean $1$ characterized by the $n \times n$ 
subgenerator\footnote{$S$ is a subgenerator matrix if its diagonal entries are negative, its off-diagonal entries are non-negative and
its row sums are negative.} matrix $S$
and initial vector $\alpha=(\alpha_1,\ldots,\alpha_n)$. Its cumulative distribution function (cdf) $H$ and probability density function (pdf) 
$h$ is given by $H(y) = 1- \alpha e^{Sy} \1$ and $h(y) = \alpha e^{Sy} \mu$, respectively, where $\1$ is a vector of ones and $\mu  = -S \1$. 
Note that $\alpha_i$ is the probability that a job starts service in phase $i$, entry $(i,j)$ of $S$, 
for $i \not= j$, contains the rate at which the job in service changes its service phase from $i$ to $j$ and $\mu_i$ is the rate
at which a job in phase $i$ completes service. For example whenever $S$ is a diagonal matrix, 
the phase type distribution is a hyperexponential distribution.

We note that any general distribution on $[0,\infty)$ 
can be approximated arbitrary closely with a phase-type distributions \cite{tijms1994stochastic} and various fitting tools to do so are available online,
e.g., \cite{jPhase,ProFiDo}. In addition, we believe that many of the stealing/sharing bounds presented in this paper are valid for any
job size distributions. In fact, some of the arguments (based on coupling) presented in the paper are not limited to phase-type distributions.

We consider the following work sharing and stealing mechanisms, called the push and pull strategy in \cite{minnebo2}:
\begin{enumerate}
\item {\it Sharing:} Whenever a server has $\ell \geq 2$ jobs in its queue,
meaning $\ell-1$ jobs are waiting to be served, the server generates 
probe messages at rate $r$. Thus, as long as the number of jobs in
the queue remains above $1$, probes are sent according to a 
Poisson process with rate $r$. Whenever the queue length $\ell$ drops to $1$,
this process is interrupted and remains interrupted as long as the
queue length is below $2$. The server that is probed is selected at
random and is only allowed to accept a job if it is idle.
\item {\it Stealing:} Whenever a server has $\ell = 0$ jobs in its queue,
meaning the server is idle, it generates 
probe messages at rate $r$. Thus, as long as the server remains idle, 
probes are sent according to a Poisson process with rate $r$. 
This process is interrupted whenever the server becomes busy.
The probed server is also selected at
random and a probe is successful if there are jobs {\it waiting}
to be served. Thus, a job in service is never stolen by another server.
\end{enumerate}
The overall probe rate $r_{overall}$ therefore equals $r$ times the probability
that a server holds two or more jobs for work sharing and $r$ times the
probability that a server is idle for work stealing.

The work stealing and sharing strategies  considered in \cite{eager1,mirchandaney1,mirchandaney2}
operate as follows. The work stealing strategy tries to attract a job whenever
the server becomes idle, while the work sharing strategy tries to
get rid of arriving jobs if the server is busy upon their arrival. Further, instead of sending
a single probe, both strategies repeatedly send probes until either one gets a positive
reply or a predefined maximum of $L_p$ probes is reached. 
The overall probe rate $r_{overall}$ of these more traditional strategies clearly depends on $L_p$ and the load $\lambda$, 
which makes it hard to compare these  strategies in a fair manner.  
In \cite{minnebo2} it was shown that under exponential job sizes, 
the fixed points of the mean field models of these traditional strategies and the above strategies that make use of the probe rate $r$ 
coincide if $r$ is set such that the overall probe rate $r_{overall}$ of these more traditional strategies is matched.
As such the traditional strategies do not offer a performance benefit compared to the ones considered in this paper
when the system becomes large and job sizes are exponential.

In the next section we introduce a {\it single} mean field model that is intended to
capture the behavior of the system as $N$ tends to infinity for sharing {\it and} stealing. 
A single mean field model can be used as for both strategies the rate at which jobs are
transferred between servers is given by $Nr$ times the fraction of idle servers
times the fraction of servers with pending jobs. 
However, as the probability that a server is idle does in general not match the probability that it has pending jobs,
the overall probe rate $r_{overall}$ typically differs when both strategies
rely on the same $r$. Hence, as in \cite{minnebo2} which was limited to exponential job sizes, 
we aim at comparing these strategies when the rates $r$ are set such that the overall probe
rate $r_{overall}$ matches some predefined probe rate. 
We do {\it not} consider hybrid strategies where servers probe at some rate $r_1$ when being idle and at
some rate $r_2$ when they have pending jobs. 

\section{Mean field model}\label{sec:meanfield}
 For $\ell > 0$, denote $f_{\ell,j}(t)$ as the fraction of queues with length $\ell$ at time $t$, the server of which is in phase $j$.
Let $f_0(t)$ be the fraction of idle queues at time $t$. Let $1[A]$ be equal to one if $A$ is true and to zero otherwise. 
We propose to use the following ODE model:
\begin{align*}
\frac{d}{dt}&f_{\ell,j}(t) = \lambda f_{\ell-1,j}(t) 1[\ell > 1] - \lambda f_{\ell,j}(t) + \lambda f_0(t) \alpha_j 1[\ell = 1]\\
& + \left( \sum_{j'} \mu_{j'} f_{\ell+1,j'}(t) \right)\alpha_j 
-f_{\ell,j}(t) \mu_j + \sum_{j'\not= j} f_{\ell,j'}(t) s_{j',j} \\
& -f_{\ell,j}(t)  \sum_{j'\not= j} s_{j,j'} + f_0(t) r (f_{\ell+1,j}(t)  - 1[\ell > 1] f_{\ell,j}(t))   \\
& + 1[\ell = 1] rf_0(t) \left( 1-f_0(t)-\sum_{j'} f_{1,j'}(t)\right) \alpha_j,  
\end{align*}
for $\ell \geq 1$ and
\begin{align*}
\frac{d}{dt}f_0(t) &= -\lambda f_0(t) + \sum_{j'} \mu_{j'} f_{1,j'}(t) \\
&-  r f_0(t) \left( 1-f_0(t)-\sum_{j'} f_{1,j'}(t)\right).
\end{align*}
The first three terms for the drift of $f_{\ell,j}(t)$ correspond to arrivals, the next two to service completions,
the two sums to phase changes and the latter two terms are caused by job transfers. 
There are two ways to think about the terms regarding the job transfers. If we consider
work stealing $r f_0(t)$ is the rate at which idle servers attempt to steal work. 
With probability $f_{\ell+1,j}(t)$ they probe a server in phase $j$ with length $\ell +1$ and
thus create an additional server with $\ell$ jobs in phase $j$. With probability $f_{\ell,j}(t)$ a probe steals
a job from a server with length $\ell$ in phase $j$, lowering the number of servers of this type,
unless $\ell = 1$, in which case there is no steal. The last term corresponds to the
increase in servers with one job in phase $j$, such a server is created if a server with
two or more jobs is probed and the initial phase of service equals $j$. Alternatively we could think
in terms of the sharing strategy, $rf_{\ell+1,j}(t)$ is now the rate at which servers with $\ell+1$
jobs in phase $j$ are trying to transfer one of their pending jobs and $f_0(t)$ is the probability
that such a server succeeds. The other two terms can be interpreted in the same way.
For the drift of $f_0(t)$
the first term is due to arrivals, the second due to service completions and the last one is due to job transfers.

We can write these equations in matrix form as follows. Let $\vec f_\ell(t) = (f_{\ell,1}(t),\ldots,$ $f_{\ell,n}(t))$, 
then as $\mu_j = -\sum_{j'} s_{j,j'}$ the above set of ODEs can be expressed as
\begin{align}\label{eq:ODEl>0}
\frac{d}{dt}\vec f_{\ell}(t) &= \lambda \vec f_{\ell-1}(t)  1[\ell > 1]- \lambda \vec f_{\ell}(t) + \lambda f_0(t) \alpha 1[\ell = 1] \nonumber \\
& + \vec f_{\ell+1}(t) \mu \alpha + f_0(t) r (\vec f_{\ell+1}(t) - 1[\ell > 1] \vec f_{\ell}(t))  \nonumber \\  & +\vec f_\ell(t) S  + 1[\ell = 1] r f_0(t) \left( 1-f_0(t)-\vec f_{1}(t) \1\right) \alpha,  
\end{align}
for $\ell \geq 1$ and
\begin{align}\label{eq:ODEl=0}
\frac{d}{dt}f_0(t) &= -\lambda f_0(t) + \vec f_{1}(t) \mu -  r f_0(t) \left( 1-f_0(t)-\vec f_{1}(t)\1 \right).
\end{align}
In the next section we show  this set of ODEs has a unique fixed point $\zeta$ 
(with $\zeta_0 + \sum_{\ell \geq 1} \vec \zeta_\ell \1 =1$) that can be expressed as
the invariant distribution of an M/PH/1 queueing system with negative customers.  
The next proposition is used to establish this result. 
Let $\beta$ be the unique stochastic vector\footnote{Note that $S+\mu \alpha$ is the rate matrix of
an $n$ state continuous time Markov chain and $\beta$ is its unique steady state vector.} such that $\beta (S+\mu\alpha)=0$. 
Entry $i$ of the vector $\beta$ is the probability that the server is in phase $i$ if we observe a busy server
at a random point in time. As the mean service time
is assumed to be $1$, we have $\beta \mu = 1$. 

\begin{prop}\label{prop:zeta}
For any fixed point  $\zeta = (\zeta_0,\vec \zeta_1, \vec \zeta_2, \ldots)$ with 
$\zeta_0 + \sum_{\ell \geq 1} \vec \zeta_\ell \1 =1$
of the set of ODEs (\ref{eq:ODEl>0}-\ref{eq:ODEl=0}), we have
$\zeta_0 = 1-\lambda$ and $\sum_{\ell \geq 1} \vec \zeta_\ell = \lambda \beta$.
\end{prop}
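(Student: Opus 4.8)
The plan is to exploit that at a fixed point every drift in (\ref{eq:ODEl>0})--(\ref{eq:ODEl=0}) is zero, and to extract from the resulting infinite balance system only the two aggregate quantities the statement concerns: $\zeta_0$ and $\vec\sigma:=\sum_{\ell\ge1}\vec\zeta_\ell$. Since the normalization gives $\sum_{\ell\ge1}\vec\zeta_\ell\1=1-\zeta_0<\infty$, every series below converges absolutely, so reindexing and telescoping are legitimate throughout. The argument has two stages: first show $\vec\sigma$ is a left null vector of $S+\mu\alpha$, hence proportional to $\beta$; then identify the proportionality constant by a level-crossing argument.

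For the first stage I would sum the $\ell\ge1$ equations of (\ref{eq:ODEl>0}) over all $\ell\ge1$. After reindexing, the two pure-arrival terms $\lambda\vec\zeta_{\ell-1}1[\ell>1]$ and $-\lambda\vec\zeta_\ell$ cancel, and the job-transfer term $\zeta_0 r(\vec\zeta_{\ell+1}-1[\ell>1]\vec\zeta_\ell)$ telescopes to $0$; collecting the rest should give
\[
\big(\lambda\zeta_0-\vec\zeta_1\mu+r\zeta_0(1-\zeta_0-\vec\zeta_1\1)\big)\alpha+(\vec\sigma\mu)\alpha+\vec\sigma S=0 .
\]
The scalar multiplying $\alpha$ vanishes by the $\ell=0$ equation (\ref{eq:ODEl=0}), and since $(\vec\sigma\mu)\alpha=\vec\sigma(\mu\alpha)$ this reduces to $\vec\sigma(S+\mu\alpha)=0$. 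As $S+\mu\alpha$ is the generator of a Markov chain whose unique stationary vector is $\beta$, its left null space is one-dimensional and spanned by $\beta$, so $\vec\sigma=c\,\beta$ for some $c\ge0$; and because $\beta\mu=1$ we get $c=\vec\sigma\mu$.

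For the second stage I would pass to the phase-aggregated equations by multiplying (\ref{eq:ODEl>0}) by $\1$ (using $\alpha\1=1$ and $S\1=-\mu$) and, for each fixed $L\ge2$, sum the resulting scalar identities over $\ell\ge L$. Telescoping should yield the level-crossing relation
\[
\lambda\, g_{L-1}=\vec\zeta_L\mu+\zeta_0 r\, g_L ,\qquad L\ge2,
\]
with $g_\ell:=\vec\zeta_\ell\1$. Summing this over $L\ge2$, using $\sum_{L\ge2}g_{L-1}=\sum_{\ell\ge1}g_\ell=1-\zeta_0$ (normalization) and $\sum_{L\ge2}\vec\zeta_L\mu=(\vec\sigma-\vec\zeta_1)\mu$, and then eliminating $\vec\zeta_1\mu$ via the $\ell=0$ equation should make the remaining transfer terms cancel and leave $\vec\sigma\mu=\lambda$. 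Hence $c=\lambda$, so $\sum_{\ell\ge1}\vec\zeta_\ell=\vec\sigma=\lambda\beta$, and the normalization gives $\zeta_0=1-\vec\sigma\1=1-\lambda\beta\1=1-\lambda$.

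The only genuinely structural ingredient is the one-dimensionality of the left null space of $S+\mu\alpha$, i.e.\ the footnoted uniqueness of its stationary vector $\beta$; the rest is bookkeeping with absolutely convergent series. I expect that bookkeeping to be the main place to be careful — checking that each reindexing and telescoping is applied to a convergent series, and resisting the shorter ``rate of jobs in $=$ rate of service completions out'' argument, which would give $\vec\sigma\mu=\lambda$ directly but formally requires $\sum_\ell \ell\, g_\ell<\infty$, a fact not available at this stage.
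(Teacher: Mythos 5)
Your proposal is correct, and its two stages split neatly into ``same as the paper'' and ``genuinely different.'' Stage one is exactly the paper's argument: the paper observes that $\sum_{\ell\ge1}\tfrac{d}{dt}\vec f_\ell(t)+\tfrac{d}{dt}f_0(t)\,\alpha=\bigl(\sum_{\ell\ge1}\vec f_\ell(t)\bigr)(S+\mu\alpha)$, which is precisely your computation of summing the $\ell\ge1$ equations and absorbing the scalar coefficient of $\alpha$ via the $\ell=0$ equation; both yield $\vec\sigma(S+\mu\alpha)=0$ and hence $\vec\sigma=c\beta$. Stage two is where you diverge: the paper identifies $c=\lambda$ by computing the first-moment drift $\sum_{\ell\ge1}\ell\,\tfrac{d}{dt}\vec f_\ell(t)\1$ and showing it equals $\lambda-\sum_{\ell\ge1}\vec f_\ell(t)\mu$, whereas you derive the per-level balance $\lambda g_{L-1}=\vec\zeta_L\mu+\zeta_0 r\,g_L$ for each $L\ge2$ by telescoping the tail $\ell\ge L$, and then sum over $L$. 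Your route buys something real: the paper's manipulation splits $\sum_\ell\ell\,\vec f_{\ell+1}\mu\alpha+\sum_\ell\ell\,\vec f_\ell S$ into pieces that are only individually meaningful when $\sum_\ell\ell\,\vec\zeta_\ell\1<\infty$, a finite-first-moment condition that is not part of the hypothesis (only $\sum_\ell\vec\zeta_\ell\1=1-\zeta_0<\infty$ is assumed). Your level-crossing identities involve only tail sums of a nonnegative convergent series, so every telescoping step is unconditionally justified; the final elimination of $\vec\zeta_1\mu$ via the $\ell=0$ equation, which I checked cancels the remaining $r\zeta_0(1-\zeta_0-\vec\zeta_1\1)$ terms and leaves $\vec\sigma\mu=\lambda$, goes through exactly as you describe. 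The price is a slightly longer bookkeeping argument; the paper's version is shorter and has the cleaner probabilistic reading ``rate in equals service-completion rate out,'' but, as you correctly flag, it is only formal without the extra moment assumption.
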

\begin{proof}
By noting that 
\[\sum_{\ell \geq 1} \frac{d}{dt}\vec f_{\ell}(t) + \frac{d}{dt}f_0(t) \alpha = \left(\sum_{\ell \geq 1} \vec f_{\ell}(t)\right) (S+\mu \alpha),\]
we find that $\left( \sum_{\ell \geq 1} \vec \zeta_\ell \right) (\mu \alpha + S) = 0$. 
Therefore $\sum_{\ell \geq 1} \vec \zeta_\ell$ is proportional to $\beta$, the unique stochastic vector
such that $\beta (\mu \alpha + S) = 0$. 
Furthermore, 
\begin{align*}
\sum_{\ell \geq 1} \ell  \frac{d}{dt}\vec f_{\ell}(t) &= \lambda \sum_{\ell \geq 1} \vec f_\ell(t) + \lambda f_0(t) \alpha + \sum_{\ell \geq 1}
\ell \vec f_{\ell+1}(t) \mu \alpha \\ & +\sum_{\ell \geq 1} \ell \vec f_{\ell}(t) S - r f_0(t) \sum_{\ell \geq 2} \vec f_{\ell}(t) 
\\& + r f_0(t) \left( 1-f_0(t) - \vec f_1(t)\right) \alpha,
\end{align*}
where 
\begin{align*}
\sum_{\ell \geq 1} \ell \vec f_{\ell+1}(t) \mu \alpha &+  \sum_{\ell \geq 1} \ell \vec f_{\ell}(t) S  = 
\vec f_1(t)S \\&+ \sum_{\ell \geq 2} \ell \vec f_{\ell}(t) (S+\mu \alpha) -\sum_{\ell \geq 2} \vec f_{\ell}(t) \mu \alpha.
\end{align*} 
As $\alpha \1 = 1$, $f_0(t)+\sum_{\ell \geq 1} \vec f_\ell(t) \1 = 1$ and $(S+\mu \alpha)\1 = 0$, we find
\begin{align*}
\sum_{\ell \geq 1} \ell \frac{d}{dt}\vec f_{\ell}(t) \1 &= \lambda -  \sum_{\ell \geq 1} \vec f_{\ell}(t) \mu.
\end{align*}
This implies that $\left( \sum_{\ell \geq 1} \vec \zeta_\ell \right) \mu = \lambda$,
which yields $\left( \sum_{\ell \geq 1} \vec \zeta_\ell \right) = \beta \lambda$ as 
$\beta \mu = 1$.
\end{proof}

\section{M/PH/1 queue with negative customers}\label{sec:MPH1}
We now introduce an M/PH/1 queueing system with negative customers and show that its steady state distribution corresponds to the unique fixed point
of the set of ODEs given by (\ref{eq:ODEl>0}-\ref{eq:ODEl=0}). The queueing system has the following characteristics:
\begin{enumerate}
\item Arrivals occur according to a Poisson process with rate $\lambda$ when the server is busy and at rate
$\lambda_0$ when the server is idle.
\item There is a single server, infinite waiting room and service times follow a phase-type distribution $(\alpha,S)$ with mean $1$.
Customers are served in FCFS order.
\item Negative arrivals occur at rate $(1-\lambda)r$ when the queue length {\it exceeds} one and reduce the queue length
by one (by removing a customer from the back of the queue).
\item The arrival rate $\lambda_0$ is such that the probability of having an idle queue is $1-\lambda$ and
thus depends on $\lambda, r$ and $(\alpha,S)$ only. 
\end{enumerate}

We start by defining a continuous time Markov chain
$(X_t(r),$ $Y_t(r))_{t \geq 0}$ where $X_t(r)$ denotes the number of jobs in the queue at time $t$ and $Y_t(r) \in {1,\ldots,n}$ is the server
phase at time $t$ provided that $X_t(r) > 0$. Define
\begin{align*}
A_{-1}(r) &= \mu \alpha + (1-\lambda) r I,\\ 
A_0(r) &= S - (\lambda + (1-\lambda) r) I,\\
A_1 &= \lambda I.
\end{align*}
Using these matrices we define the rate matrix $Q(r)$ of the Markov chain $(X_t(r),Y_t(r))_{t \geq 0}$  as 
\begin{align}\label{eq:Q}
Q(r)=\begin{bmatrix}
-\lambda_0(r) & \lambda_0(r) \alpha & & & \\
\mu & S-\lambda I  & A_1 & &\\
& A_{-1}(r) & A_0(r) & A_1 & \\
& & \ddots & \ddots & \ddots 
\end{bmatrix}.
\end{align}
Note the matrix $Q(r)$ is fully determined by $\lambda, r, \alpha$ and $S$, except for the rate $\lambda_0(r)$ to exit
level $0$.  We define $\lambda_0(r)$ further on.

Let $\pi_0(r) = \lim_{t \rightarrow \infty} P[X_t(r) = 0]$, $\pi_{\ell,j}(r) = \lim_{t \rightarrow \infty} P[X_t(r) = \ell, Y_t(r) = j]$ for $\ell > 0$ and
$\pi_\ell(r) = (\pi_{\ell,1}(r),\ldots,\pi_{\ell,n}(r))$. Finally, set $\pi_{k+}(r)=\sum_{\ell \geq k} \pi_\ell(r)$. 
Due to the Quasi-Birth-Death (QBD) structure \cite{neuts2} we have
\begin{align*}
\pi_1(r) &= \pi_0(r) R_1(r),\\
\pi_\ell(r) &= \pi_1(r) R(r)^{\ell-1},
\end{align*}
for $\ell > 1$, where the $n \times n$ matrix $R(r)$ is the smallest nonnegative solution to
\begin{align*}
A_1 + R(r)A_0(r) + R(r)^2 A_{-1}(r) = 0
\end{align*}
and
\begin{align}\label{eq:nonR1}
\lambda_0(r) \alpha + R_1(r) (S-\lambda I) + R_1(r) R(r) A_{-1}(r) = 0.
\end{align}
As $A_1 G(r) = R(r) A_{-1}(r)$ (see \cite{neuts2}), where $G(r)$ is the smallest nonnegative solution to $A_{-1}(r)+A_0(r)G(r)+A_1G(r)^2=0$, we find
\begin{align}\label{eq:R1}
R_1(r) = -\lambda_0(r) \alpha (S+\lambda(G(r)-I))^{-1}.
\end{align}
The matrix $G(r)$ is a stochastic matrix whenever the Markov chain characterized by $Q(r)$
is positive recurrent (which clearly holds for any $r \geq 0$ if $\lambda < 1$) \cite{neuts2}.
Therefore the matrix $G(r)-I$ is a generator matrix,
$S+\lambda(G(r)-I)$ is a subgenerator matrix and the latter is therefore invertible.
Note that $R(r)$ and $G(r)$ are independent of $\lambda_0(r)$.

If we observe the Markov chain only when $X_t(r) > 0$ and
only focus on $Y_t(r)$, we find that it evolves according to the generator
matrix $A_{-1}(r)+A_0(r)+A_1 = S + \mu \alpha$. Hence, the vector $\beta$ 
represents the stationary distribution of the service phase given that the
server is busy. As such we find 
\begin{align}\label{eq:sumpis}
\sum_{\ell > 0} \pi_\ell(r) = (1-\pi_0(r)) \beta.
\end{align}
 
\paragraph{Defining $\lambda_0(r)$} To define $\lambda_0(r)$ we demand that
$\pi_0(r) = 1- \lambda$ and that $\pi_0(r)+\sum_{\ell \geq 1} \pi_\ell(r) \1 = 1$. This implies
\begin{align*}
1 = \pi_0(r)+\sum_{\ell \geq 1} \pi_\ell(r) \1 = (1-\lambda) (1+R_1(r) (I-R(r))^{-1} \1). 
\end{align*}
Equation \eqref{eq:R1} yields
\begin{align}\label{eq:lam0}
\lambda_0(r) = \frac{\lambda}{(1-\lambda)\alpha (\lambda(I-G(r))-S)^{-1}(I-R(r))^{-1}\1}.
\end{align}
As $\lambda, r, \alpha$ and $S$ fully determine the matrices $R(r)$ and $G(r)$, they also
fully determine $\lambda_0(r)$ using the above equation. 

Although we have a definition for $\lambda_0(r)$, we now derive a second equivalent expression.
To do so, note that the rate at which the level (being $X_t(r)$) of the QBD goes up should be matched by the rate that the level goes down.
Hence, we have
\[ \pi_0(r) \lambda_0(r) + (1-\pi_0(r)) \lambda = \sum_{\ell > 0} \pi_\ell(r) \mu + \sum_{\ell > 1} \pi_\ell(r) \1 (1-\lambda)r. \]
If we demand that $\pi_0(r) = (1-\lambda)$, \eqref{eq:sumpis} and
the equality $\beta \mu = 1$ implies
\[ (1-\lambda)\lambda_0(r)  + \lambda^2 = \lambda + \pi_{2+}(r)\1 (1-\lambda)r, \]
which simplifies to 
\begin{align}\label{eq:lam0v2} 
\lambda_0(r)  = \lambda + r \pi_{2+}(r)\1. 
\end{align}
Using \eqref{eq:R1} and the equality $\pi_{2+}(r) \1= \lambda - \pi_1(r)\1$, this provides us with the following equivalent definition for $\lambda_0(r)$
\begin{align}\label{eq:lam0v3}
\lambda_0(r) = \frac{\lambda (1+r)}{1+(1-\lambda)r\alpha (\lambda(I-G(r))-S)^{-1}\1}.
\end{align}

\paragraph{Steady state probabilities}
Using  \eqref{eq:lam0} we see that the steady state probabilities $\pi_\ell(r)$ can be expressed as
\begin{align}\label{eq:steady}
\pi_\ell(r) = \lambda \frac{\alpha (\lambda(I-G(r))-S)^{-1}R(r)^{\ell-1}}{\alpha (\lambda(I-G(r))-S)^{-1}(I-R(r))^{-1}\1},
\end{align}
and $\pi_0(r) = 1-\lambda$.
\begin{theo}\label{thm:fixed}
The steady state probability vector given by \eqref{eq:steady} is 
the unique fixed point $\zeta$ of the set of ODEs given by (\ref{eq:ODEl>0}-\ref{eq:ODEl=0})
with $\zeta_0 + \sum_{\ell \geq 1} \vec \zeta_\ell \1 =1$.
\end{theo}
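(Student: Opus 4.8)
The plan is to verify that the vector $\zeta$ with $\zeta_0=\pi_0(r)=1-\lambda$ and $\vec\zeta_\ell=\pi_\ell(r)$ given by \eqref{eq:steady} is an equilibrium of the ODEs (\ref{eq:ODEl>0}-\ref{eq:ODEl=0}), and then to show it is the only equilibrium that is a probability vector, by reducing the fixed point equations to the global balance equations $\pi Q(r)=0$ of the QBD of Section~\ref{sec:MPH1}.

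For the first part I would set the left-hand sides of (\ref{eq:ODEl>0}-\ref{eq:ODEl=0}) to zero, substitute $f_0=1-\lambda$, and collect terms. With $A_1=\lambda I$, $A_0(r)=S-(\lambda+(1-\lambda)r)I$ and $A_{-1}(r)=\mu\alpha+(1-\lambda)rI$, the equilibrium condition at a level $\ell\geq 2$ becomes $\vec\zeta_{\ell-1}A_1+\vec\zeta_\ell A_0(r)+\vec\zeta_{\ell+1}A_{-1}(r)=0$, which is exactly the interior balance equation of $Q(r)$ and holds for the matrix-geometric sequence $\pi_\ell(r)=\pi_1(r)R(r)^{\ell-1}$. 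The $\ell=1$ condition rearranges to $\vec\zeta_1(S-\lambda I)+\vec\zeta_2 A_{-1}(r)+(1-\lambda)\bigl(\lambda+r(\lambda-\vec\zeta_1\1)\bigr)\alpha=0$; using $\pi_{2+}(r)\1=\lambda-\pi_1(r)\1$ (a consequence of \eqref{eq:sumpis}) together with \eqref{eq:lam0v2}, the scalar in brackets equals $\lambda_0(r)$, so this is the level-$1$ balance equation $\pi_0(r)\lambda_0(r)\alpha+\pi_1(r)(S-\lambda I)+\pi_2(r)A_{-1}(r)=0$. The $\ell=0$ condition collapses to $\vec\zeta_1\mu=(1-\lambda)\lambda_0(r)$, the level-$0$ balance equation. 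Since $\pi(r)Q(r)=0$ and $\pi(r)$ sums to one by the choice \eqref{eq:lam0} of $\lambda_0(r)$, all equilibrium conditions hold and $\zeta_0+\sum_{\ell\geq1}\vec\zeta_\ell\1=1$.

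For uniqueness I would take an arbitrary fixed point $\zeta$ that is a probability vector with $\zeta_0+\sum_{\ell\geq1}\vec\zeta_\ell\1=1$. Proposition~\ref{prop:zeta} gives $\zeta_0=1-\lambda$ and $\sum_{\ell\geq1}\vec\zeta_\ell=\lambda\beta$, hence $\sum_{\ell\geq2}\vec\zeta_\ell\1=\lambda-\vec\zeta_1\1\geq 0$. Substituting $\zeta_0=1-\lambda$ and this identity into the equilibrium conditions and running the same bookkeeping as above, one finds that $\zeta$ solves $\zeta\tilde Q=0$, where $\tilde Q$ is obtained from \eqref{eq:Q} by replacing $\lambda_0(r)$ with $\tilde\lambda_0:=\lambda+r(\lambda-\vec\zeta_1\1)\geq\lambda>0$. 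For $\lambda<1$ the chain $\tilde Q$ is irreducible and positive recurrent for every positive value of the exit rate, so it has a unique stationary distribution, which must be $\zeta$. But by \eqref{eq:R1} one has $\zeta_0=1/(1+\tilde\lambda_0\kappa)$ with $\kappa=\alpha(\lambda(I-G(r))-S)^{-1}(I-R(r))^{-1}\1>0$ (the denominator factor in \eqref{eq:lam0}) independent of $\tilde\lambda_0$, so $\zeta_0=1-\lambda$ forces $\tilde\lambda_0=\lambda/((1-\lambda)\kappa)=\lambda_0(r)$ by \eqref{eq:lam0}. Hence $\tilde Q=Q(r)$ and $\zeta$ equals the vector \eqref{eq:steady}.

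Most of the work is the routine level-by-level matching of the ODE equilibrium conditions with the balance equations of $Q(r)$. The one delicate point — where Proposition~\ref{prop:zeta} and the two expressions \eqref{eq:lam0} and \eqref{eq:lam0v2} for $\lambda_0(r)$ do the real work — is the $\ell=1$ (and $\ell=0$) equation: the job-transfer terms of the ODE only collapse into $\pi_0(r)\lambda_0(r)\alpha$ once one knows both $\zeta_0=1-\lambda$ and $\vec\zeta_1\1=\pi_1(r)\1$. In the uniqueness argument this reappears as the effective level-$0$ exit rate $\tilde\lambda_0$ depending a priori on the unknown fixed point through $\vec\zeta_1\1$, so it has to be pinned down afterwards from the identity $\zeta_0=1-\lambda$ of Proposition~\ref{prop:zeta}.
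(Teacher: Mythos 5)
Your proposal is correct and follows essentially the same route as the paper: both reduce the ODE fixed-point equations, via Proposition~\ref{prop:zeta}, to the level-by-level balance equations of the QBD of Section~\ref{sec:MPH1} and then invoke uniqueness of the stationary distribution. The only (minor) difference is at the boundary: the paper combines the $\ell=0$ and $\ell=1$ equations into the first balance equation of the chain censored on the busy states, so that $\lambda_0(r)$ never appears, whereas you keep the full chain with an effective exit rate $\tilde\lambda_0=\lambda+r(\lambda-\vec\zeta_1\1)$ and pin it down afterwards from $\zeta_0=1-\lambda$.
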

\begin{proof}
We show that any fixed point $\zeta= (\zeta_0,\vec \zeta_1, \vec \zeta_2, \ldots)$ of 
(\ref{eq:ODEl>0}-\ref{eq:ODEl=0}) satisfies $\zeta Q(r) = 0$. The result then follows from
the uniqueness of the stationary distribution of the Markov chain.

Due to Proposition \ref{prop:zeta} we have $\zeta_0 = 1-\lambda$ and it suffices to show that
$(\vec \zeta_1, \vec \zeta_2, \ldots)$ is an invariant vector of
\begin{align}\label{eq:Qp}
Q_{busy}(r)=\begin{bmatrix}
S-\lambda I+\mu \alpha  & A_1 & \\
 A_{-1}(r) & A_0(r) & \ddots  \\
 & \ddots & \ddots  
\end{bmatrix},
\end{align}
being the rate matrix of the chain censored on the states where the server is busy.

Due to Proposition \ref{prop:zeta}, Equation \eqref{eq:ODEl>0} with $\ell > 1$, is equivalent to
\[ 0 = \lambda \vec \zeta_{\ell-1} + \vec \zeta_\ell(S-\lambda I - (1-\lambda)r I) + \vec \zeta_{\ell+1}(\mu \alpha + (1-\lambda)r I),\]
which means $0 = \vec \zeta_{\ell-1}A_1  + \vec \zeta_\ell A_0(r)  + \vec \zeta_{\ell+1}A_{-1}(r)$.
Similarly, Equation  \eqref{eq:ODEl>0} with $\ell = 1$ and Equation  \eqref{eq:ODEl=0} can be written as
\begin{align*} 
0 &= \zeta_0 (\lambda + r(1-\zeta_0-\vec \zeta_1 \1))\alpha + \vec \zeta_1 (S-\lambda I) + \vec \zeta_2 A_{-1}(r),\\
0 &= -\zeta_0 (\lambda + r(1-\zeta_0-\vec \zeta_1 \1)) + \vec \zeta_1 \mu.
\end{align*}
When combined this yields $\vec \zeta_1 (S-\lambda I+\mu \alpha) + \vec \zeta_2 A_{-1}(r) = 0$.
This is the first balance equation of $(\vec \zeta_1, \vec \zeta_2, \ldots) Q_{busy}(r) = 0$.
\end{proof}

\subsection{Special cases}

In general there does not appear to exist an explicit formula for $\pi_\ell(r)$ as $G(r)$ and $R(r)$ are the solutions
to quadratic matrix equations. Note that these matrices and thus $\pi_\ell(r)$ can be computed in a fraction of a second
using matrix analytic methods (see \cite{bini3}). In this subsection we consider two special cases for which we can obtain an explicit result for
$G(r)$.  

\paragraph{Exponential job durations} In case of exponential job lengths with mean $1$, we clearly have $G(r) =1$. Therefore 
$R(r) = \lambda/(1+(1-\lambda)r)$ as $\lambda G(r) = R(r)A_{-1}(r)$ and $\pi_\ell(r)$  simplifies to
\begin{align*}
\pi_\ell(r) = \lambda \left(\frac{\lambda}{1+(1-\lambda)r}\right)^{\ell-1}\left/ \left(1-\frac{\lambda}{1+(1-\lambda)r}\right)^{-1} \right.,
\end{align*}
hence $\sum_{\ell \geq i} \pi_\ell(r)  = \lambda^i /(1+(1-\lambda)r)^{i-1}$, which is the unique fixed point of the
set of ODEs for the exponential job durations derived in \cite{minnebo2}.
 
\paragraph{Hyperexponential job durations} In case of hyperexponential job sizes with $2$ phases,
that is, when $S=diag(-\mu_1,-\mu_2)$ and $\alpha = (p, 1-p)$, we can write $G(r)$ as 
\[G(r) = \begin{bmatrix} 1-g_{1,2}(r) & g_{1,2}(r) \\
g_{2,1}(r) & 1-g_{2,1}(r)
\end{bmatrix},
\]
and show that $g_{2,1}(r)$ is a root of the perturbed polynomial 
\[ax^3+(b+(\mu_2-\mu_1)(1-\lambda)r)x^2+(c-p \mu_2(1-\lambda)r)x+d,\]
with
\begin{align*}
a&=\lambda(\mu_2-\mu_1), \\
b&=\lambda(\mu_1-\mu_2)p - (\mu_1-\mu_2) \mu_2 -\lambda \mu_2,\\
c&=(\mu_1-\mu_2) \mu_2 p + \lambda \mu_2 p - \mu_2^2 p, \\
d& = \mu_2^2 p^2 ,
\end{align*}
where $p$ is a root of $ax^3+bx^2+cx+d$.
The probability $g_{1,2}(r)$ is also a root of a degree $3$ polynomial where the
coefficients are the same as for $g_{2,1}(r)$ if we replace $p$ by $1-p$ and exchange $\mu_1$ and $\mu_2$.
Thus, it is possible to express the entries of $G(r)$ explicitly in terms of $\mu_1, \mu_2, p$ and $\lambda$, but
the expressions look very involved.

\begin{table}[t!]
\center
\begin{tabular}{c c c c c c}
 \hline
$\lambda$ & $SCV$ & $r$ & ODE & simul $\pm$ conf & rel. error \\
\hline 
1/2 & 1/5 & 1/5 & 1.5276  &  1.5279$\pm$0.0021 &   0.0002  \\
 &  & 1 &       1.3644  &  1.3640$\pm$0.0021 &   0.0003  \\
 &  & 5 &     1.1514  &  1.1516$\pm$0.0018 &   0.0002  \\
\hline
1/2 & 5 & 1/5 &      3.2285  &  3.2312$\pm$0.0078 &   0.0008  \\
 &  & 1 &      1.8847  &  1.8844$\pm$0.0036 &   0.0002  \\
 &  & 5 &      1.1795  &  1.1782$\pm$0.0022 &   0.0011  \\
\hline
1/2 & 25 & 1/5   &     9.9397  &  9.9612$\pm$0.0438 &   0.0022  \\
 &  & 1  &    2.8406  &  2.8487$\pm$0.0159 &   0.0028  \\
 &  & 5  &     1.1855  &  1.1854$\pm$0.0028 &   0.0001 \\
\hline \hline
3/4 & 1/5 & 1/5  &2.5451  &  2.5461$\pm$0.0022   & 0.0004  \\
 &  & 1 &   2.0148  &  2.0163$\pm$0.0026  &  0.0007  \\
  & & 5 &   1.4142  &  1.4137$\pm$0.0015  &  0.0004  \\
\hline
3/4  & 5 & 1/5 &   8.1885  &  8.1850$\pm$0.0263  &  0.0004  \\
  &  & 1 &   4.6246  &  4.6311$\pm$0.0127  &  0.0014 \\
  &  & 5 &   1.6517  &  1.6594$\pm$0.0028  &  0.0047 \\
\hline
3/4  & 25 & 1/5 &  31.5323  & 31.6262$\pm$0.2779  &  0.0030  \\
  &  & 1 &  14.7129  & 14.7340$\pm$0.0944  &  0.0014  \\
  &  & 5 &   1.8058  &  1.8283$\pm$0.0091  &  0.0123  \\
\hline \hline
7/8 & 1/5 & 1/5 & 4.5552  &  4.5487$\pm$0.0121   & 0.0014  \\
 & & 1 &   3.2503   & 3.2483$\pm$0.0058  &  0.0006  \\
  & & 5 &  1.8774  &  1.8788$\pm$0.0024  &  0.0007  \\
\hline 
7/8 & 5 & 1/5 &   18.1843  & 18.2002$\pm$0.1004 &   0.0004  \\
& & 1 &   10.5504  & 10.5452$\pm$0.0694   & 0.0005  \\
& & 5 &    3.1684 &   3.1927$\pm$0.0132 &   0.0076  \\
\hline
7/8 & 25 & 1/5 &   74.8468 &  74.0279$\pm$0.8398 &   0.0111  \\
& & 1 &   40.5751  & 40.5135$\pm$0.4818 &   0.0015  \\
& & 5 &    6.9646  &  7.1031$\pm$0.0787 &   0.0195  \\
\hline
\end{tabular}
\caption{Mean job response time: ODE model versus simulation for $N=1000$ servers.}
\label{tab:validate}  
\end{table}

\section{Model validation}\label{sec:validate}

In this section we validate the ODE model by considering both distributions with a squared coefficient of variation (SCV) smaller and larger than one.
For the case with $SCV > 1$, we make use of the class of hyperexponential distributions with $2$ phases with parameters 
$(\alpha_1,\mu_1,\mu_2)$. This means that with probability $\alpha_i$
a job is a type-$i$ job and has an exponential duration with mean $1/\mu_i$, for $i=1,2$ (where $\alpha_2 = 1 - \alpha_1)$. 
Apart from matching the mean (to $1$) and the SCV we match the fraction $f$ of the workload that is contributed by the type-$1$ jobs 
(i.e., $f = \alpha_1/\mu_1$). In case $\mu_1 \gg \mu_2$ this can be interpreted as stating that a fraction $f$ of the workload is contributed by the {\it short} jobs.
The mean (being $1$), $SCV$ and fraction $f$ can be matched as follows:
\begin{align}\label{eq:HyperExp}
 \mu_{1} &= \frac{SCV+(4f-1)+\sqrt{(SCV-1)(SCV-1+8f\bar f)}}{2f(SCV+1)},\\
 \mu_{2} &= \frac{SCV+(4\bar f-1)-\sqrt{(SCV-1)(SCV-1+8f\bar f)}}{2\bar f(SCV+1)}, 
\end{align}
with $\bar f = 1-f$  and $\alpha_{1} = \mu_1 f$. Table \ref{tab:HEXP} lists the parameter settings
for the distributions considered in the plots.
For the case with $SCV \leq 1$, we consider the class of hypoexponential distributions (i.e., distributions that are the sum of $k \geq 1$
independent exponential random variables) and limit ourselves for the most part to Erlang distributions.  

In Table \ref{tab:validate}  we validate the ODE model by comparing the mean response time obtained from
the unique invariant distribution of the M/PH/1 queue with negative arrivals with simulation
experiments for $N=1000$ servers. The simulation started from an empty system and  the
system was simulated up to time $t=5000$ (for $SCV=1/5$ and $5$) or $t=25000$ (for $SCV = 25$)
with a warm-up period of $33\%$. The $95\%$ confidence intervals were computed based on $20$
runs. We observe that the relative error of the mean field model tends to increase with $\lambda$ and the SCV,
but remains below $2\%$ for the $27$ (arbitrary) cases considered and is even below $0.1\%$ in quite a few cases.

Let $\pi_\ell^{(N)}(r)$ be the steady state probability that an arbitrary server contains exactly $\ell$ jobs in a system
consisting of $N$ servers with probe rate $r$. We believe that the probabilities $\pi_\ell^{(N)}(r)$ converge to $\pi_\ell(r)\1$ (defined by
\eqref{eq:steady}) as $N$ tends to infinity. This result was established in \cite{minnebo2} for exponential job sizes.
A formal proof of this statement for phase-type distributed job sizes can be obtained by checking the conditions of Theorem 3.2 of 
\cite{gast2017expected}. While some of these conditions are not hard to verify, it is 
unclear at this stage how to prove that conditions (A2) and (A3) hold.
The following less general result follows from \cite{vanhoudt_global_PACM}: 
\begin{theo}
If we truncate all queues to length $B$ and assume the job sizes are hyperexponential,
then $\pi_\ell^{(N,B)}(r)$, which is the steady state probability that an arbitrary server has $\ell$ jobs in a system
consisting of $N$ servers with queues of length $B$ and probe rate $r$, converges to $\pi^{(B)}_\ell(r)\1$, which
is the steady state probability of the finite state QBD obtained by truncating $Q(r)$ such that $X_t(r) \leq B$.
\end{theo}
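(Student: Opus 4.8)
The plan is to realize the truncated $N$-server system (whether it implements stealing or sharing) as a density-dependent population process on a \emph{compact} state space, to identify its mean field limit with the truncation of the ODEs \eqref{eq:ODEl>0}--\eqref{eq:ODEl=0}, and then to transfer a \emph{global attraction} property of that limiting ODE into convergence of the stationary distributions. Concretely, the state of a single server is $0$ when idle and $(\ell,j)$ with $1 \le \ell \le B$ and $1 \le j \le n$ when busy; since $S$ is diagonal (hyperexponential) the index $j$ is the type chosen once at the start of service and never changes. The occupancy measure of the $N$ servers lives in a finite subset of the simplex over this finite per-server state space, so for every $N$ and $B$ the system is an irreducible finite CTMC with a unique stationary distribution $\nu^{(N,B)}(r)$, whose marginal on the queue length of a tagged server is $\pi^{(N,B)}_\ell(r)$. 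The transition rates are density dependent, and the only $N$-dependent subtleties — probing a uniform server among the other $N-1$ rather than against the exact current fraction, and the different bookkeeping of stealing versus sharing — perturb the (scaled) drift by $O(1/N)$, which is precisely why a single limiting model serves both strategies.

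First I would invoke Kurtz's fluid-limit theorem for density-dependent jump Markov processes to show that, on every finite horizon, the occupancy measure converges in probability as $N\to\infty$ to the solution of the finite system of ODEs obtained from \eqref{eq:ODEl>0}--\eqref{eq:ODEl=0} by capping $\ell$ at $B$: drop the $f_{B+1,j}$ terms, suppress the arrival transitions out of level $B$ (blocking), and restrict the transfer terms accordingly. Call this truncated ODE system, with drift $F^{(B)}$. Next I would prove, by the two-step argument already used for Proposition~\ref{prop:zeta} and Theorem~\ref{thm:fixed} but carried out on the finite level set $\{0,\dots,B\}$ (and accounting for the blocking at level $B$ both in the $\sum_\ell \ell\,\tfrac{d}{dt}\vec f_\ell$ identity and in the level-crossing balance), that $F^{(B)}$ has a \emph{unique} fixed point and that this fixed point is exactly $\zeta^{(B)}$, the stationary distribution of the finite QBD whose generator is $Q(r)$ from \eqref{eq:Q} restricted to levels $0,\dots,B$; its queue-length marginal is then $\pi^{(B)}_\ell(r)\1$ by construction.

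The decisive step is global attraction of $\zeta^{(B)}$ for the truncated ODE, i.e.\ that every trajectory of $\dot f = F^{(B)}(f)$ started in the simplex converges to $\zeta^{(B)}$. This is where hyperexponential job sizes and the finiteness of $B$ are used: a diagonal $S$ means there are no phase transitions during service, so the phase dynamics decompose in the way required by the monotonicity/coupling arguments of \cite{vanhoudt_global_PACM}, while $B<\infty$ makes the state space compact. I would recast $F^{(B)}$ into the form treated in \cite{vanhoudt_global_PACM} and verify its hypotheses (the sign/monotonicity conditions on the drift plus an accessibility condition), obtaining global attraction of $\zeta^{(B)}$. Finally, the standard transfer principle for mean field models on compact spaces — global asymptotic stability of the unique fixed point of the limiting ODE implies weak convergence of the stationary distributions $\nu^{(N,B)}(r)$ to the Dirac mass at $\zeta^{(B)}$, e.g.\ via the framework underlying \cite{gast2017expected}, whose conditions (A2)--(A3) become checkable once the queues are truncated — yields $\pi^{(N,B)}_\ell(r) \to \pi^{(B)}_\ell(r)\1$ for every $\ell \le B$, for both the stealing and the sharing system.

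I expect the main obstacle to be the verification of the structural conditions of \cite{vanhoudt_global_PACM} for $F^{(B)}$: the job-transfer terms couple all queue lengths through the idle fraction $f_0$, and the boundary level $B$ breaks the clean cascade structure, so checking that the monotonicity/sign pattern demanded by the global-attraction argument survives both the transfer terms and the truncation is where the real work lies. A secondary, more routine, point is re-deriving the identification $\zeta^{(B)} = $ (stationary vector of the truncated $Q(r)$) in the presence of arrival blocking at level $B$. The transient convergence and the transfer-to-stationarity steps are, by contrast, standard once compactness is in hand.
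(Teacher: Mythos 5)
Your plan is essentially the paper's own argument: the paper's proof is a one-line appeal to Corollary 1 and the Section 7.2 discussion of \cite{vanhoudt_global_PACM}, which encapsulates exactly the pipeline you spell out (transient Kurtz-type limit of the truncated occupancy process, identification of the unique fixed point with the stationary vector of the truncated QBD, global attraction exploiting the diagonal $S$ of hyperexponential sizes and the compactness provided by the buffer $B$, and the standard transfer to stationary distributions). So the proposal is correct and takes the same route, just with the steps made explicit rather than delegated to the cited reference.
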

\begin{proof}
By means of Corollary 1 and the discussion in Section 7.2 of \cite{vanhoudt_global_PACM}, one can show 
that the sum $\sum_{\ell=k}^B \pi_\ell^{(N,B)}(r)$ converges to $\sum_{\ell=k}^B \pi^{(B)}_\ell(r)\1$
for any $k=1,\ldots,B$. 
\end{proof}
Thus convergence of the stationary regime of the sequence of Markov chains holds if we truncate the queues and limit
ourselves to hyperexponential job sizes. Truncating the queues mainly avoids a number of technical issues
(discussed in \cite{vanhoudt_global_PACM} after Corollary 1) and is a common practice in proving mean field convergence,
e.g., \cite{gast2010mean,ganesh2010}. From a practical point of view, there is also no difference between having a huge finite 
buffer or an infinite buffer (as long as the infinite system is stable). Relaxing the job size distribution to any phase-type distribution appears far less obvious. Although Corollary 1 in \cite{vanhoudt_global_PACM} allows us to consider 
a slightly broader class of jobs sizes than hyperexponential sizes, it is not possible to prove convergence 
in the same manner for any phase-type job size distribution.


\section{Stealing versus Sharing}\label{sec:versus}

Our main objective is to compare the performance, i.e., mean response time of a job, under
stealing and sharing when the rate $r$ is set such that a predefined overall probe
rate $r_{overall}$ is matched by both strategies. Note that when both strategies rely on the
same $r$ value, we obtain the same queue length distribution and therefore the same mean response time.

For stealing it is clear that $r_{overall} = r(1-\lambda)$ as the idle servers probe at rate $r$.
Hence, there is a unique $r_{steal}$ that matches the predefined $r_{overall}$, that is, $r_{steal} = r_{overall}/(1-\lambda)$.
For work sharing the queues with pending jobs send probes at rate $r$, meaning 
$r_{overall}  = r \pi_{2+}(r)\1$ holds. Proposition \ref{prop:probe_rate_push} (illustrated in Figure \ref{fig:probe_rate}
and proven further on) implies that $r_{overall}  = r \pi_{2+}(r)\1$ has a unique solution $r_{share}$ for $r_{overall} \in [0,\lambda^2/(1-\lambda))$. 
For $r_{overall} \geq  \lambda^2/(1-\lambda)$, we have $r \pi_{2+}(r)\1 \leq r_{overall}$ for any $r \geq 0$ and 
the rate $r$ can thus be chosen arbitrarily large without exceeding $r_{overall}$. 

\begin{figure}[t]
\center
\includegraphics[width=0.48\textwidth]{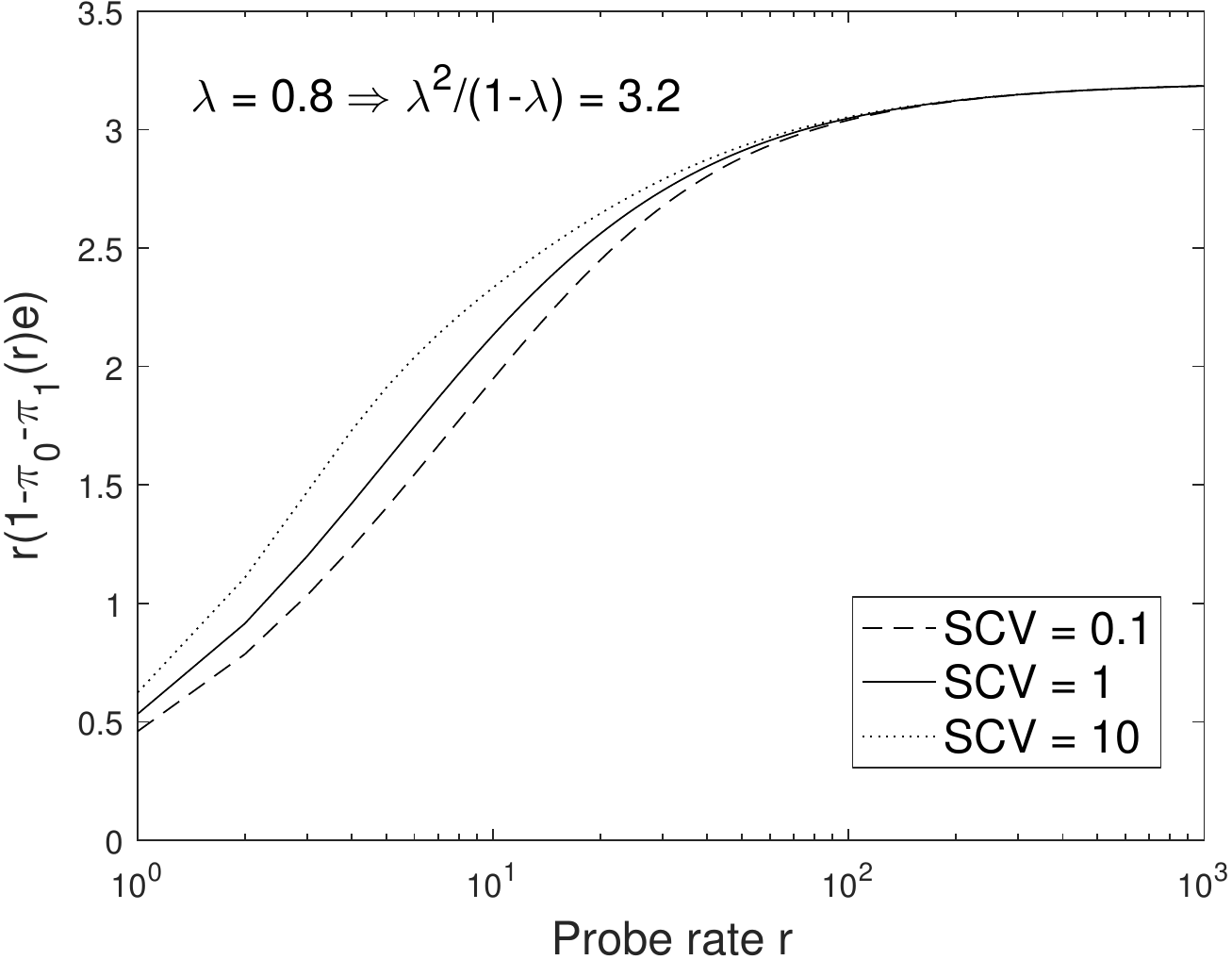}
\caption{Overall probe rate $r_{overall}$ as a function of probe rate $r$ for work sharing.}
\label{fig:probe_rate}
\end{figure}

In order to prove Proposition \ref{prop:probe_rate_push}, we first show that the mean response time decreases as 
the probe rate $r$ increases (via Little's law), as expected. 

\begin{prop}\label{prop:mresp}
The vector $\pi_{k+}(r)$, for $k \geq 2$, is decreasing in $r$ and so is the mean queue length.
\end{prop}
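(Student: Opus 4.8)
The plan is to collapse the statement to a single monotonicity property of the rate matrix $R(r)$, and then establish that property by a coupling of the underlying QBD dynamics; the mean-queue-length assertion and, via Little's law, the mean-response-time assertion then come for free.

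\emph{Step 1 (reduction).} Combining the matrix-geometric relation $\pi_\ell(r)=\pi_1(r)R(r)^{\ell-1}$ with $\sum_{\ell\geq 1}\pi_\ell(r)=\lambda\beta$ from Proposition \ref{prop:zeta} gives $\pi_1(r)(I-R(r))^{-1}=\lambda\beta$, hence $\pi_1(r)=\lambda\beta(I-R(r))$, and therefore — since $R(r)$ commutes with $I-R(r)$ —
\[
\pi_{k+}(r)=\pi_1(r)R(r)^{k-1}(I-R(r))^{-1}=\lambda\beta R(r)^{k-1},\qquad k\geq 1,
\]
so the mean queue length equals $\sum_{k\geq 1}\pi_{k+}(r)\1=\lambda\beta(I-R(r))^{-1}\1$. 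Since $\beta\geq 0$ and $R(r)\geq 0$, it suffices to prove that $R(r)$ is entrywise non-increasing in $r$: then $R(r)^{k-1}$ and $(I-R(r))^{-1}=\sum_{m\geq 0}R(r)^m$ are entrywise non-increasing, which yields the vector statement for every $k\geq 2$ and the monotonicity of the mean queue length at once, and the mean response time inherits the monotonicity by Little's law (the per-server throughput being $\lambda$ irrespective of $r$).

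\emph{Step 2 (coupling).} To show $R(r)$ is entrywise non-increasing I would use that $R(r)_{ij}$ equals $\lambda$ times the expected sojourn time in state $(n+1,j)$ before first reaching level $n$, started from $(n+1,i)$, for the free QBD with blocks $A_{-1}(r),A_0(r),A_1$ on levels $\{n,n+1,\dots\}$. For $r<r'$, run both QBDs from $(n+1,i)$ on a common probability space: phase transitions and service completions (which do not involve the probe rate) and upward jumps (rate $\lambda$) use shared Poisson clocks, and the negative-arrival clock of the $r$-system is a thinning of that of the $r'$-system. An induction over jumps shows the level difference $X_t(r)-X_t(r')$ stays non-negative and non-decreasing and that the two phases agree whenever that difference is zero; in particular the $r'$-excursion is absorbed at level $n$ no later than the $r$-excursion.

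\emph{Step 3 (the main obstacle).} The delicate point is turning pathwise level-domination into the entrywise bound $R(r')\leq R(r)$: a smaller level does not by itself bound the time spent at the \emph{specific} level $n+1$ in the \emph{specific} phase $j$, since while the gap is positive the $r'$-system may sit at level $n+1$ while the $r$-system is above it, and conversely it may already have been absorbed while the $r$-system still lingers at $n+1$. I see two routes. One refines the coupling after the gap opens so that the faster system is run as a time-changed copy of the slower one whose occupation measure of $\{(n+1,1),\dots,(n+1,n)\}$ is genuinely dominated. The cleaner one — which I would try first — passes through $G(r)$: from $A_1+R(r)A_0(r)+R(r)^2A_{-1}(r)=0$ and $\lambda G(r)=R(r)A_{-1}(r)$ one obtains $R(r)=\lambda\,W(r)^{-1}$ with $W(r)=(1-\lambda)rI+\lambda(I-G(r))-S$, a $Z$-matrix with nonnegative inverse; by the standard monotonicity $A\leq B\Rightarrow A^{-1}\geq B^{-1}$ for such matrices it then suffices that $W(r)$ be entrywise non-decreasing in $r$, i.e.\ that $G(r)_{ij}$ be non-increasing in $r$ for $i\neq j$ and that $(1-\lambda)r-\lambda G(r)_{ii}$ be non-decreasing in $r$. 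The off-diagonal monotonicity of $G(r)$ admits the coupling argument of Step 2 applied to first passages (a larger probe rate makes the phase-preserving negative arrival more likely to be the jump that first lowers the level), and the diagonal inequality I would read off from the quadratic equation defining $G(r)$. Carrying out these two monotonicity verifications for $G(r)$ — in particular handling the competing effects of the growing term $(1-\lambda)rI$ and of $G(r)$ drifting toward the identity — is where I expect the real work to lie; everything else is bookkeeping.
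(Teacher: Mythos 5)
Your Step 1 reduction is clean and the identity $\pi_{k+}(r)=\lambda\beta R(r)^{k-1}$ is correct, but it reduces the proposition to a strictly \emph{stronger} claim (entrywise monotonicity of $R(r)$ in $r$), and the proof of that claim is exactly what is missing. Step 2 only delivers pathwise level domination, and you yourself observe in Step 3 that this does not control the occupation time of the specific states $(n+1,j)$, so it does not bound $R(r')\le R(r)$. The fallback via $W(r)=(1-\lambda)rI+\lambda(I-G(r))-S$ is a valid M-matrix argument \emph{if} $W(r)$ is entrywise non-decreasing, but that requires (i) $G(r)_{ij}$ non-increasing for $i\ne j$ and (ii) $\lambda\,\frac{d}{dr}G(r)_{ii}\le 1-\lambda$, neither of which is established; (ii) in particular is delicate for $\lambda$ close to $1$, and it is not even clear that these hold in general (nor that $R(r)$ is entrywise monotone at all — $\frac{d}{dr}W\ge 0$ is sufficient but not necessary for $\frac{d}{dr}R\le 0$). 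Since you explicitly defer ``the real work'' to these verifications, the argument as written has a genuine gap.

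The gap can be closed by redirecting your Step 2 coupling from the first-passage excursions to the \emph{stationary censored chain}, which is what the paper does. Censor $(X_t(r),Y_t(r))$ on the busy states; then both coupled copies are busy at all times, and the phase process is autonomous with generator $S+\mu\alpha$ independently of the level and of $r$ (negative arrivals remove the customer at the back of the queue and never touch the phase in service). Hence for $r_1<r_2$ one can force $\hat Y_t(r_1)=\hat Y_t(r_2)$ identically while thinning the negative-arrival clock exactly as you propose, giving $\hat X_t(r_1)\ge\hat X_t(r_2)$ for all $t$. This yields
\[
P[\hat X_t(r_1)\ge k,\ \hat Y_t(r_1)=j]\ \ge\ P[\hat X_t(r_2)\ge k,\ \hat Y_t(r_2)=j]
\]
directly, i.e.\ the componentwise ordering of the tail vectors $\pi_{k+}(r)=\lambda\,\hat\pi_{k+}(r)$, which is all the proposition asserts. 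Working with tail events $\{\hat X_t\ge k\}$ instead of the single level $n+1$ is precisely what makes the level-domination argument sufficient and lets you bypass $R(r)$ and $G(r)$ entirely.
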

\begin{proof}
Consider the Markov chain $(\hat X_t(r),\hat Y_t(r))_{t \geq 0}$ defined by censoring the chain $(X_t(r),Y_t(r))_{t \geq 0}$ on
the states $(\ell,j)$ with $\ell > 0$ (i.e., when the queue is busy). Denote $\hat \pi_\ell(r)$ as its steady state probabilities. Clearly, 
$\pi_\ell(r) = \lambda \hat \pi_\ell(r)$ and it suffices to show that $\sum_{\ell \geq k} \hat \pi_\ell(r)$
decreases in $r$. 
Assume $r_1 < r_2$ and couple the Markov chains $(\hat X_t(r_1),\hat Y_t(r_1))_{t \geq 0}$ and $(\hat X_t(r_2),\hat Y_t(r_2))_{t \geq 0}$
as follows. As both servers are busy at all times we can couple the service such that $\hat Y_t(r_1) = \hat Y_t(r_2)$ and the service
completions in both chains occur at the same time. Similarly we can couple the rate $\lambda$ arrivals such that they occur simultaneously.
Whenever $\hat X_t(r_2), \hat X_t(r_1) > 1$ and $\hat X_t(r_2)$ decreases by one due to a job transfer 
(which happens at rate $(1-\lambda)r_2$), we decrease $\hat X_t(r_1)$ with probability $r_1/r_2$. 
If $\hat X_t(r_2)=1$ and $\hat X_t(r_1) > 1$, we decrease $\hat X_t(r_1) > 1$ by one at rate $(1-\lambda)r_1$ due to the job transfers.  
Therefore at any time $t$ we have $\hat Y_t(r_1) = \hat Y_t(r_2)$ and
$\hat X_t(r_1) \geq \hat X_t(r_2)$ as required.
The result for the mean queue length follows immediately as $\sum_{\ell \geq 1} \ell \pi_\ell(r) \1 = 
\sum_{k \geq 1} \pi_{k+}(r) \1$. 
\end{proof}

\begin{prop}\label{prop:probe_rate_push}
$r \pi_{2+}(r) \1$ increases as a function of $r$ and its limit as $r$ tends to infinity is $\lambda^2/(1-\lambda)$.
\end{prop}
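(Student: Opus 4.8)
The plan is to reduce the claim to the monotonicity already proved in Proposition~\ref{prop:mresp}, via an elementary level-crossing identity for the QBD with rate matrix $Q(r)$. First I would balance, for each cut $k\ge 1$, the rate of upward transitions across the boundary between levels $\le k$ and levels $\ge k+1$ against the rate of downward transitions. An upward crossing can only occur from level $k$, and happens at rate $\lambda$ (the arrival rate in a busy state); a downward crossing can only occur from level $k+1$, at rate $\mu_j$ (service completion) plus $(1-\lambda)r$ (a negative customer, which is active since $k+1\ge 2$). Hence, for every $k\ge 1$,
\[
\lambda\,\pi_k(r)\1 \;=\; \pi_{k+1}(r)\mu \;+\; (1-\lambda)\,r\,\pi_{k+1}(r)\1 .
\]
Summing over $k\ge 1$ and using $\sum_{\ell\ge 1}\pi_\ell(r)\1 = 1-\pi_0(r) = \lambda$ gives the key identity
\[
(1-\lambda)\,r\,\pi_{2+}(r)\1 \;=\; \lambda^2 - \pi_{2+}(r)\mu .
\]
(Equivalently, this can be read off from \eqref{eq:lam0v2}, since $r\,\pi_{2+}(r)\1 = \lambda_0(r)-\lambda$.)

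The monotonicity then follows at once: the coupling in the proof of Proposition~\ref{prop:mresp} shows that $\pi_{2+}(r)$ decreases componentwise in $r$, and since $\mu\ge 0$ the scalar $\pi_{2+}(r)\mu$ also decreases in $r$; therefore the right-hand side above, and hence $r\,\pi_{2+}(r)\1$, increases in $r$. For the limit I would bound the tail using the same balance equations: dropping the non-negative term $\pi_{k+1}(r)\mu$ gives $\pi_{k+1}(r)\1 \le \rho(r)\,\pi_k(r)\1$ for $k\ge 1$, where $\rho(r)=\lambda/((1-\lambda)r)$, so $\pi_{k+1}(r)\1 \le \rho(r)^k\,\pi_1(r)\1 \le \lambda\,\rho(r)^k$ and $\pi_{2+}(r)\1 \le \lambda\,\rho(r)/(1-\rho(r)) \to 0$ as $r\to\infty$. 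Consequently $\pi_{2+}(r)\mu \le (\max_j \mu_j)\,\pi_{2+}(r)\1 \to 0$, and substituting into the key identity yields $r\,\pi_{2+}(r)\1 \to \lambda^2/(1-\lambda)$.

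I do not expect a genuine obstacle; the only point requiring care is the bookkeeping of the level-crossing balance — in particular, that at level $1$ a service completion sends the chain down to level $0$ (which crosses the cut at $k=0$, not any cut $k\ge 1$) and that no negative customer is active at level $1$, so that the clean form of the balance equation holds for every $k\ge 1$. Beyond that, the argument only uses the QBD structure of $Q(r)$, the normalization $\pi_0(r)=1-\lambda$, and Proposition~\ref{prop:mresp}.
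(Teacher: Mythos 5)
Your proof is correct, but it takes a genuinely different route from the paper's on both halves of the statement. For monotonicity, the paper couples the two censored chains and compares the cumulative number of transfers $\hat Z_t(r_1)+\hat X_t(r_1)$ versus $\hat Z_t(r_2)+\hat X_t(r_2)$ pathwise; you instead work in stationarity, deriving the cut identity
\begin{align*}
(1-\lambda)\,r\,\pi_{2+}(r)\1 \;=\; \lambda^2 - \pi_{2+}(r)\mu
\end{align*}
from the level-crossing balance of $Q(r)$ (your bookkeeping at level $1$ is the right thing to check, and it is correct: no negative arrivals act there and its completions cross the $k=0$ cut only), and then invoke the componentwise decrease of $\pi_{2+}(r)$, which the coupling in Proposition~\ref{prop:mresp} indeed delivers since the phases are matched. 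For the limit, the paper argues via matrix-analytic asymptotics ($G(r)\to I$, hence $rR(r)\to \lambda I/(1-\lambda)$), whereas you get $\pi_{2+}(r)\1\to 0$ from a geometric tail bound read off the same cut equations and then let the identity do the work. Your route is more elementary (it avoids the $G(r)$, $R(r)$ asymptotics entirely), makes the limiting value $\lambda^2/(1-\lambda)$ transparent, and even yields an explicit error term $\pi_{2+}(r)\mu/(1-\lambda)$; the paper's coupling, on the other hand, is purely pathwise and does not require the stationary identity or the normalization $\pi_0(r)=1-\lambda$ beyond what Proposition~\ref{prop:zeta} already gives. Two minor remarks: your parenthetical that the identity ``can be read off from \eqref{eq:lam0v2}'' is not immediate---you additionally need the level-$0$ balance $(1-\lambda)\lambda_0(r)=\pi_1(r)\mu$ and $\pi_{2+}(r)\mu=\lambda-\pi_1(r)\mu$---but your independent derivation via cuts makes this harmless; and, like the paper's argument, yours establishes monotonicity in the weak sense, which is all the paper itself proves.
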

\begin{proof}
Let $r_1 < r_2$ and consider the same coupled processes $(\hat X_t(r_1),\hat Y_t(r_1))_{t \geq 0}$ 
and $(\hat X_t(r_2),\hat Y_t(r_2))_{t \geq 0}$  defined in the proof of Proposition \ref{prop:mresp}. 
Let $\hat  Z_t(r_i)$ be the number of jobs transferred by process $(\hat X_t(r_i),\hat Y_t(r_i))_{t \geq 0}$ up to time $t$,
for $i=1,2$. We now claim that $\hat  Z_t(r_2) \geq \hat  Z_t(r_1)$ to prove the first part of the proposition
(as the stationary probe rate is equal to the stationary job transfer rate divided by $1-\lambda$).
We do this by arguing that $\hat Z_t(r_2)+\hat X_t(r_2) \geq \hat Z_t(r_1)+\hat X_t(r_1)$, which suffices 
as $\hat X_t(r_1) \geq \hat X_t(r_2)$.
Whenever $\hat X_t(r_2) > 1$ the sum $\hat Z_t(r_i)+\hat X_t(r_i)$, for $i=1,2$, increases (arrival) or decreases (job completion)
in the same manner for both processes (when a job transfer takes place the sum remains the same).  
If $\hat X_t(r_1) > 1$ and $\hat X_t(r_2) = 1$, the sum $\hat Z_t(r_1)+\hat X_t(r_1)$ again evolves as
$\hat Z_t(r_2)+\hat X_t(r_2)$, except that at service completions $\hat Z_t(r_1)+\hat X_t(r_1)$ decreases by one contrary to
$\hat Z_t(r_2)+\hat X_t(r_2)$ (which remains fixed as $\hat X_t(r_2)$ cannot decrease).

We now consider the limit of $r \pi_{2+}(r)\1$ as $r$ tends to infinity.
Assume $(X_0(r),Y_0(r))=(n+1,i)$ and let $T_n = \inf_t \{X_t(r) = n\}$, for $n > 1$, then
entry $(i,j)$ of the matrix $G(r)$ represents the probability that 
$Y_{T_n}(r)=j$ \cite{latouche1}. Therefore the limit as $r$ tends to infinity of
$G(r)$ is the identity matrix. Further as stated before, $\lambda G(r) = R(r) ((1-\lambda)r I + \mu \alpha)$.
As $R(r) \geq 0$, we have $\lim_{r \rightarrow \infty} R(r) = 0$,
$\lim_{r \rightarrow \infty} rR(r) = \lambda I/(1-\lambda) $ and
$\lim_{r \rightarrow \infty} r \sum_{\ell \geq 1} R(r)^\ell = \lambda I/(1-\lambda)$.
Hence, $ \lim_{r \rightarrow \infty} r \pi_{2+}(r)\1$ $= 
\lim_{r \rightarrow \infty}\pi_1(r) r \sum_{\ell \geq 1} R(r)^\ell \1 =
\lambda^2/(1-\lambda)$ as $\lim_{r \rightarrow \infty} \pi_1(r)\1 = \lambda$. 
\end{proof}

We can avoid computing the matching $r$ for work sharing when determining whether stealing or sharing performs 
best given the parameters $r_{overall}, \lambda, \alpha$ and $S$ using the next theorem.
\begin{theo}\label{th:comp}
Given $(\alpha,S)$, $\lambda$ and $r_{overall} > 0$, work sharing achieves a lower mean response time
than stealing if and only if
\begin{align}\label{eq:test}
1 - \lambda > \lambda - \pi_1(r_{overall}/(1-\lambda))\1 = \pi_{2+}(r_{overall}/(1-\lambda))\1.
\end{align}
\end{theo}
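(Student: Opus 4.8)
The plan is to reduce the comparison of response times to a comparison of the two matching probe rates, and then to the single algebraic inequality \eqref{eq:test}, using only Propositions \ref{prop:mresp} and \ref{prop:probe_rate_push}.

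\textbf{Step 1: reduce to comparing probe rates.} By Little's law, comparing mean response times amounts to comparing mean queue lengths, and the mean queue length of the fixed point equals $\sum_{k\ge 1}\pi_{k+}(r)\1 = \lambda + \sum_{k\ge 2}\pi_{k+}(r)\1$, since $\pi_0(r)=1-\lambda$ forces $\pi_{1+}(r)\1=\lambda$. By Proposition \ref{prop:mresp} each $\pi_{k+}(r)\1$ with $k\ge 2$ is (strictly) decreasing in $r$, hence the mean queue length is strictly decreasing in $r$. Therefore, among stealing and sharing, the one permitted to use the larger probe rate $r$ while still matching the prescribed $r_{overall}$ is the one with the smaller mean response time. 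For stealing the matching rate is the explicit value $r_{steal}=r_{overall}/(1-\lambda)$; for sharing it is the value $r_{share}$ characterised by $r\,\pi_{2+}(r)\1=r_{overall}$, which exists and is unique when $r_{overall}<\lambda^2/(1-\lambda)$ by Proposition \ref{prop:probe_rate_push}. So in that regime sharing beats stealing iff $r_{share}>r_{steal}$.

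\textbf{Step 2: convert $r_{share}>r_{steal}$ into \eqref{eq:test}.} Write $g(r):=r\,\pi_{2+}(r)\1$, which is strictly increasing in $r$ with $\lim_{r\to\infty}g(r)=\lambda^2/(1-\lambda)$ (Proposition \ref{prop:probe_rate_push}). Since $g(r_{share})=r_{overall}$, monotonicity gives $r_{share}>r_{steal}$ iff $g(r_{steal})<r_{overall}$. Substituting $r_{steal}=r_{overall}/(1-\lambda)$, this reads $\frac{r_{overall}}{1-\lambda}\,\pi_{2+}(r_{steal})\1 < r_{overall}$, i.e. $\pi_{2+}(r_{steal})\1 < 1-\lambda$; using $\pi_0(r)+\pi_1(r)\1+\pi_{2+}(r)\1=1$ and $\pi_0(r)=1-\lambda$ we have $\pi_{2+}(r)\1=\lambda-\pi_1(r)\1$, which is precisely \eqref{eq:test}.

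\textbf{Step 3: the boundary case $r_{overall}\ge\lambda^2/(1-\lambda)$.} Here sharing may take $r$ arbitrarily large without ever exceeding the probe budget, so its mean queue length can be pushed strictly below that of stealing, which is stuck at the finite rate $r_{steal}$; thus sharing strictly wins. This must be consistent with \eqref{eq:test}: since $g$ is increasing with supremum $\lambda^2/(1-\lambda)$, we get $g(r_{steal})<\lambda^2/(1-\lambda)\le r_{overall}$, hence $\pi_{2+}(r_{steal})\1 < r_{overall}/r_{steal}=1-\lambda$, so \eqref{eq:test} indeed holds in this case too, correctly declaring sharing best. I expect no deep obstacle; the proof is essentially an assembly of the two preceding propositions. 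The points requiring the most care are making the equivalence a genuine ``if and only if'' — i.e. invoking strict (not merely weak) monotonicity of both $g$ and the mean queue length — and handling the $r_{overall}\ge\lambda^2/(1-\lambda)$ regime separately, where $r_{share}$ is not defined by the equation $g(r)=r_{overall}$.
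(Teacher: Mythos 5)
Your proposal is correct and follows essentially the same route as the paper's proof: reduce the comparison to $r_{share}$ versus $r_{steal}$ via the monotonicity of $r\,\pi_{2+}(r)\1$ (Proposition \ref{prop:probe_rate_push}) and of the mean queue length in $r$ (Proposition \ref{prop:mresp}), and observe that $r\pi_{2+}(r)\1$ evaluated at $r_{steal}=r_{overall}/(1-\lambda)$ falls below $r_{overall}$ exactly when \eqref{eq:test} holds. Your explicit treatment of the regime $r_{overall}\ge\lambda^2/(1-\lambda)$ and of the strictness needed for the ``if and only if'' is a modest refinement of the paper's argument, not a different approach.
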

\begin{proof}
The probe rate $r$  that matches $r_{overall}$ for stealing is
$r_{steal} =r_{overall}/(1-\lambda)$. If work sharing uses the same $r$ its overall
probe rate, denoted as $r_{overall,share}$, would be
\[ r_{steal} \pi_{2+}(r_{steal})\1 = \frac{r_{overall}}{1-\lambda}\pi_{2+}(r_{overall}/(1-\lambda))\1.\] 
If and only if $r_{overall,share} < r_{overall}$, work sharing is using fewer probe messages
and due to Proposition \ref{prop:probe_rate_push} the unique solution $r_{share}$ to 
the equation $r_{overall} = r \pi_{2+}(r)\1$ is larger than $r_{steal}$.
If $r_{share} > r_{steal}$, then Proposition \ref{prop:mresp} implies that the mean queue length and
thus the mean response time of work sharing is less than the mean response time of work stealing. 
\end{proof}

In other words given $r_{overall}, \lambda, \alpha$ and $S$, we can determine whether stealing or sharing performs
best (in terms of the mean response time) by solving a single QBD-type Markov chain, which can be done 
numerically in a fraction of a second.

The result of Theorem \ref{th:comp} can also be intuitively understood using the following argument. Provided that
both strategies are using the same overall probe rate $r_{overall}$, the lowest mean
response time is achieved by the strategy for which a probe message has the highest probability of
resulting in a job transfer. For work sharing this probability is equal to $1-\lambda$,
the probability that a server is idle, while for work stealing it is given by the probability 
$\pi_{2+}(r_{overall}/(1-\lambda))\1$ that there are two or more jobs in a server (which depends on the
arrival rate, the job size distribution and the overall prove rate $r_{overall}$).


In the remainder of this section we establish two results:
\begin{enumerate}
\item Given $(\alpha,S)$ and $\lambda$ there exists a $r^*_{overall}$ such that
work sharing is best if and only if $r_{overall} > r^*_{overall}$.
\item Given $(\alpha,S)$ and $r_{overall}$ there exists a $\lambda^*$ such that
work sharing is best if and only if $\lambda < \lambda^*$.
\end{enumerate}
Moreover $r^*_{overall}$ and $\lambda^*$ can be easily computed using \eqref{eq:test} in combination
with a simple bisection algorithm.

\paragraph{Remark} By noting that 
\[\pi_1(r_{overall}/(1-\lambda)) = \lambda \left(1-\frac{\lambda}{1+r_{overall}}\right),\]
in case of exponential job sizes, one finds that \eqref{eq:test} simplifies to \eqref{eq:testexp}.

\begin{corol}\label{cor:r_overall}
Given  $(\alpha,S)$ and $\lambda > 0$, there exists an $r^*_{overall} \in [0,\lambda^2/(1-\lambda))$ such that 
work sharing achieves a lower mean response time than stealing if and only if
$r_{overall} > r^*_{overall}$.
\end{corol}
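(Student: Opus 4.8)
The plan is to reduce the corollary, via Theorem~\ref{th:comp}, to a monotonicity-plus-continuity statement about the scalar function $g(\rho) := \pi_{2+}(\rho/(1-\lambda))\1$ of the overall probe rate $\rho := r_{overall}$. By Theorem~\ref{th:comp}, work sharing achieves a strictly lower mean response time than stealing exactly when $g(\rho) < 1-\lambda$, so it suffices to show that the set $U := \{\rho > 0 : g(\rho) < 1-\lambda\}$ equals $(r^*_{overall},\infty)$ for some $r^*_{overall} \in [0,\lambda^2/(1-\lambda))$, and then set $r^*_{overall} := \inf U$.

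First I would record two properties of $g$. (i) $g$ is non-increasing on $(0,\infty)$: the map $\rho\mapsto\rho/(1-\lambda)$ is increasing and, by Proposition~\ref{prop:mresp}, $r\mapsto\pi_{2+}(r)\1$ is non-increasing. (ii) $g$ is continuous on $(0,\infty)$: for $r>0$ the matrices $G(r)$ and $R(r)$ depend continuously on $r$, since they are the minimal nonnegative solutions of the matrix-quadratic equations whose coefficient matrices $A_{-1}(r),A_0(r),A_1$ are affine in $r$ and the chain stays positive recurrent for every $r\ge0$ because $\lambda<1$; then $\pi_{2+}(r)=\pi_1(r)R(r)(I-R(r))^{-1}$ is continuous by \eqref{eq:steady}, using that $\lambda(I-G(r))-S$ is a subgenerator (hence invertible), that $I-R(r)$ is invertible, and that the denominator in \eqref{eq:steady} equals $\lambda/((1-\lambda)\lambda_0(r))>0$ by \eqref{eq:lam0}. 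Given (i), $U$ is an up-set; given (ii), $U$ is relatively open in $(0,\infty)$; hence $U=(r^*_{overall},\infty)$ with $r^*_{overall}=\inf U\in[0,\infty]$. It is nonempty because Proposition~\ref{prop:probe_rate_push} gives $r\pi_{2+}(r)\1\to\lambda^2/(1-\lambda)$, so $\pi_{2+}(r)\1\to0$ and $g(\rho)\to0<1-\lambda$ as $\rho\to\infty$, whence $r^*_{overall}<\infty$. Finally, to get $r^*_{overall}<\lambda^2/(1-\lambda)$ I would use that, by Proposition~\ref{prop:probe_rate_push}, $r\pi_{2+}(r)\1$ increases to the limit $\lambda^2/(1-\lambda)$, so $r\pi_{2+}(r)\1<\lambda^2/(1-\lambda)$ for every finite $r$; taking $r=\lambda^2/(1-\lambda)^2$ yields $\pi_{2+}(\lambda^2/(1-\lambda)^2)\1<1-\lambda$, i.e.\ $g(\lambda^2/(1-\lambda))<1-\lambda$, so $\lambda^2/(1-\lambda)\in U=(r^*_{overall},\infty)$.

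The only genuinely delicate ingredient is the continuity of $r\mapsto(G(r),R(r))$. I would either cite the standard fact from matrix-analytic theory that the minimal nonnegative solution of a QBD varies continuously (indeed analytically) with its parameters as long as positive recurrence is preserved, or, for a self-contained argument, invoke the monotone functional iteration $G^{(k+1)}(r)=\big((\lambda+(1-\lambda)r)I-S\big)^{-1}\big(\mu\alpha+(1-\lambda)rI+\lambda\,G^{(k)}(r)^2\big)$, which is continuous in $r$ for each $k$, converges to $G(r)$, and can be shown to converge locally uniformly in $r$ (via a locally uniform geometric rate), so that the limit $G(r)$ is continuous; continuity of $R(r)$ then follows from $R(r)=\lambda\,G(r)\big((1-\lambda)rI+\mu\alpha\big)^{-1}$. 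Everything else is monotonicity, continuity and an elementary order argument supplied by the propositions already proved. I would also add the remark that, when $r^*_{overall}>0$, continuity of $g$ forces $g(r^*_{overall})=1-\lambda$, so the two strategies perform identically exactly at $r_{overall}=r^*_{overall}$.
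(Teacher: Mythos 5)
Your proof is correct and follows essentially the same route as the paper's: reduce via Theorem~\ref{th:comp} to the scalar condition $\pi_{2+}(r_{overall}/(1-\lambda))\1 < 1-\lambda$, then use the monotonicity from Proposition~\ref{prop:mresp} together with the limit $r\,\pi_{2+}(r)\1 \uparrow \lambda^2/(1-\lambda)$ from Proposition~\ref{prop:probe_rate_push} to obtain the threshold and the bound $r^*_{overall} < \lambda^2/(1-\lambda)$. Your only deviations are minor refinements: you make the continuity of $\pi_{2+}(\cdot)\1$ explicit (the paper uses it implicitly for the existence of the crossing point and only proves continuity later, in Section~\ref{sec:small}, via the implicit function theorem rather than your fixed-point-iteration sketch), and you derive the strict upper bound from $r\,\pi_{2+}(r)\1 < \lambda^2/(1-\lambda)$ evaluated at $r=\lambda^2/(1-\lambda)^2$, whereas the paper argues that for $r_{overall}\ge\lambda^2/(1-\lambda)$ sharing can take $r$ arbitrarily large and push its mean response time arbitrarily close to one.
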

\begin{proof}
Due to Proposition \ref{prop:mresp} we have that  $\pi_{2+}(r)\1$ is decreasing in $r$. 
If $1-\lambda > \pi_{2+}(0)\1 $, \eqref{eq:test} implies that $r^*_{overall}=0$. 
Otherwise there exists a unique $r>0$ such that $1-\lambda = \pi_{2+}(r)\1$,
as $\pi_{2+}(r)\1$ tends to zero as $r$ tends to infinity.
Finally, when $r_{overall} \geq \lambda^2/(1-\lambda)$ work sharing can pick $r$ arbitrarily large
without exceeding $r_{overall}$, meaning the mean response time can be made arbitrarily close to one. 
\end{proof}

\begin{prop}\label{prop:inclambda}
The scalar $\pi_{k+}(r_{overall}/(1-\lambda))\1$, for $k\geq 1$ and $r_{overall}$ fixed, is increasing in $\lambda$.
\end{prop}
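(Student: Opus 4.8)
The plan is to reuse the stochastic coupling idea of the proofs of Propositions~\ref{prop:mresp} and~\ref{prop:probe_rate_push}, but now comparing two systems with \emph{different} arrival rates while holding the negative-arrival rate fixed. The key preliminary observation is that, writing $r=r_{overall}/(1-\lambda)$, the negative-arrival rate of the M/PH/1 queue with negative customers equals $(1-\lambda)r=r_{overall}$, which does not depend on $\lambda$. Moreover, since $\pi_0(r)=1-\lambda$, the busy (censored) chain used in the proof of Proposition~\ref{prop:mresp} has stationary law $\hat\pi_\ell=\pi_\ell/\lambda$, hence $\pi_{k+}(r_{overall}/(1-\lambda))\1=\lambda\sum_{\ell\ge k}\hat\pi_\ell\1$. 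It therefore suffices to prove that $\sum_{\ell\ge k}\hat\pi_\ell\1$ is non-decreasing in $\lambda$, since multiplying by the strictly increasing factor $\lambda$ then yields the claim. For $k=1$ this sum equals $1$, so the statement is immediate and reduces to $\pi_{1+}\1=\lambda$; the content lies in the case $k\ge2$.

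To prove monotonicity of $\sum_{\ell\ge k}\hat\pi_\ell\1$, fix $\lambda_1<\lambda_2<1$ and the common value $r_{overall}$, set $r_i=r_{overall}/(1-\lambda_i)$, and couple the two censored busy chains $(\hat X_t(r_i),\hat Y_t(r_i))$ started from the same state. As in Proposition~\ref{prop:mresp} I would couple the phase processes so that $\hat Y_t(r_1)=\hat Y_t(r_2)$ at all times and service completions occur simultaneously (the rate is $\mu_j$ in the common phase $j$, and a completion at queue length $1$ restarts the phase with a fresh draw from $\alpha$ in both chains, so phase coupling survives the boundary). The negative arrivals of both chains are driven by a single Poisson process of rate $r_{overall}$, with chain $i$ performing a transfer only if its own queue length exceeds $1$. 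Finally the positive arrivals are coupled by giving both chains a common Poisson stream of rate $\lambda_1$ plus an extra independent stream of rate $\lambda_2-\lambda_1$ for chain $2$ only.

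The remaining step is a short case check showing that the ordering $\hat X_t(r_2)\ge\hat X_t(r_1)$ is preserved: common arrivals raise both chains, the extra arrivals raise only $\hat X_t(r_2)$, and a coupled service completion or transfer either lowers both chains by one or---when $\hat X_t(r_1)=1$ so that chain $1$ cannot move---lowers only $\hat X_t(r_2)$; but in that last situation one necessarily has $\hat X_t(r_2)\ge2>1=\hat X_t(r_1)$ beforehand, so after the decrement still $\hat X_t(r_2)\ge\hat X_t(r_1)$. Hence the ordering never breaks, so $\sum_{\ell\ge k}\hat\pi_\ell(r_2)\1\ge\sum_{\ell\ge k}\hat\pi_\ell(r_1)\1$ for every $k\ge1$, and the proposition follows. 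The one delicate point---the part I would write out in full---is exactly this boundary behaviour of the censored chain at queue length $1$: one must make sure that neither a service completion nor a transfer acting on chain $2$ while chain $1$ rests at length $1$ can drop $\hat X_t(r_2)$ below $\hat X_t(r_1)$, and that the simultaneous phase restarts keep $\hat Y_t(r_1)=\hat Y_t(r_2)$ intact through such events.
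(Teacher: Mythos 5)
Your proposal is correct, and it reaches the result by a genuinely different route for the key step. You and the paper share the same two structural observations: with $r=r_{overall}/(1-\lambda)$ the negative-arrival rate $(1-\lambda)r=r_{overall}$ is independent of $\lambda$, and censoring on the busy states gives $\pi_\ell=\lambda\hat\pi_\ell$, so the prefactor $\lambda$ supplies strict growth once the censored tails are shown to be non-decreasing. Where you diverge is in how that tail ordering is established: the paper censors, uniformizes, and then invokes an external stochastic comparison theorem (Theorem~1 of the cited work of Busi\'{c} et al.), observing that raising $\lambda$ amounts to replacing some transitions of the uniformized chain by transitions to higher states; you instead build an explicit pathwise coupling of the two censored CTMCs, in the spirit of the paper's own Propositions~\ref{prop:mresp} and~\ref{prop:probe_rate_push}, driving both chains by common Poisson streams (arrivals of rate $\lambda_1$ plus an extra $\lambda_2-\lambda_1$ stream for the faster chain, transfers of the common rate $r_{overall}$, simultaneous phase evolution and completions with coupled fresh draws from $\alpha$). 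Your boundary analysis at level $1$ is the right thing to check and it does go through: the only asymmetric decrements occur when the lower chain sits at level $1$, in which case the upper chain is at level at least $2$ and lands at level at least $1$, and both the censored restart and the next-job start redraw the phase from $\alpha$, so the phase identification survives. What the paper's route buys is brevity and no need to verify coupling details; what yours buys is a self-contained, elementary argument that does not rely on the cited comparison theorem and fits naturally alongside the couplings already used for monotonicity in $r$.
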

\begin{proof}
We make use of the following result \cite[Theorem 1]{busicMOR}.
Consider two stable discrete time Markov chains $\{\tilde X^{(1)}_n\}$ and $\{\tilde X^{(2)}_n\}$ on the state space $S=\{0,1,2,\ldots\}$ with 
steady state probabilities $\pi^{(1)}_i$ and $\pi^{(2)}_i$ for $i \in S$. Assume $\{\tilde X^{(2)}_n\}$ is obtained from $\{\tilde X^{(1)}_n\}$
by replacing some transitions from state $i$ to state $j$, by a transition from state $i$ to $j'$ with $j \leq j'$, then
$\sum_{\ell \geq k} \pi_\ell^{(1)} \leq \sum_{\ell \geq k} \pi_\ell^{(2)}$ for any $k$.

Let $\{X^{(1)}_t\}$ be the QBD Markov chain with arrival rate $\lambda^{(1)}$ and probe rate $r^{(1)} = r_{overall}/(1-\lambda^{(1)})$ and
 $\{X^{(2)}_t\}$ be the QBD Markov chain with arrival rate $\lambda^{(2)} > \lambda^{(1)}$ and probe rate $r^{(2)} = r_{overall}/(1-\lambda^{(2)})$. 
Denote their steady state probability vectors as $\pi^{(1)}_\ell(r^{(1)})$ and $\pi^{(2)}_\ell(r^{(2)})$, respectively.
Note that for $i=1,2$ we have $A_1^{(i)} = \lambda^{(i)} I$, $A_0^{(i)} = S-(\lambda^{(i)}+r_{overall})I$, $A_{-1}^{(i)}=\mu \alpha + r_{overall}I$.

Let $\{\hat X^{(i)}_t\}$ be the continuous time QBD Markov chain obtained from $\{X^{(i)}_t\}$ by 
censoring out the idle periods and let $\{\tilde X^{(i)}_n\}$ be the discrete time Markov chain 
obtained from $\{\hat X^{(i)}_t\}$ by uniformization. Denote their respective steady state probabilities
as  $\hat \pi^{(i)}_\ell(r^{(i)})$ and $\tilde \pi^{(i)}_\ell(r^{(i)})$. 
Due to \cite[Theorem 1]{busicMOR} we now have  
$\sum_{\ell \geq k} \tilde \pi_\ell^{(1)}(r^{(1)}) \leq \sum_{\ell \geq k} \tilde \pi_\ell^{(2)}(r^{(2)})$ for any $k \geq 1$. 
Further, $ \pi_\ell^{(i)}(r^{(i)}) = \lambda^{(i)} \hat \pi_\ell^{(i)}(r^{(i)}) = \lambda^{(i)} \tilde \pi_\ell^{(i)}(r^{(i)})$ for
$\ell \geq 1$ by construction (as the rate of exiting level $0$ is exactly such that
$\lambda^{(i)}$ is the probability that the server is busy). This yields 
$\sum_{\ell \geq k} \pi_\ell^{(1)}(r^{(1)}) = \sum_{\ell \geq k} \lambda^{(1)} \tilde \pi_\ell^{(1)}(r^{(1)})
< \sum_{\ell \geq k} \lambda^{(2)} \tilde \pi_\ell^{(1)}(r^{(1)}) \leq \sum_{\ell \geq k} \lambda^{(2)} \tilde \pi_\ell^{(2)}(r^{(2)})
= \sum_{\ell \geq k} \pi_\ell^{(2)}(r^{(2)})$ for $k \geq 1$ as required.
\end{proof}

\begin{theo}\label{th:lambda}
Given  $(\alpha,S)$ and $r_{overall} > 0$, there exists a $\lambda^* \in (0,1)$ such that work sharing 
is best if and only if $\lambda < \lambda^*$. 
\end{theo}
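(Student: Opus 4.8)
The plan is to combine Theorem~\ref{th:comp} with the monotonicity established in Proposition~\ref{prop:inclambda}, together with the two limiting regimes $\lambda \to 0$ and $\lambda \to 1$. By Theorem~\ref{th:comp}, work sharing beats stealing if and only if $g(\lambda) := \pi_{2+}(r_{overall}/(1-\lambda))\1 < 1-\lambda$, so the statement amounts to showing that the set $\{\lambda \in (0,1) : g(\lambda) < 1-\lambda\}$ is exactly an interval of the form $(0,\lambda^*)$. Define $D(\lambda) := (1-\lambda) - g(\lambda)$; I want to show $D$ changes sign exactly once on $(0,1)$, being positive near $0$ and negative near $1$.

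First I would argue $D(\lambda) > 0$ for $\lambda$ small. As $\lambda \to 0$ the QBD becomes almost empty: $\pi_{2+}(r)\1 \le \pi_1(r)\1 \cdot \|R(r)\| / (1-\|R(r)\|)$ and in fact $\pi_{2+}(r)\1 = \lambda - \pi_1(r)\1 = O(\lambda^2)$ uniformly in $r$ (one can see this from $\sum_{\ell\ge 1}\ell\,\pi_\ell(r)\1 \le$ the mean queue length of the same system with $r=0$, an M/PH/1 queue, whose mean queue length is $O(\lambda)$, so $\pi_{2+}(r)\1 = \sum_{\ell\ge 1}\ell\pi_\ell(r)\1 - \pi_{1+}(r)\1 = O(\lambda^2)$). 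Hence $g(\lambda) = O(\lambda^2)$ while $1-\lambda \to 1$, so $D(\lambda) > 0$ for $\lambda$ near $0$. Next, $D(\lambda) < 0$ for $\lambda$ near $1$: here $1-\lambda \to 0$ while $g(\lambda) = \pi_{2+}(r_{overall}/(1-\lambda))\1$ stays bounded away from $0$. Indeed by Proposition~\ref{prop:probe_rate_push} the job-transfer rate $(1-\lambda)r\,\pi_{2+}(r)\1 \le \lambda^2$, and more simply $\pi_{2+}(r)\1 = \lambda - \pi_1(r)\1 \ge \lambda - \lambda = $ — better: use that $r\,\pi_{2+}(r)\1 \le \lambda^2/(1-\lambda)$ gives $\pi_{2+}(r_{overall}/(1-\lambda))\1 \le \lambda^2/r_{overall}$, which is not yet a lower bound; instead observe that $\pi_{2+}(r)\1 \ge \pi_{1+}(r)\1 - \pi_1(r)\1$ and $\pi_1(r)\1 \le \pi_0(r)\lambda_0(r)/(\text{service rate bound})$, so $\pi_1(r)\1$ is bounded by a constant times $(1-\lambda)$; since $\pi_{1+}(r)\1 = \lambda$, we get $g(\lambda) = \lambda - \pi_1(r_{overall}/(1-\lambda))\1 \ge \lambda - c(1-\lambda) \to 1$ as $\lambda \to 1$, while $1-\lambda \to 0$, giving $D(\lambda) < 0$. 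By continuity of $D$ (continuity of $\pi_{2+}(r)\1$ in $\lambda$ follows from the analytic dependence of $R(r)$ and $G(r)$ on the parameters) there is at least one zero.

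For uniqueness I would use Proposition~\ref{prop:inclambda}: $g(\lambda) = \pi_{2+}(r_{overall}/(1-\lambda))\1$ is nondecreasing in $\lambda$, while $1-\lambda$ is strictly decreasing; hence $D(\lambda) = (1-\lambda) - g(\lambda)$ is strictly decreasing in $\lambda$. A strictly decreasing continuous function that is positive near $0$ and negative near $1$ has exactly one zero $\lambda^* \in (0,1)$, and $D(\lambda) > 0 \iff \lambda < \lambda^*$, which by Theorem~\ref{th:comp} is precisely the statement that work sharing is best iff $\lambda < \lambda^*$.

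The main obstacle is making the two boundary estimates fully rigorous, especially the lower bound on $g(\lambda)$ as $\lambda\to1$: one must show $\pi_1(r)\1$ does not blow up relative to $1-\lambda$ when $r = r_{overall}/(1-\lambda) \to \infty$ simultaneously. A clean way is to bound $\pi_1(r)\1 = \pi_0(r)\lambda_0(r)\alpha(\lambda(I-G(r))-S)^{-1}\1 / (\cdots)$ using \eqref{eq:lam0v2}, $\lambda_0(r) = \lambda + r\pi_{2+}(r)\1 \le \lambda + \lambda^2/(1-\lambda)$, together with $\pi_0(r) = 1-\lambda$ and the fact that $\|(\lambda(I-G(r))-S)^{-1}\|$ stays bounded (since $\lambda(I-G(r))-S$ has spectral gap bounded below); this yields $\pi_1(r)\1 = O((1-\lambda)\cdot(1/(1-\lambda))) = O(1)$ but we need it $o(1)$ or at least $\limsup_{\lambda\to1}\pi_1(r)\1 < 1$. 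In fact $\pi_1(r)\1 \le \lambda$ trivially and one checks $\limsup_{\lambda\to1}\pi_1(r_{overall}/(1-\lambda))\1 \le$ something $< 1$ because the rate of leaving level $1$ is at least $\mu_{\min} + (1-\lambda)r = \mu_{\min} + r_{overall}$ while the rate of entering level $1$ from below is $\lambda_0(r) \le \lambda(1+r)/1 $ which is $O(1/(1-\lambda))$ — so this needs care, and the honest route is to invoke continuity/monotonicity of $D$ on the closed interval by extending to $\lambda=1$ via the known behavior of the QBD, or alternatively to appeal directly to the explicit bound $\lambda^* \ge 1/2$ proved elsewhere in the paper. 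I would present the argument via $D$ strictly decreasing plus the two sign observations, deferring the delicate $\lambda\to1$ estimate to a short lemma.
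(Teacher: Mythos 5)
Your overall route is the same as the paper's: reduce via Theorem~\ref{th:comp} to the sign of $D(\lambda)=(1-\lambda)-\pi_{2+}(r_{overall}/(1-\lambda))\1$ and use the monotonicity of Proposition~\ref{prop:inclambda} to get a single crossing. The paper's proof is exactly this one-liner (monotone in $\lambda$, hence a unique solution of $1-\lambda=\pi_{2+}(r_{overall}/(1-\lambda))\1$, then invoke \eqref{eq:test}); it does not spell out the boundary behaviour that you rightly identify as needed for existence of a crossing in $(0,1)$. Your $\lambda\to 0$ side is fine, and can be done even more simply: $\pi_{2+}(r)\1\le\pi_{1+}(r)\1=\lambda$, so $D(\lambda)\ge 1-2\lambda>0$ for $\lambda<1/2$.

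The genuine problem is your $\lambda\to 1$ estimate. The claimed bound $\pi_1(r)\1\le\pi_0(r)\lambda_0(r)/(\cdot)=O(1-\lambda)$ is false: by \eqref{eq:lam0v2}, $\pi_0(r)\lambda_0(r)=(1-\lambda)\lambda+r_{overall}\,\pi_{2+}(r)\1$ (recall $(1-\lambda)r=r_{overall}$ is held fixed), which stays of order one as $\lambda\to1$, and indeed $\pi_1(r_{overall}/(1-\lambda))\1$ tends to a positive constant rather than to $0$; consequently your conclusion ``$g(\lambda)\to 1$'' is also wrong in general, even though the weaker statement you actually need ($g$ bounded away from $0$) is true. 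The clean repair uses the very monotonicity you already invoke: $\pi_{2+}(r)\1>0$ for every $\lambda\in(0,1)$ and finite $r$, so fixing any $\lambda_1\in(0,1)$, Proposition~\ref{prop:inclambda} gives $g(\lambda)\ge g(\lambda_1)>0$ for all $\lambda\ge\lambda_1$, while $1-\lambda\to0$; hence $D(\lambda)<0$ for $\lambda$ close to $1$, no estimate on $\pi_1$ or on $\lambda_0(r)$ being required. With that lemma replaced, and continuity of $\lambda\mapsto\pi_{2+}(r_{overall}/(1-\lambda))\1$ justified (e.g.\ by the same implicit-function argument the paper uses for continuity in $r$), your argument is complete and is a more careful version of the paper's proof rather than a different one.
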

\begin{proof}
By Proposition \ref{prop:inclambda} the scalar $\pi_{2+}(r_{overall}/(1-\lambda))\1$ increases in $\lambda$,
hence there exists a unique solution in $(0,1)$ such that $1-\lambda = \pi_{2+}(r_{overall}/(1-\lambda))\1$, which implies the
result due to \eqref{eq:test}.
\end{proof}

\begin{table}[t]
\center
\begin{tabular}{c c c c c}
 \hline
$SCV$ & $f$ & $p_1$ & $1/\mu_1$ & $1/\mu_2$ \\ \hline
1 & 1/2& 0.5000 & 1 & 1 \\
5 & 1/2&0.9082  & 0.5505  &  5.4495 \\
25 & 1/2&0.9804 & 0.5100 & 25.4900  \\
10 & 1/10& 0.8524 & 0.1173 & 6.0981 \\
100 & 1/100& 0.9806 & 0.0102 & 51.0100 \\
1000 & 1/1000& 0.9980 & 0.0010 & 501.0010 \\
\hline
\end{tabular}
\caption{Parameter settings of the hyperexponential job size distribution for
various $SCV$ and $f$ values.}
\label{tab:HEXP}  
\end{table}

\section{Numerical results}\label{sec:num}

In Figure \ref{fig:f1/2} we present the boundary between the regions where work stealing and sharing result in the
lowest mean response time for $5$ different job size distributions ($2$ hyperexponential, the exponential and
$2$ Erlang distributions). This figure illustrates that the region
where stealing is best grows as the SCV of the job size distribution increases. Thus, stealing benefits from having more variable job sizes. 
This can be understood as follows. When job sizes become 
more variable, the probability of having two or more jobs in a queue tends to increase. When $r_{overall}$ is
fixed both work sharing and stealing use the same number of probes per time unit. However, for work sharing
probes are successful with probability $1-\lambda$, irrespective of the job size variability. For stealing 
the success probability equals the probability of find two or more jobs and thus increases with the
job size variability. In short, for $r_{overall}$ fixed,  as the job sizes become more variable, the number of job exchanges increases
for work stealing, while it remains the same for work sharing. Therefore, the region where stealing is
superior grows as the job size distribution becomes more variable.

Figure \ref{fig:SCV10} illustrates that the boundary
between stealing and sharing is not fully determined by the first two moments of the job size
distribution. This is not unexpected as the probability to have two or more jobs in an M/PH/1 queue
(with or without negative arrivals) also depends on the higher moments for the job sizes. 
 We do note that the regions where work sharing is best for the $5$ distributions with SCV equal to 10 are
smaller compared to exponential job sizes.

\begin{figure}[t]
\center
\includegraphics[width=0.48\textwidth]{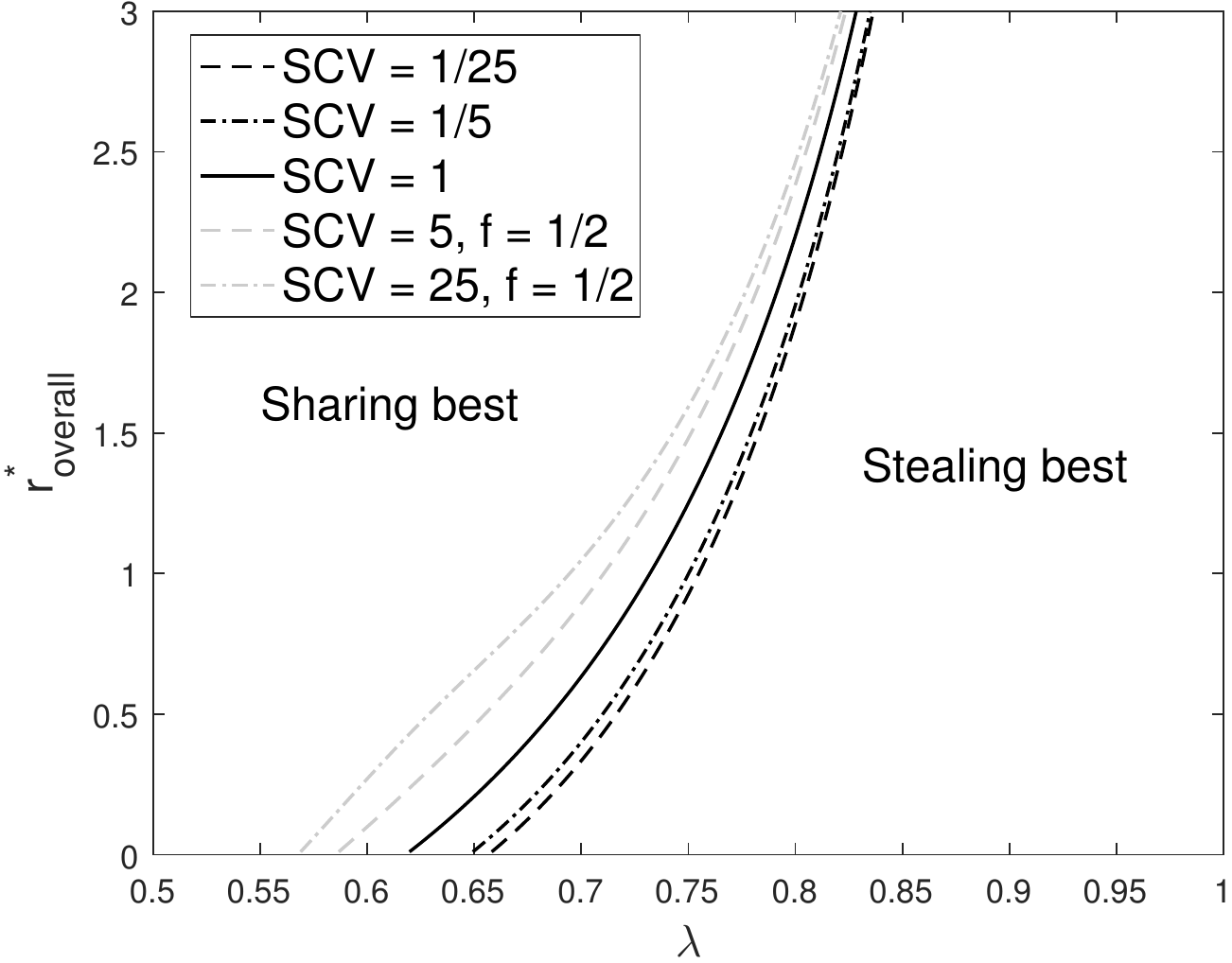}
\caption{Region for which stealing/sharing achieves the lowest mean response time.}
\label{fig:f1/2}
\end{figure}

\begin{figure}[t]
\center
\includegraphics[width=0.48\textwidth]{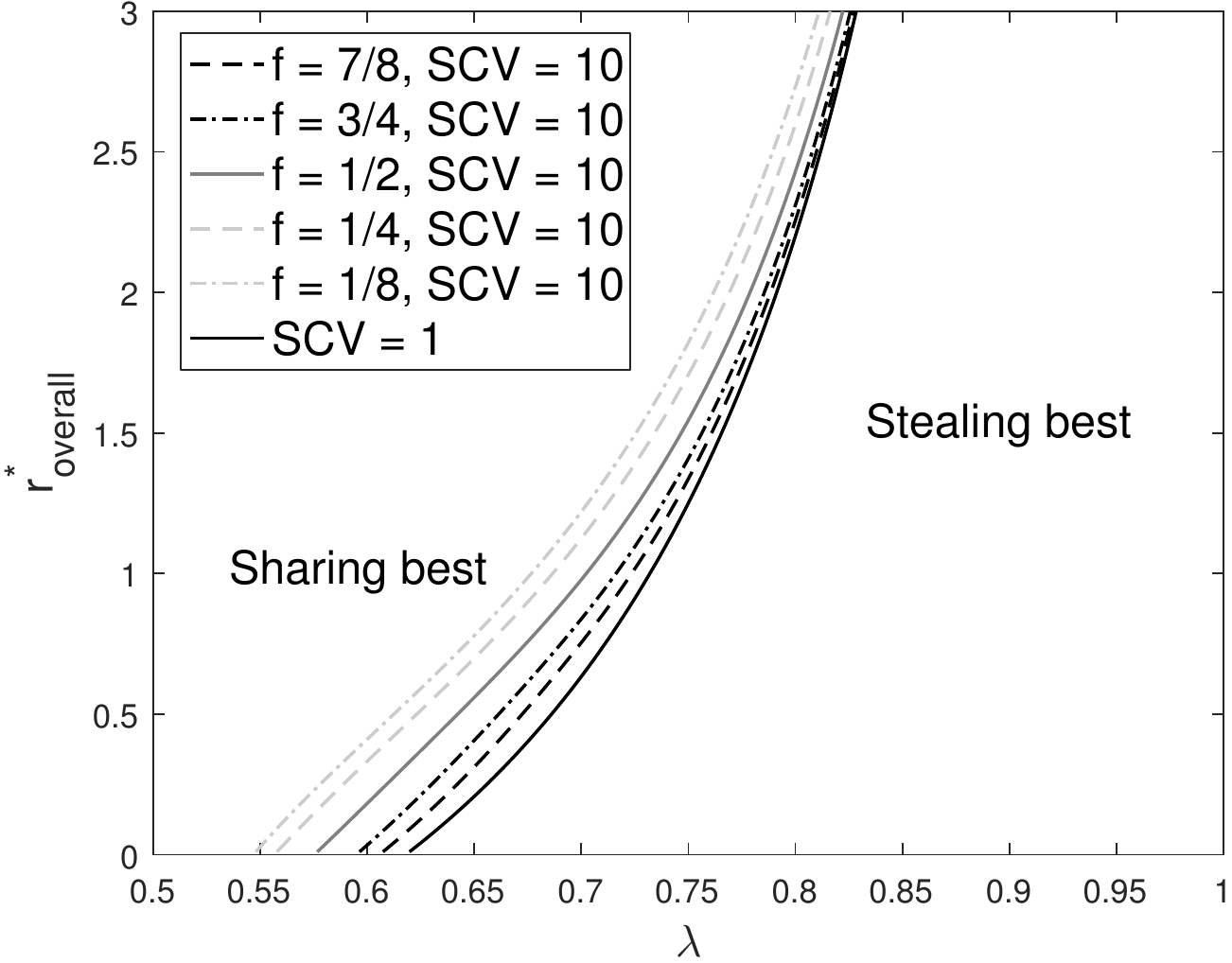}
\caption{Region for which stealing/sharing  achieves the lowest mean response time.}
\label{fig:SCV10}
\end{figure}

These figures trigger a number of questions:
\begin{enumerate}
\item Can we identify a region where stealing/sharing is the best for any (phase-type) job size distribution? In other
words, how far to the right/left can the stealing/sharing boundary move?
\item Given some information on the job size distribution, e.g., if the job size distribution has an increasing/decreasing hazard rate,
can we identify a more narrow region that contains the stealing/sharing boundary?
\item Is it possible to explicitly characterize the boundary  in some cases, e.g., as $r_{overall}$ tends to zero?
\end{enumerate}
In the next section we focus on the first two questions, while the third question is considered Section \ref{sec:small}.

\section{Bounds}\label{sec:bounds}

Although we can easily compute the boundary between the regions on a $(\lambda,r_{overall})$ plot where stealing/sharing 
prevails for a specific phase-type distribution $(\alpha,S)$, we now aim
at establishing simple bounds on these regions that are valid for any phase-type distribution. 

\subsection{General bounds}

We start with a tight general work sharing bound:

\begin{corol}[General Work Sharing Bound]\label{cor:pushbound}
When the job sizes follow a phase-type distribution $(\alpha,S)$ (with mean $1$), 
work sharing achieves a lower mean response time than stealing if
\begin{align}\label{eq:bound}
\lambda < \frac{\max(1,\sqrt{r_{overall}(r_{overall}+4)}-r_{overall})}{2}.
\end{align}
\end{corol}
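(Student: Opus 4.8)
The plan is to reduce the whole claim to the test in Theorem~\ref{th:comp}: work sharing beats stealing exactly when $\pi_{2+}(r_{overall}/(1-\lambda))\1 < 1-\lambda$. Writing $r_{steal}=r_{overall}/(1-\lambda)$ (and noting $r_{overall}>0$ so this is finite), it therefore suffices to produce a \emph{distribution-free} upper bound on $\pi_{2+}(r_{steal})\1$ and check that hypothesis \eqref{eq:bound} pushes that bound strictly below $1-\lambda$. I would use two such bounds, one for each branch of the $\max$ in \eqref{eq:bound}.

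The first bound is immediate: since $\pi_0(r)=1-\lambda$ for every phase-type distribution (Proposition~\ref{prop:zeta}, Theorem~\ref{thm:fixed}), we have $\pi_{2+}(r)\1 \le \pi_{1+}(r)\1 = 1-\pi_0(r)=\lambda$ for all $r\ge 0$. Hence whenever $\lambda<1/2$ we get $\pi_{2+}(r_{steal})\1\le\lambda<1-\lambda$, so sharing wins. This already disposes of the case $\sqrt{r_{overall}(r_{overall}+4)}-r_{overall}\le 1$, i.e.\ $r_{overall}\le 1/2$, in which the right-hand side of \eqref{eq:bound} equals $1/2$.

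The second bound is the substantive one and comes straight out of Proposition~\ref{prop:probe_rate_push}: $r\,\pi_{2+}(r)\1$ is increasing in $r$ with limit $\lambda^2/(1-\lambda)$, hence $r\,\pi_{2+}(r)\1 \le \lambda^2/(1-\lambda)$ for every $r\ge 0$ and every phase-type distribution. Taking $r=r_{steal}$ and using $(1-\lambda)r_{steal}=r_{overall}$ gives $\pi_{2+}(r_{steal})\1 \le \lambda^2/r_{overall}$. It then remains only to observe that the hypothesis $\lambda<\tfrac12\big(\sqrt{r_{overall}(r_{overall}+4)}-r_{overall}\big)$ says precisely that $\lambda$ lies below the positive root of $x^2+r_{overall}x-r_{overall}$, i.e.\ that $\lambda^2+r_{overall}\lambda-r_{overall}<0$, which rearranges to $\lambda^2/r_{overall}<1-\lambda$. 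Combining, $\pi_{2+}(r_{steal})\1\le \lambda^2/r_{overall}<1-\lambda$, and Theorem~\ref{th:comp} finishes the argument. Since $\lambda<\tfrac12\max\big(1,\sqrt{r_{overall}(r_{overall}+4)}-r_{overall}\big)$ means $\lambda<1/2$ or $\lambda<\tfrac12\big(\sqrt{r_{overall}(r_{overall}+4)}-r_{overall}\big)$, one of the two branches always applies.

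There is no deep obstacle here; the "hard part" is really just recognizing that Proposition~\ref{prop:probe_rate_push} is already a \emph{distribution-free} statement (its proof couples the service phases along with the queue lengths, so the coupling bound and the $r\to\infty$ limit hold for any phase-type $(\alpha,S)$), and that the quadratic threshold in \eqref{eq:bound} is nothing but the solution of $\lambda^2/r_{overall}=1-\lambda$. The only care needed is the bookkeeping around the $\max$: for $r_{overall}\le 1/2$ the stated bound is $1/2$, which exceeds the quadratic threshold, so that regime genuinely needs the trivial bound $\pi_{2+}(r_{steal})\1\le\lambda$ rather than the sharper one. If one prefers a single line, one can write $\pi_{2+}(r_{steal})\1\le\min(\lambda,\lambda^2/r_{overall})$ and verify this minimum is $<1-\lambda$ exactly on the region \eqref{eq:bound}, but splitting into the two cases is cleaner to present.
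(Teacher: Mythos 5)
Your proof is correct, and in the second branch it takes a slightly different route from the paper, so a comparison is worthwhile. The first branch is the paper's argument verbatim in spirit: since $\pi_0(r)=1-\lambda$ and $\pi_1(r)\1>0$, one has $\pi_{2+}(r)\1<\lambda$, so the test \eqref{eq:test} of Theorem~\ref{th:comp} holds automatically once $\lambda<1/2$. For the other branch, the paper notes that $\lambda<(\sqrt{r_{overall}(r_{overall}+4)}-r_{overall})/2$ is exactly $r_{overall}>\lambda^2/(1-\lambda)$, the regime where (by Proposition~\ref{prop:probe_rate_push}) sharing can never exhaust the probe budget, so it may take $r$ arbitrarily large and drive its mean response time arbitrarily close to $1$, beating stealing directly without invoking the test. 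You instead turn the monotone-convergence statement of Proposition~\ref{prop:probe_rate_push} into the uniform, distribution-free bound $r\,\pi_{2+}(r)\1\leq\lambda^2/(1-\lambda)$, evaluate it at $r_{steal}=r_{overall}/(1-\lambda)$ to get $\pi_{2+}(r_{steal})\1\leq\lambda^2/r_{overall}$, observe that the hypothesis is precisely $\lambda^2/r_{overall}<1-\lambda$, and close via Theorem~\ref{th:comp}. Both arguments rest on the same two ingredients (Theorem~\ref{th:comp} and Proposition~\ref{prop:probe_rate_push}) and the same identification of the quadratic threshold with $r_{overall}=\lambda^2/(1-\lambda)$; what your packaging buys is the explicit bound $\pi_{2+}(r_{steal})\1\leq\min(\lambda,\lambda^2/r_{overall})$, valid for every phase-type law, which is a pleasant by-product. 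One small caveat: in the regime $r_{overall}\geq\lambda^2/(1-\lambda)$ the matching rate $r_{share}$ used in the written proof of Theorem~\ref{th:comp} does not exist (sharing can only undershoot $r_{overall}$), so taking that theorem at face value there implicitly appeals to the same ``let $r\to\infty$'' comparison that the paper's corollary proof states directly; adding one sentence to that effect would make your argument fully self-contained.
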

\begin{proof}
The $1/2$ bound is immediate from Theorem \ref{th:comp} as $\pi_1(r)\1 > 0$ for any $r \geq 0$.
The $(\sqrt{r_{overall}(r_{overall}+4)}-r_{overall})/2$ bound follows from the fact that work sharing has a mean response time 
arbitrarily close to $1$ when $r_{overall}$ exceeds $\lambda^2/(1-\lambda)$ as $r$ can be set arbitrarily large without
exceeding $r_{overall}$.
\end{proof}

In Figure \ref{fig:tight} we numerically illustrate that the bound in \eqref{eq:bound} is tight. 
More specifically, this bound is approached when a large majority of the jobs is very short (that is, $p_1 \approx 1$ and $1/\mu_1 \approx 0$)
and the remaining fraction of the jobs contributes nearly the entire workload (i.e., $f \approx 0$).

Note that \eqref{eq:bound} is identical to 
the exponential bound given by \eqref{eq:testexp} if we replace $r_{overall}$
by $r_{overall}+1$.

\begin{figure}[t]
\center
\includegraphics[width=0.48\textwidth]{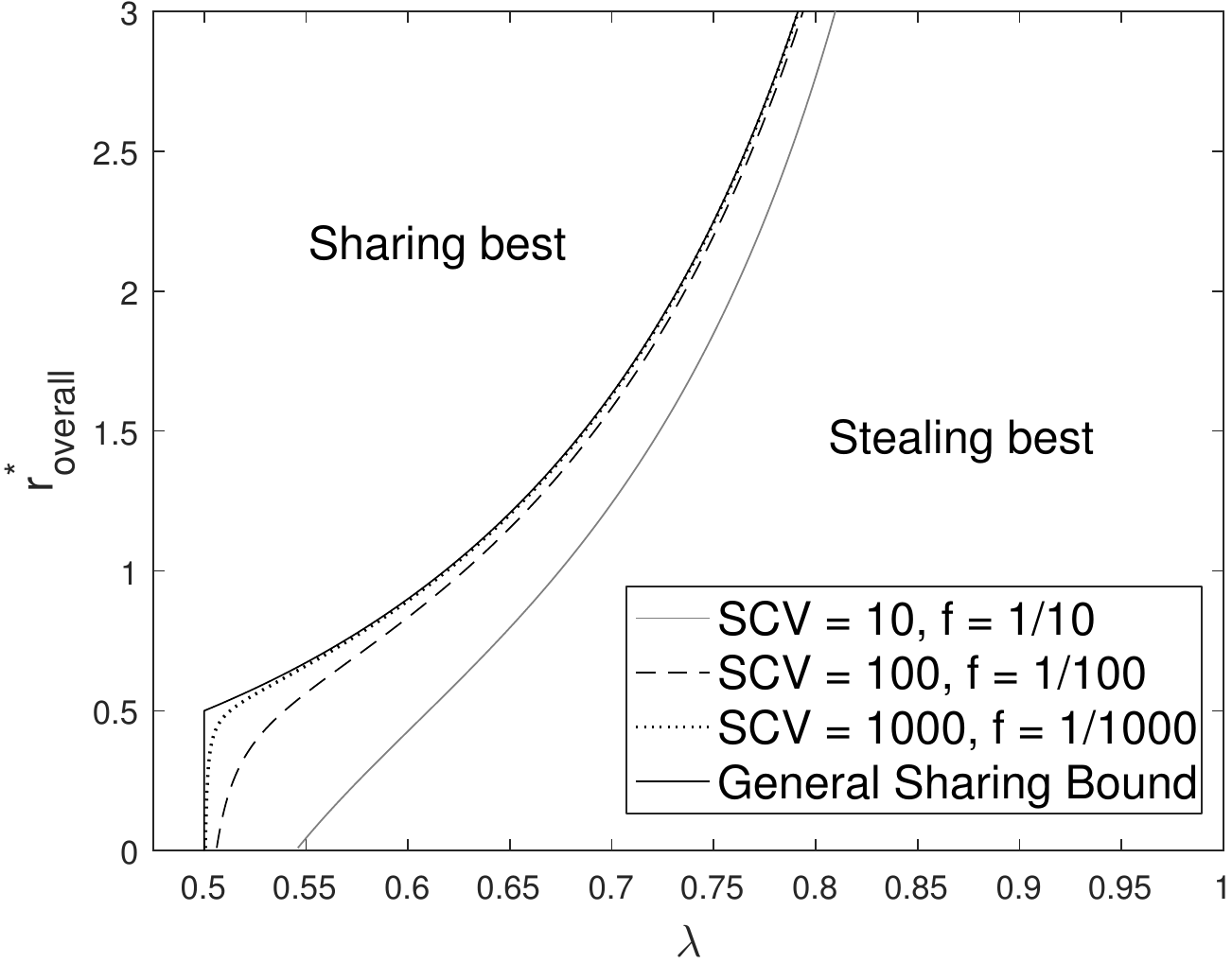}
\caption{Numerical illustration of the tightness of the general work sharing bound.}
\label{fig:tight}
\end{figure}

For the general work stealing bound we have the following conjecture. The idea behind this conjecture is
that we believe that the probability to have two or more jobs is minimized over all job size distributions 
with mean $1$ by the deterministic distribution in an M/G/1 queue with negative customers.
While this may appear as a rather intuitive result, we did not manage to come up with a formal proof thus far.
The difficulty is due to the fact that the negative arrivals remove customers from the back of the queue.

\begin{conj}[General Work Stealing Bound]\label{conj:pullbound}
When the job sizes follow a phase-type distribution $(\alpha,S)$ (with mean $1$), stealing 
achieves a lower mean response time than work sharing if $\lambda > \nu(r_{overall})$,
where $\nu(r_{overall})$ is the unique solution on $(0,1)$ of
\begin{align}
1-\lambda = \pi_{2+}^{(det)}(r_{overall}/(1-\lambda)),
\end{align}
where $\pi_{2+}^{(det)}(r)$ is the probability that the queue length exceeds one 
in the M/PH/1 queue with negative arrivals characterized by \eqref{eq:Q} when the phase-type job sizes
are replaced by deterministic job sizes.
\end{conj}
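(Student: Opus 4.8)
The plan is to reduce Conjecture~\ref{conj:pullbound} to a single stochastic comparison for the M/PH/1 queue with negative customers of \eqref{eq:Q}, and then to attack that comparison. Write $g_{(\alpha,S)}(\lambda):=\pi_{2+}(r_{overall}/(1-\lambda))\1-(1-\lambda)$. By Theorem~\ref{th:comp} sharing beats stealing precisely when $g_{(\alpha,S)}(\lambda)<0$, by Proposition~\ref{prop:inclambda} the map $\lambda\mapsto g_{(\alpha,S)}(\lambda)$ is strictly increasing on $(0,1)$, and Theorem~\ref{th:lambda} tells us it has a unique root $\lambda^*(\alpha,S)$ with stealing winning iff $\lambda>\lambda^*(\alpha,S)$. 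Running the same argument with deterministic service (obtained, say, as a limit of Erlang, hence phase-type, distributions, so that the same monotonicity applies) makes $\nu(r_{overall})$ well defined as the unique root of $g_{\mathrm{det}}(\lambda):=\pi_{2+}^{(det)}(r_{overall}/(1-\lambda))-(1-\lambda)$. Since $\lambda^*(\alpha,S)$ and $\nu(r_{overall})$ are the roots of two strictly increasing functions, the claim ``$\lambda>\nu(r_{overall})\Rightarrow$ stealing wins'' reduces to $\lambda^*(\alpha,S)\le\nu(r_{overall})$, and this in turn follows once one shows $g_{(\alpha,S)}\ge g_{\mathrm{det}}$ pointwise, i.e.
\[
\pi_{2+}(r)\1\;\ge\;\pi_{2+}^{(det)}(r)\qquad\text{for all }\lambda\in(0,1),\ r\ge0 .
\]
Because $\lambda_0(r)$ is pinned by \eqref{eq:lam0} so that $\pi_0(r)=1-\lambda$ for every service law, this is equivalent to the statement that, among all mean-$1$ phase-type laws, the deterministic law \emph{minimizes} the stationary probability $\pi_{2+}(r)\1$ of having two or more jobs (equivalently, \emph{maximizes} $\pi_1(r)\1$) in the M/G/1 queue with negative customers --- the queue-length analogue of the classical fact that deterministic service minimizes congestion in an ordinary M/G/1 queue.

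For the core comparison I would pursue two complementary routes. The first is a sample-path coupling: run the M/PH/1 and M/D/1 queues (both with negative customers, each with its own idle-arrival rate $\lambda_0(r)$, $\lambda_0^{(d)}(r)$ so that both have idle probability $1-\lambda$) on one probability space, synchronizing the rate-$\lambda$ busy arrivals and the negative arrivals, and trying to maintain the invariant that the number of \emph{waiting} customers in the phase-type system is pathwise at least that in the deterministic system; that invariant would give $\pi_{2+}(r)\1\ge\pi_{2+}^{(det)}(r)$ immediately. The second is analytic: express $\pi_{2+}(r)\1$ through the matrix-analytic formula \eqref{eq:steady}, or through a workload / Pollaczek--Khinchine-type decomposition for G-queues, as an increasing convex functional of the service-time law, so that Jensen's inequality singles out the deterministic law as the minimizer; the explicit hyperexponential special case in Section~\ref{sec:MPH1} can serve as a consistency check and as a guide to the right convex representation.

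The hard part is exactly the feature flagged after the statement: a negative arrival removes the job at the \emph{back} of the queue. In the deterministic system that job always carries one unit of unfinished work, whereas in the phase-type system it carries a freshly sampled, possibly much larger, service requirement; a negative arrival can thus delete far more work from the more variable system than from the deterministic one, and service completions occur at uncoupled epochs in the two systems. This simultaneously obstructs the monotone coupling --- one cannot keep the phase-type queue above the deterministic queue once removals are allowed --- and the usual convex-ordering / work-conservation argument that yields ``more variable service $\Rightarrow$ more congestion'' for the plain M/G/1 queue. Getting past this seems to require a richer state descriptor (for instance the queue length together with the residual service requirement of the last-arrived job) or an exchangeability/rearrangement argument comparing the two systems after reordering jobs; supplying such an argument is precisely the gap that currently leaves the statement a conjecture.
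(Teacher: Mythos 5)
This statement is a conjecture in the paper, and the paper offers no proof of it; your proposal does not close that gap either, as you yourself acknowledge in the final paragraph. Your reduction is essentially the paper's own reasoning: using Theorem~\ref{th:comp}, Proposition~\ref{prop:inclambda} and Theorem~\ref{th:lambda}, the conjecture follows once one shows that, with $\pi_0(r)$ pinned at $1-\lambda$, the deterministic mean-$1$ service law minimizes $\pi_{2+}(r)\1$ over all mean-$1$ (phase-type) laws in the M/G/1 queue with negative customers. That extremal statement is precisely what the author says he ``did not manage to come up with a formal proof'' for, and precisely for the reason you identify: a negative arrival removes the customer at the back of the queue, which in the more variable system deletes a freshly sampled (possibly very large) amount of work, so neither a monotone sample-path coupling nor the standard convex-ordering/work-conservation argument for the plain M/G/1 queue goes through. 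So the ``core comparison'' you defer to is not a technical lemma you could reasonably cite; it is the entire content of the conjecture, and your two proposed routes (coupling, convex representation via \eqref{eq:steady} or a Pollaczek--Khinchine-type decomposition) are sketched at a level where the known obstruction is restated rather than overcome. A secondary, smaller issue: you invoke Proposition~\ref{prop:inclambda} for the deterministic system via an Erlang limit, which additionally requires continuity of $\pi_{2+}$ in the service-time distribution (the paper's continuity result is only in $r$); this would need its own argument, though it is minor compared to the main gap.

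Two pointers from the paper that your write-up misses and that would be relevant to any serious attack: first, the author gives an equivalent reformulation of the conjecture as a statement about an M/M/1 queue with arrival rate $\lambda$, service rate $r_{overall}$, and negative arrivals generated by an independent renewal process with mean inter-renewal time one, namely that the probability of an empty queue is maximized by the deterministic renewal process --- this swaps the roles of service and negative-arrival streams and may be a more tractable target than the formulation you reduce to. Second, the conjecture is verified in the paper only in the limit $r_{overall}\to 0$ (Proposition~\ref{prop:small}), where the negative arrivals vanish and the ordinary M/G/1 Pollaczek--Khinchin formula plus Jensen's inequality suffice; any proposed general argument should at least be checked against, and ideally extend, that limiting case.
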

We note that the probability $\pi_{2+}^{(det)}(r)$ is not easy to compute due to the negative arrivals.

Proving the above conjecture can be shown to be equivalent to proving
the following statement: Consider an M/M/1 queue with negative arrivals where the arrival rate
$\lambda < 1$, the service rate $r_{overall} \geq 0$ and the negative arrivals are generated by
an independent renewal process with a mean inter-renewal time equal to one, then the probability of
having an empty queue is maximized over all renewal processes by the deterministic
renewal process.

\subsection{Monotone hazard rate bounds}

Given a continuous distribution with cdf $H$ and pdf $h$, the hazard rate $z(t)$ is defined
as $h(t)/(1-H(t))$, which in case of a phase-type distribution $(\alpha,S)$ means $z(t)=\alpha e^{St} \mu/\alpha e^{St} \1$.
For distributions with a monotone hazard rate we conjecture the following:
\begin{conj}[DHR/IHR bounds]\label{conj:DHR/IHR}
The exponential bound specified by
\eqref{eq:testexp} corresponds to a sharing (stealing) bound for any phase-type job size distribution with
a decreasing (increasing) hazard rate.
\end{conj}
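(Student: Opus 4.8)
The plan is to reduce the claim to a stochastic comparison between the queue-length process of the M/PH/1 queue with negative arrivals driven by a DHR (resp.\ IHR) job size distribution and the same process driven by an exponential job size distribution of mean $1$. By Theorem \ref{th:comp}, the exponential bound \eqref{eq:testexp} is exactly the curve $1-\lambda = \pi_{2+}^{(exp)}(r_{overall}/(1-\lambda))\1$, so what must be shown is that for a DHR distribution one has $\pi_{2+}(r)\1 \le \pi_{2+}^{(exp)}(r)\1$ for all $r \ge 0$ (and the reverse inequality for IHR). Combined with Proposition \ref{prop:mresp} (monotonicity of $\pi_{2+}(r)\1$ in $r$) and the monotonicity in $\lambda$ from Proposition \ref{prop:inclambda}, this pins the DHR sharing boundary to the right of the exponential boundary and the IHR one to the left, which is precisely the statement.

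First I would recall the standard characterization of DHR/IHR distributions via \emph{residual lifetimes}: a distribution is DHR iff its residual-life distribution is stochastically increasing in the elapsed service time, and IHR iff it is stochastically decreasing; moreover an exponential distribution is the boundary case, and a DHR distribution with mean $1$ has residual life that is stochastically larger than (resp.\ smaller for IHR) a mean-$1$ exponential whenever some service has been in progress. Next I would set up a coupling, in the spirit of the coupling arguments already used in the proofs of Propositions \ref{prop:mresp} and \ref{prop:probe_rate_push}, between the censored-on-busy chain $(\hat X_t(r),\hat Y_t(r))$ driven by the PH job sizes and the corresponding chain driven by exponential job sizes. Arrivals (rate $\lambda$) and negative arrivals (rate $(1-\lambda)r$ while the queue exceeds one) are coupled to occur simultaneously in both systems; the only discrepancy is in the service-completion epochs of the job currently in service. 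Using the residual-life ordering I would couple the two service clocks so that, as long as the queue lengths agree, a completion in the exponential system occurs no later (DHR) / no earlier (IHR) than in the PH system, giving $\hat X_t^{(PH)} \ge \hat X_t^{(exp)}$ (DHR) pathwise, hence $\pi_{2+}(r)\1 \ge \pi_{2+}^{(exp)}(r)\1$ — wait, this is the \emph{wrong} direction for DHR, so the coupling must be arranged to track the stochastically \emph{larger} residual life producing the stochastically \emph{larger} queue; I would verify the sign carefully, since DHR means longer typical residual service, which should make queues longer and thus push $\pi_{2+}$ up, matching the intuition that more variable jobs help stealing. On reflection the cleanest route is probably to compare the embedded service-initiation chain: at each service start the phase resets to $\alpha$, so between resets the PH service is a single (phase-type) random variable, and I would invoke the convex/stochastic ordering of a DHR mean-$1$ variable against a mean-$1$ exponential together with a result of the Massey/Whitt or Bu\v{s}i\'c--type monotonicity (as already cited, \cite{busicMOR}) for level-crossing functionals of the resulting M/G/1-with-negative-arrivals queue.

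The main obstacle — and the reason this is stated as a conjecture rather than a theorem — is the same one flagged for Conjecture \ref{conj:pullbound}: the negative arrivals remove the customer at the \emph{back} of the queue, so the queue-length process is not a simple function of the partial sums of service times, and the usual sample-path monotonicity of $G/G/1$ workload in the service-time distribution does not transfer verbatim. Concretely, a larger residual service time of the head-of-line job does lengthen the busy period, but whether it increases $\P[\text{queue} \ge 2]$ requires controlling the interplay between that elongation and the extra negative arrivals it enables; the coupling must be carefully truncated at the epochs where the two queue lengths first disagree and re-synchronized, and it is not obvious the disagreement always has the favorable sign. I expect the DHR/IHR hypothesis to be exactly what rescues the argument — monotone hazard rate gives a \emph{uniform} (elapsed-time-independent) residual-life ordering against the exponential, which is stronger than mere convex ordering of the service time and should suffice to propagate the queue-length ordering through the negative-arrival dynamics — but making that precise, most likely via Corollary 1 and the coupling framework of \cite{vanhoudt_global_PACM}, is the crux of the proof.
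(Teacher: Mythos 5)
You should first be clear about what you are proving: the paper does not prove this statement at all — it is stated as a conjecture, the author explicitly writes that no proof is available, and the only supporting results are (i) the small-probe-rate limit (Proposition \ref{prop:dhr}, obtained via Proposition \ref{prop:boundhr} together with the M/G/1 transform result of Proposition \ref{lemma:MG1_pi1}) and (ii) the weaker DHR/IHR bounds of Proposition \ref{prop:boundhr} at general $r_{overall}$. Your proposal does not close this gap either: it is a plan whose decisive step — propagating a queue-length ordering through the dynamics in which negative arrivals remove the customer at the \emph{back} of the queue — is exactly where you stop, and it is exactly the obstruction the paper identifies for Conjecture \ref{conj:pullbound}. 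Acknowledging that the crux is unresolved does not make the sketch a proof.

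Beyond the admitted gap there are concrete errors. First, the direction of the reduction: the intended content (consistent with Proposition \ref{prop:dhr}, Proposition \ref{prop:boundhr} and the figures) is that the DHR boundary lies to the \emph{left} of the exponential one, i.e.\ for DHR one needs $\pi_{2+}(r_{overall}/(1-\lambda))\1 \geq \lambda^2/(1+r_{overall})$, the exponential value, and the reverse for IHR; your opening reduction asserts the opposite inequality ("pins the DHR sharing boundary to the right of the exponential boundary") and you only half-correct the sign later without repairing the reduction, so the proposal is internally inconsistent about what must be shown. Second, the key ordering claim is false as stated: a DHR distribution with mean $1$ cannot be stochastically larger than a mean-$1$ exponential (two distinct distributions with equal means are never stochastically ordered), and its residual life at small elapsed times is arbitrarily close to the fresh distribution, so there is no elapsed-time-uniform stochastic dominance of the residual life over the exponential. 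DHR only gives "residual stochastically larger than fresh" (the NWU property), which is precisely why Proposition \ref{prop:boundhr} works with the expectation bound $E[R_i] \geq E[Y]$ and yields the weaker threshold $L(E[Y])$ rather than the exponential curve \eqref{eq:testexp}; a pathwise coupling in which the exponential service always completes no later (DHR) than the phase-type service therefore cannot exist in general. Third, the tools you invoke do not apply: \cite{busicMOR}, as used in Proposition \ref{prop:inclambda}, compares two chains obtained by redirecting transitions of the \emph{same} chain to higher states and does not compare different service-time distributions, while \cite{vanhoudt_global_PACM} concerns mean-field convergence, not stochastic ordering. So the proposal neither matches the paper (which has no proof to match) nor supplies the missing argument; at best it rediscovers, less rigorously, the expectation-based route that the paper already formalizes as Proposition \ref{prop:boundhr}.
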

In other words the boundary between stealing and sharing 
region for a job size distribution with a decreasing (increasing) hazard rate
is located between the exponential boundary and the general stealing (sharing) boundary
as illustrated in Figure \ref{fig:conj}.

\begin{figure}[t]
\center
\includegraphics[width=0.48\textwidth]{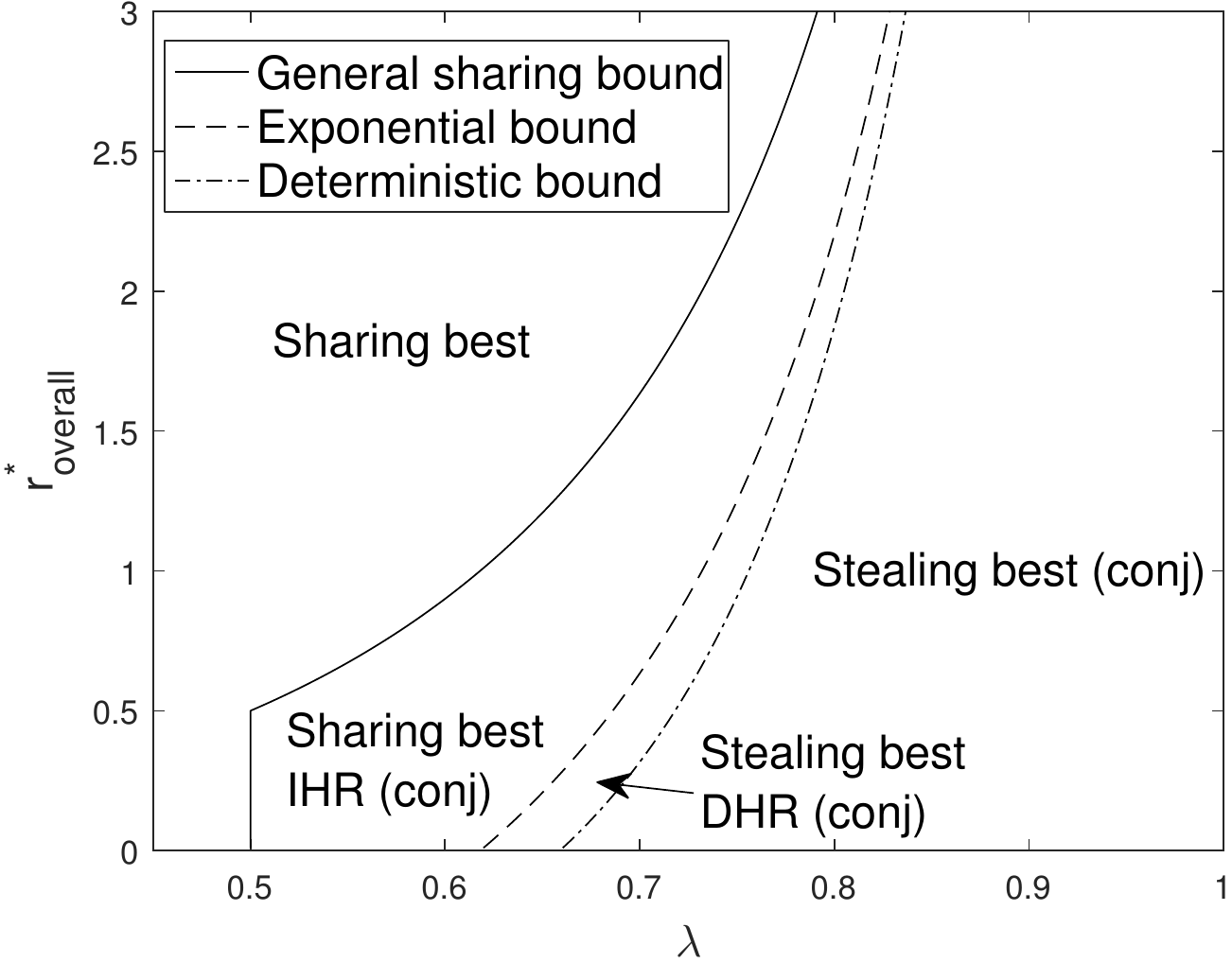}
\caption{Illustration of Conjectures \ref{conj:pullbound} and \ref{conj:DHR/IHR}.}
\label{fig:conj}
\end{figure}

At this stage we do not have a proof for Conjecture 1 and 2. In the next section we show that these
conjectures are valid when the probe rate tends to zero. In the remainder of this section we establish
weaker bounds for job sizes with increasing/decreasing hazard rates. 

\begin{figure}[t]
\center
\includegraphics[width=0.48\textwidth]{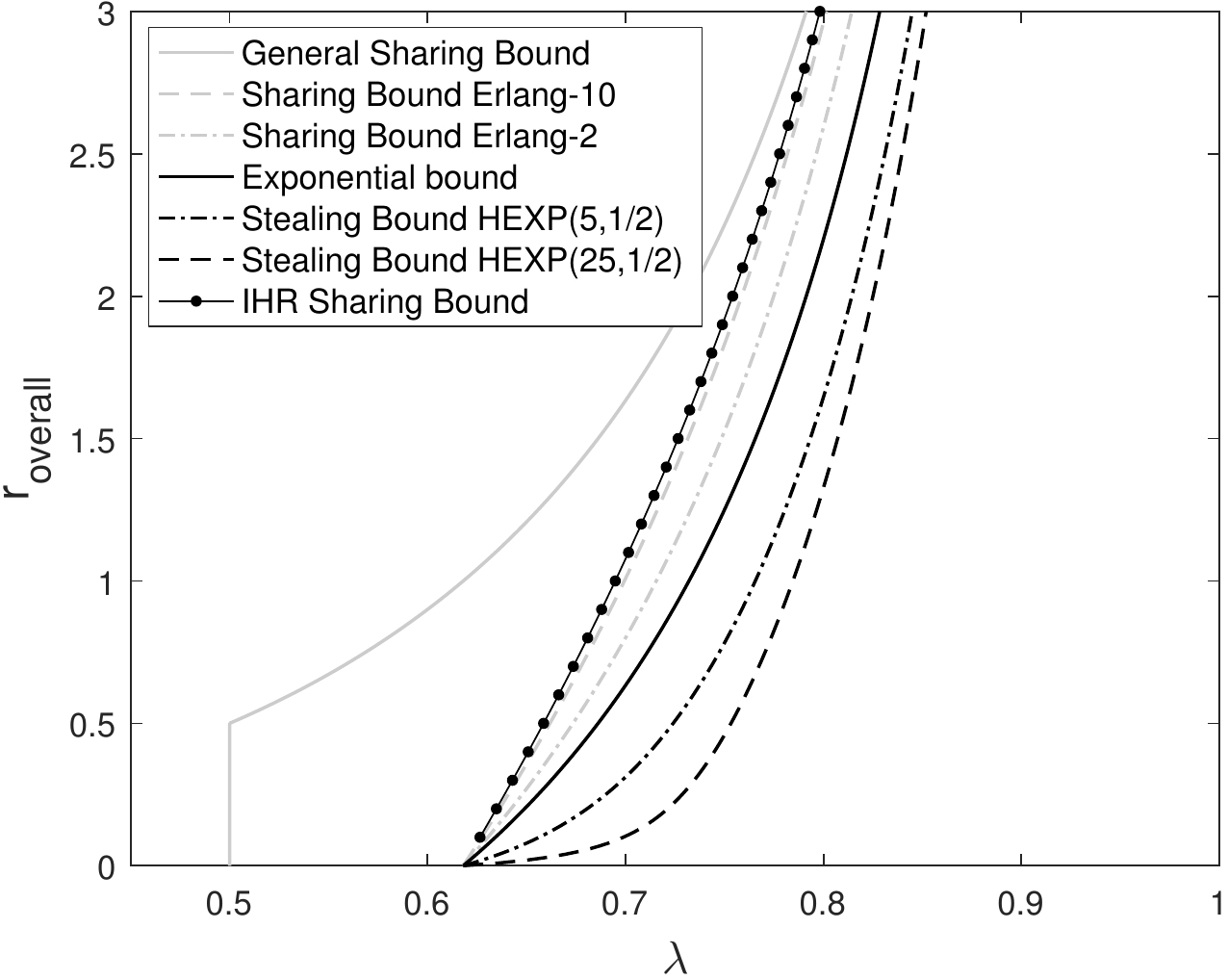}
\caption{Illustrated stealing and sharing bounds.}
\label{fig:hazard}
\end{figure}

\begin{prop}[DHR Stealing Bound, IHR sharing Bound]\label{prop:boundhr}
Let $Y$ be the minimum of the phase-type job size distribution and an exponential random variable with parameter $r_{overall}$,
that is, $E[Y] = -\alpha (S-r_{overall}I)^{-1}\1$. Define 
\begin{align}\label{eq:L}
L(x)=\frac{\sqrt{1+4x}-1}{2x}.
\end{align}
When the job size has a decreasing hazard rate (DHR), stealing achieves a lower mean response time than sharing 
if $\lambda > L(E[Y])$.
When the job size has a increasing hazard rate (IHR), sharing achieves a lower mean response time than stealing if
$\lambda < L(E[Y])$.
\end{prop}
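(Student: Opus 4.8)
The plan is to feed the test of Theorem~\ref{th:comp} into an elementary identity for $L$ and thereby reduce the statement to a scalar comparison in the M/PH/1 queue with negative customers. Throughout write $r=r_{overall}/(1-\lambda)$, so the negative arrivals in \eqref{eq:Q} fire at rate $(1-\lambda)r=r_{overall}$; note that $E[Y]=-\alpha(S-r_{overall}I)^{-1}\1$ is exactly the mean of the phase-type variable with representation $(\alpha,S-r_{overall}I)$, i.e.\ of $\min(B,\mathrm{Exp}(r_{overall}))$ for $B\sim(\alpha,S)$. The defining relation of $L$ is $L(x)^2x+L(x)=1$, and since $\lambda\mapsto\lambda^2E[Y]+\lambda$ is increasing on $(0,1)$ we get $\lambda<L(E[Y])\Leftrightarrow\lambda^2E[Y]<1-\lambda$ and $\lambda>L(E[Y])\Leftrightarrow\lambda^2E[Y]>1-\lambda$. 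Combined with Theorem~\ref{th:comp}, it therefore suffices to prove $\pi_{2+}(r)\1\le\lambda^2E[Y]$ when the job size has an increasing hazard rate, and $\pi_{2+}(r)\1\ge\lambda^2E[Y]$ when it has a decreasing hazard rate.

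The second step is to benchmark against the exponential job size of mean $1$ (keeping $\lambda$, $r$, and hence $r_{overall}$, fixed). From \eqref{eq:steady}, \eqref{eq:lam0} and \eqref{eq:lam0v3} one obtains $\pi_{2+}(r)\1=\lambda\,\frac{1-(1-\lambda)w}{1+(1-\lambda)rw}$ with $w=\alpha(\lambda(I-G(r))-S)^{-1}\1$, and this is \emph{strictly decreasing} in $w$; for the mean-$1$ exponential $G(r)=1$, hence $w=1$ and $\pi_{2+}^{(\exp)}(r)\1=\lambda^2/(1+r_{overall})=\lambda^2E[\min(\mathrm{Exp}(1),\mathrm{Exp}(r_{overall}))]$. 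So the target inequalities will follow from two facts: \textbf{(a)} $\pi_{2+}(r)\1\ge\pi_{2+}^{(\exp)}(r)\1$ for a DHR job size and $\le$ for an IHR job size, equivalently $w\le1$ (DHR) resp.\ $w\ge1$ (IHR); and \textbf{(b)} $E[Y]\le1/(1+r_{overall})$ for a DHR job size and $\ge$ for an IHR job size. Indeed in the DHR case $\pi_{2+}(r)\1\ge\pi_{2+}^{(\exp)}(r)\1=\lambda^2/(1+r_{overall})\ge\lambda^2E[Y]$, and symmetrically for IHR.

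Fact (b) is elementary. Writing $E[Y]=\int_0^\infty\bar H(t)e^{-r_{overall}t}\,dt$, a DHR survival function is log-convex and an IHR one is log-concave; together with $\bar H(0)=1$ and $\int_0^\infty\bar H=1$ this forces $\bar H(t)-e^{-t}$ to change sign exactly once, from $-$ to $+$ in the DHR case and from $+$ to $-$ in the IHR case. Since $\int_0^\infty(\bar H(t)-e^{-t})\,dt=0$ and the weight $e^{-r_{overall}t}$ is decreasing, integrating the single-crossing difference against this weight produces a non-positive (resp.\ non-negative) value, which is (b). Fact (a) is the heart of the proof. When $r_{overall}=0$ the chain is an ordinary M/PH/1 queue and $w$ reduces to $(1-\tilde B(\lambda))/(\lambda\tilde B(\lambda))$, so $w\le1\Leftrightarrow\tilde B(\lambda)\ge1/(1+\lambda)$, which is the same Laplace-transform comparison as in (b), now evaluated at $\lambda$. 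At the other extreme $G(r)\to I$ as $r\to\infty$ (as in the proof of Proposition~\ref{prop:probe_rate_push}), so $w\to\alpha(-S)^{-1}\1=1$. It remains to show that $w-1$ keeps its sign for all $r\in(0,\infty)$; the route I would take is a coupling of the two chains (general versus exponential job sizes) in the style of Propositions~\ref{prop:mresp}--\ref{prop:probe_rate_push}, matching positive arrivals, negative arrivals and service completions, and letting the DHR (resp.\ IHR) single-crossing property govern the stochastic ordering of the residual service time of the in-service customer each time a negative arrival ``passes over'' it.

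The main obstacle is precisely this last point, i.e.\ fact (a) for $r_{overall}>0$: unlike M/G/1, the M/PH/1 queue with negative customers has no transparent closed form for $\pi_{2+}$ in terms of the service-time transform, and a mean-$1$ DHR (or IHR) job size is not stochastically ordered against $\mathrm{Exp}(1)$ --- the two survival curves cross --- so a naive coupling of queue lengths cannot work. One is forced to track the age of the job in service and exploit the equivalence ``DHR $\Leftrightarrow$ residual service time stochastically increasing in age'' (and its converse for IHR), then propagate this ordering through the dynamics; this is presumably also the place where restricting to phase-type job sizes becomes convenient, or where one invokes a monotonicity/comparison theorem for the embedded chain in the spirit of Proposition~\ref{prop:inclambda}.
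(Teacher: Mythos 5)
Your reduction is correct and matches the paper's: via Theorem \ref{th:comp} and the identity $L(x)^2x+L(x)=1$, the claim is equivalent to $\pi_{2+}(r_{overall}/(1-\lambda))\1\geq\lambda^2E[Y]$ for DHR and $\leq$ for IHR, and your algebraic expression $\pi_{2+}(r)\1=\lambda(1-(1-\lambda)w)/(1+(1-\lambda)rw)$ together with fact (b) (the single-crossing/Laplace-transform comparison) is sound. But the proof as written has a genuine gap at exactly the point you flag: fact (a), the comparison $\pi_{2+}(r)\1\geq\pi_{2+}^{(\exp)}(r)\1$ (equivalently $w\leq 1$) for DHR job sizes in the M/PH/1 queue \emph{with negative customers}, is only verified at the endpoints $r_{overall}=0$ and $r_{overall}\to\infty$, and the proposed coupling "in the style of Propositions \ref{prop:mresp}--\ref{prop:probe_rate_push}" is not carried out. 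Those earlier couplings work because the two chains being compared have identical service dynamics and differ only in a Poisson thinning of the negative arrivals; here the two chains have different service-time laws whose survival functions cross, so there is no pathwise ordering to exploit, and tracking the age/phase of the in-service job through the negative-arrival dynamics is precisely the unresolved difficulty (it is essentially the same obstruction the paper cites for Conjecture \ref{conj:pullbound}). As it stands, the chain of inequalities collapses without (a), so the proposal does not constitute a proof.

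For comparison, the paper avoids any benchmark against the exponential queue and proves the needed inequality $\pi_{2+}\1\gtrless\lambda^2E[Y]$ directly by a rate-conservation (renewal-reward) identity: collecting reward at unit rate while the queue length is at least $2$ gives $P[X\geq 2]=\lambda\sum_{i\geq 1}P[X=i]E[R_i]$, where $R_i$ is the time a job spends at the head of the waiting room. That sojourn ends when the in-service job completes or when the waiting job is removed by a negative arrival (rate at most $r_{overall}$), and for a DHR job size the residual service of the in-service job is stochastically at least a fresh service time; hence $E[R_i]\geq E[\min(B,\mathrm{Exp}(r_{overall}))]=E[Y]$, giving $P[X\geq 2]\geq\lambda^2E[Y]$ in one step (and the reverse for IHR). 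This local argument about a single customer's head-of-line delay is what replaces — and renders unnecessary — your global comparison (a) between two different queues. If you want to salvage your route, you would need to prove (a) as a standalone lemma, which appears to be at least as hard as the proposition itself.
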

\begin{proof}
Consider the queueing system corresponding to the Markov chain with rate matrix $Q(r_{overall}/(1-\lambda))$, see
\eqref{eq:Q}. Note that in this
queueing system the job transfers are also regarded as job completions and consecutive service times are therefore correlated. 
Let $X$ be the queue length and $R_i$, for $i>0$, be the amount of time that a job spends waiting at the {\it head} of the
waiting room provided that the job arrived when the queue length equaled $i$. 
Assume we collect reward at rate $1$ when the queue length exceeds $1$, thus the average rate at which we collect reward
is $P[X \geq 2]$. This average reward should be equal to the rate of customer arrivals that generate reward times the average reward
that each such customer delivers, thus
\[ P[X \geq 2] = \lambda \sum_{i\geq 1} P[X = i] E[R_i],\]
as the arrival rate is $\lambda$ (unless the queue is idle). In general the difficulty lies in bounding $E[R_i]$.
However, for decreasing hazard rate job sizes\footnote{If X is DHR/IHR, then so is $\min(X,Y)$ if
$Y$ is exponential, as $z_{\min(X,Y)}(t) = z_X(t)+z_Y(t)$.} the expected time that a job stays at the head of the waiting room is lower bounded
 by assuming that the job in service just started service (and was not caused by a transfer).
This implies that $E[R_i] \geq E[Y]$, where $Y$ is defined as the minimum of the phase-type job length
and an exponential random variable with parameter $r_{overall}$.  
Hence, 
\[P[X \geq 2] \geq \lambda E[Y] P[X \geq 1] = \lambda^2 E[Y].\] 
The result now follows from \eqref{eq:test}, which implies that stealing is best if $1-\lambda < \lambda^2 E[Y]$.

The argument for the increasing hazard rate case is identical, except that 
$E[R_i]$ is now upper bounded by $E[Y]$. Therefore sharing is best if $1-\lambda > \lambda^2 E[Y]$
\end{proof}

Note that when the job sizes are hyperexponential, so is $Y$ and $E[Y] = \sum_{i=1}^k \frac{p_i}{\mu_i + r_{overall}}$.
The above bounds are tight in case of exponential job sizes only as illustrated in Figure \ref{fig:hazard}.

\begin{corol}[General IHR Sharing bound]
For phase-type job sizes with increasing hazard rate work sharing achieves a lower mean response time than stealing if
\[\lambda < L((1-e^{-r_{overall}})/r_{overall}),\] 
where $L(x)$ is defined by \eqref{eq:L}.
\end{corol}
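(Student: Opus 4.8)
The plan is to derive this from the IHR sharing bound of Proposition~\ref{prop:boundhr} together with a distribution-free upper bound on the quantity $E[Y]$ appearing there, using only the monotonicity of the function $L$ defined in \eqref{eq:L}.

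First I would record a one-line fact about $L$: rationalizing gives $L(x) = \frac{\sqrt{1+4x}-1}{2x} = \frac{2}{\sqrt{1+4x}+1}$, which is strictly decreasing on $(0,\infty)$ with $L(0)=1$. Consequently, to conclude via Proposition~\ref{prop:boundhr} that sharing achieves the lower mean response time whenever $\lambda < L(c)$ for a given constant $c$, it is enough to show that $E[Y]\le c$ for every admissible job size distribution, since then $L(E[Y])\ge L(c) > \lambda$ and Proposition~\ref{prop:boundhr} applies.

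Second I would bound $E[Y]$ uniformly over all phase-type (indeed all) job size distributions with mean~$1$. Writing $\bar H(t)=1-H(t)=\alpha e^{St}\1$ for the job-size survival function and recalling that $Y$ is the minimum of the job size and an independent exponential of rate $r_{overall}$, one has $E[Y]=\int_0^\infty \bar H(t)\,e^{-r_{overall}t}\,dt$, while the mean-$1$ normalization is $\int_0^\infty \bar H(t)\,dt = 1$. Let $\bar H^*(t)=1[t<1]$ be the survival function of the deterministic size $1$. Then $g:=\bar H^*-\bar H$ is nonnegative on $[0,1)$, nonpositive on $[1,\infty)$, and has total integral $0$; since $e^{-r_{overall}t}$ is nonincreasing, splitting at $t=1$ and comparing $e^{-r_{overall}t}$ with $e^{-r_{overall}}$ on each piece gives $\int_0^\infty g(t)e^{-r_{overall}t}\,dt \ge e^{-r_{overall}}\int_0^\infty g(t)\,dt = 0$. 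Hence $E[Y]\le \int_0^\infty \bar H^*(t)e^{-r_{overall}t}\,dt = \int_0^1 e^{-r_{overall}t}\,dt = (1-e^{-r_{overall}})/r_{overall}$.

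Putting the two steps together: if the job size has increasing hazard rate and $\lambda < L((1-e^{-r_{overall}})/r_{overall})$, then $\lambda < L(E[Y])$, so by Proposition~\ref{prop:boundhr} sharing achieves the lower mean response time, which is the claim. I do not anticipate a genuine obstacle; the only point requiring a little care is the rearrangement inequality $\int g(t)e^{-r_{overall}t}\,dt\ge 0$, which rests entirely on $g$ being positive-then-negative with zero integral against a nonincreasing weight, i.e.\ on the deterministic law being the area-$1$ survival function most concentrated near the origin. One may also remark that the bound is essentially tight, approached as the job size distribution tends to the deterministic one (a limit of Erlang, hence phase-type, laws of mean~$1$), and that letting $r_{overall}\to 0$ recovers the exponential bound \eqref{eq:testexp} at $r_{overall}=0$, consistently with Conjecture~\ref{conj:DHR/IHR}.
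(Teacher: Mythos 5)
Your proposal is correct and follows essentially the same route as the paper: both reduce the claim, via the monotonicity of $L$ and Proposition~\ref{prop:boundhr}, to the distribution-free bound $E[Y] \leq (1-e^{-r_{overall}})/r_{overall}$, starting from $E[Y]=\int_0^\infty (1-H(t))e^{-r_{overall}t}\,dt$. The only difference is how that bound is established: the paper rewrites $E[Y]=\frac{1}{r_{overall}}\left(1-E[e^{-r_{overall}X}]\right)$ and invokes Jensen's inequality $E[e^{-r_{overall}X}]\geq e^{-r_{overall}E[X]}=e^{-r_{overall}}$, whereas you prove the same inequality by a single-crossing comparison of the survival function with that of the deterministic unit job size against the nonincreasing weight $e^{-r_{overall}t}$ — in effect an elementary re-derivation of the same convexity fact, with the small side benefit of making explicit that the bound is approached by (Erlang approximations of) deterministic job sizes. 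Your crossing argument is sound (the sign pattern of $g$, the zero total integral from the mean-$1$ normalization, and the split at $t=1$ all check out), so there is no gap; the paper's Jensen step is simply a one-line shortcut for it.
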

\begin{proof}
As $L(x)$ is decreasing in $x$ on $[0,1]$, the result is immediate from Proposition \ref{prop:boundhr} provided that $E[Y] \leq (1-e^{-r_{overall}})/r_{overall}$.
Let $F(x)$ be the CDF of the (phase-type) job size distribution $X$, then
\begin{align*}
E[Y] &= \int_{x=0}^\infty (1-F(x))e^{-r_{overall}x} dx 
\\&= \frac{1}{r_{overall}} (1-E[e^{-r_{overall}X}]).
\end{align*}
By Jensen's inequality $E[e^{-r_{overall}X}] \geq e^{-r_{overall}E[X]}$ with $E[X]=1$, which yields the required upper bound on $E[Y]$.
\end{proof}

We cannot obtain a meaningful general DHR stealing bound in the same manner as $E[Y]$ can be made arbitrarily small, which
yields a stealing bound $\lambda > 1$ (as $L(x)$ approaches $1$ as $x$ tends to zero).

\section{Small probe rates}\label{sec:small}
In this section we characterize the boundary between the stealing and sharing region for 
$r_{overall}$ sufficiently small. In order to do so we first show that the steady state vector
$\pi_\ell(r)$ is continuous in $r$ on $[0,\infty)$, meaning we can study
$\pi_\ell(r)$ for $r$ small by looking at the limit with $r=0$.

In order to establish the continuity we recall the following result: 
\begin{prop}[due to Corollary 3.9.1 of \cite{neuts2}]\label{prop:Neuts}
Let $D_1$ be a matrix with negative diagonal elements, non-negative off-diagonal elements and assume 
$D_1^{-1}$ exists. Let $D_2$ be a non-negative matrix such that $(D_1+D_2)\1=0$. Let $R$ be a 
non-negative matrix with spectral radius $0 < sp(R) < 1$, then
\[I \otimes D_1 + R^T \otimes D_2\] 
is non-singular, where $\otimes$ denotes the Kronecker product and $R^T$ the transposed
matrix of $R$.  
\end{prop}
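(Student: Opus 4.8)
The plan is to prove the statement by contradiction, reducing it to a spectral‑radius computation for a nonnegative Kronecker product. Suppose $I\otimes D_1 + R^T\otimes D_2$ were singular; then there is a nonzero vector $v$ with $(I\otimes D_1 + R^T\otimes D_2)v = 0$. Since $D_1^{-1}$ exists, I would multiply on the left by $I\otimes D_1^{-1}$ and use the mixed‑product property $(I\otimes D_1^{-1})(R^T\otimes D_2) = R^T\otimes(D_1^{-1}D_2)$ to obtain $(I + R^T\otimes D_1^{-1}D_2)v = 0$. Equivalently, $v$ is an eigenvector with eigenvalue $1$ of the matrix $N := R^T\otimes\big((-D_1^{-1})D_2\big)$. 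It then suffices to show that $N$ is nonnegative with spectral radius strictly below one, since in that case $1$ cannot be an eigenvalue of $N$, giving the desired contradiction.

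The key step is to show $-D_1^{-1}\geq 0$, which is where the hypothesis $(D_1+D_2)\1 = 0$ enters. From $D_2\geq 0$ and $(D_1+D_2)\1 = 0$ we get $D_1\1 = -D_2\1 \leq 0$, so $-D_1$ has positive diagonal entries, nonpositive off‑diagonal entries, and nonnegative row sums. Writing $-D_1 = sI - B$ with $s$ chosen larger than $\max_i |(D_1)_{ii}|$, the matrix $B = sI + D_1$ is nonnegative with every row sum at most $s$, so $sp(B)\leq \|B\|_\infty \leq s$; since $-D_1$ is invertible, $s$ is not an eigenvalue of $B$, hence $sp(B) < s$ (the spectral radius of a nonnegative matrix is one of its eigenvalues) and $(-D_1)^{-1} = \frac{1}{s}\sum_{k\geq 0}(B/s)^k \geq 0$. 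This is just the standard fact that an invertible $Z$‑matrix with nonnegative row sums is a nonsingular $M$‑matrix. Consequently $P := (-D_1^{-1})D_2$ is a product of two nonnegative matrices, hence nonnegative, and $P\1 = (-D_1^{-1})(-D_1\1) = \1$; thus $P$ is row‑stochastic and $sp(P) = 1$.

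Finally I would invoke the fact that the eigenvalues of a Kronecker product $A\otimes C$ are exactly the pairwise products of the eigenvalues of $A$ and of $C$, so $sp(N) = sp(R^T)\,sp(P) = sp(R)\cdot 1 = sp(R) < 1$, using $sp(R^T)=sp(R)$ and the hypothesis $0 < sp(R) < 1$. Since also $N = R^T\otimes P \geq 0$, the matrix $N$ has spectral radius below one and therefore cannot admit $1$ as an eigenvalue, contradicting the first paragraph. Hence $I\otimes D_1 + R^T\otimes D_2$ is nonsingular. I expect the only nonroutine point to be the nonnegativity of $-D_1^{-1}$ (a small amount of $M$‑matrix theory); everything else is the mixed‑product property and the multiplicativity of the spectral radius under Kronecker products. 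Alternatively one can simply cite Corollary 3.9.1 of \cite{neuts2} for the whole statement.
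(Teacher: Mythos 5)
Your proof is correct. Note, though, that the paper does not prove this proposition at all: it is stated as a known result and attributed to Corollary 3.9.1 of Neuts' book, exactly as in your closing remark. So your contribution is a self-contained elementary verification of the cited fact, and every step checks out: the reduction via $(I\otimes D_1^{-1})$ and the mixed-product property to the statement that $1$ is an eigenvalue of $N=R^T\otimes\bigl((-D_1^{-1})D_2\bigr)$; the $M$-matrix argument showing $-D_1^{-1}\ge 0$ (using $D_1\1=-D_2\1\le 0$, the splitting $-D_1=sI-B$, Perron--Frobenius to rule out $sp(B)=s$, and the Neumann series); the observation that $P=(-D_1^{-1})D_2$ is row-stochastic, hence $sp(P)=1$ (in fact $sp(P)\le 1$ already suffices); and the spectral identity $sp(A\otimes C)=sp(A)\,sp(C)$ together with $sp(R^T)=sp(R)<1$, which makes the eigenvalue $1$ impossible. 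The hypothesis $sp(R)>0$ is never needed in your argument, which is harmless. Compared with the paper, which buys brevity by outsourcing the statement to Neuts, your route makes the result independent of that reference at the cost of a small amount of nonnegative-matrix theory; the only genuinely nonroutine ingredient is, as you say, the nonnegativity of $-D_1^{-1}$.
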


\begin{theo}
The vector $\pi_\ell(r)$, for $\ell \geq 1$, is continuous in $r$ on $[0,\infty)$.
\end{theo}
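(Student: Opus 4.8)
The plan is to express $\pi_\ell(r)$ through the formula \eqref{eq:steady} and to argue that each factor appearing there is continuous in $r$ on $[0,\infty)$. Since $\pi_\ell(r) = \lambda\, \alpha(\lambda(I-G(r))-S)^{-1}R(r)^{\ell-1}/(\alpha(\lambda(I-G(r))-S)^{-1}(I-R(r))^{-1}\1)$, and the denominator is bounded away from zero (it equals $1/(1-\lambda)$ times a probability-normalization quantity that stays positive for $\lambda<1$, as used in \eqref{eq:lam0}), it suffices to prove that $r \mapsto G(r)$ and $r \mapsto R(r)$ are continuous on $[0,\infty)$ and that the matrix $\lambda(I-G(r))-S$ stays invertible; the remaining operations (matrix inversion, powers, products, the scalar quotient) preserve continuity. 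Invertibility of $\lambda(I-G(r))-S$ was already observed in the paragraph following \eqref{eq:R1}: $G(r)$ is stochastic for $\lambda<1$, so $G(r)-I$ is a generator and $S+\lambda(G(r)-I)$ is a subgenerator, hence nonsingular; and subgenerators form an open set on which inversion is continuous, so this causes no trouble once continuity of $G(r)$ is in hand.

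The heart of the argument is therefore the continuity of $r\mapsto G(r)$, where $G(r)$ is the minimal nonnegative solution of $A_{-1}(r)+A_0(r)G(r)+A_1 G(r)^2=0$ with $A_{-1}(r)=\mu\alpha+(1-\lambda)rI$, $A_0(r)=S-(\lambda+(1-\lambda)r)I$, $A_1=\lambda I$. I would invoke the implicit function theorem: define $F(G,r)=A_{-1}(r)+A_0(r)G+A_1G^2$, a polynomial (hence $C^\infty$) map. Its partial derivative with respect to $G$, evaluated at a solution, is the linear operator $\Delta \mapsto A_0(r)\Delta + A_1(G(r)\Delta + \Delta G(r))$; writing this in vectorized form using $\mathrm{vec}(\cdot)$ and Kronecker products, it becomes $I\otimes A_0(r) + (G(r)^T\otimes A_1) + \ldots$, and with $A_1=\lambda I$ one can reorganize it into the shape $I\otimes D_1 + (G(r)^T)\otimes D_2$ covered by Proposition \ref{prop:Neuts}. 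Concretely, one checks that $D_1 := A_0(r)+\lambda G(r)$ has negative diagonal, nonnegative off-diagonal entries and is invertible (it is $S-(\lambda+(1-\lambda)r)I+\lambda G(r)$, a subgenerator), that $D_2:=\lambda I \geq 0$, and that $(D_1+D_2)\1 = (A_{-1}(r)+A_0(r)+A_1)G(r)\1$-type row sums vanish because $G(r)$ is stochastic and $A_{-1}(r)+A_0(r)+A_1 = S+\mu\alpha+(1-\lambda)rI - (1-\lambda)rI = S+\mu\alpha$ has zero row sums; the spectral radius of $G(r)$ is $1$, which is the one delicate point — but by restricting to $r$ in a neighborhood and to queues that are positive recurrent one works with the appropriate companion matrix $R(r)$, whose spectral radius is strictly in $(0,1)$, and applies the proposition to $R(r)$ rather than $G(r)$ via the duality $A_1 G(r) = R(r) A_{-1}(r)$. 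By Proposition \ref{prop:Neuts} this Jacobian is nonsingular, so the implicit function theorem gives a unique $C^\infty$ (in particular continuous) local solution branch; because $G(r)$ is the \emph{minimal} nonnegative solution and minimality is preserved along the continuous branch (standard for QBDs, e.g.\ \cite{neuts2,latouche1}), this local branch coincides with $r\mapsto G(r)$, yielding continuity on all of $[0,\infty)$. Continuity of $R(r)$ then follows from $R(r) = \lambda G(r)((1-\lambda)rI+\mu\alpha)^{-1}$ (equivalently $\lambda G(r)=R(r)A_{-1}(r)$), noting $A_{-1}(r)$ is invertible for $\lambda<1$ since $\mu\alpha$ is nonnegative with positive entries on a full row-support and $(1-\lambda)rI$ only helps.

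The main obstacle I anticipate is the spectral-radius hypothesis in Proposition \ref{prop:Neuts}: it requires $sp(R)\in(0,1)$ strictly, whereas the natural companion object $G(r)$ is stochastic with spectral radius exactly $1$. The clean way around it is to phrase the nonsingularity of the Jacobian in terms of $R(r)$, which does have spectral radius in $(0,1)$ whenever $\lambda<1$ (guaranteeing positive recurrence of $Q(r)$ for every $r\geq 0$, as noted after \eqref{eq:R1}), and to translate the fixed-point equation for $G$ into the corresponding one for $R$; alternatively one checks directly that $I\otimes(A_0(r)+\lambda G(r)) + G(r)^T\otimes(\lambda I)$ is nonsingular by a Perron–Frobenius / irreducibility argument, using that $A_0(r)+\lambda G(r)$ is a proper (strictly substochastic-in-rate) subgenerator so that $-(A_0(r)+\lambda G(r))^{-1}$ exists and the relevant spectral radius of $(-(A_0(r)+\lambda G(r))^{-1}\lambda I)$-type product is below $1$. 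Apart from that point the proof is a routine chain of "continuous functions of continuous functions," and I would keep the write-up short by citing Proposition \ref{prop:Neuts} for the Jacobian and the implicit function theorem for the conclusion.
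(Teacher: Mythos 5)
Your overall strategy --- implicit function theorem on the quadratic matrix equation, with Proposition \ref{prop:Neuts} supplying nonsingularity of the Jacobian --- is the paper's strategy, but the version you actually set up (the equation for $G$) does not go through, and for more reasons than the one you flag. For the $G$-equation the vectorized Jacobian is $I\otimes(A_0(r)+\lambda G(r)) + G(r)^T\otimes \lambda I$, so the matrix playing the role of $R$ in Proposition \ref{prop:Neuts} is $G(r)$ and the pair $(D_1,D_2)$ is $(A_0(r)+\lambda G(r),\,\lambda I)$. Besides $sp(G(r))=1$, which you notice, the row-sum hypothesis also fails: $(D_1+D_2)\1 = \bigl(S-(\lambda+(1-\lambda)r)I+\lambda G(r)+\lambda I\bigr)\1 = -\mu+\lambda\1-(1-\lambda)r\1 \neq 0$ in general; the cancellation to $S+\mu\alpha$ that you invoke happens only when $D_2=A_{-1}(r)$, i.e.\ for the equation satisfied by $R$, so your claimed verification of that hypothesis is incorrect as written. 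In addition, your route back from $G$ to $R$ via $R(r)=\lambda G(r)A_{-1}(r)^{-1}$ breaks exactly at the endpoint $r=0$ that the theorem must cover (and that Section \ref{sec:small} relies on): $A_{-1}(0)=\mu\alpha$ is rank one, hence singular whenever $n\geq 2$.

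Consequently the ``fallback'' you sketch is not an alternative but the proof itself, and it is what the paper does: apply the implicit function theorem directly to $f(X,r)=A_1+XA_0(r)+X^2A_{-1}(r)$, which vanishes at $(R(r),r)$; its Jacobian there equals $I\otimes(A_0(r)+\lambda G(r))^T + R(r)\otimes A_{-1}(r)^T$ (using $R(r)A_{-1}(r)=\lambda G(r)$), and now every hypothesis of Proposition \ref{prop:Neuts} holds: $A_0(r)+\lambda G(r)$ is an invertible subgenerator, $A_{-1}(r)\geq 0$, $(A_0(r)+\lambda G(r)+A_{-1}(r))\1=0$, and $sp(R(r))\in(0,1)$ for $\lambda<1$. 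From there, continuity of $R_1(r)$ and $\pi_\ell(r)$ follows as you say, since $\lambda(I-G(r))-S=\lambda I-S-R(r)A_{-1}(r)$ remains (minus) an invertible subgenerator and the normalizing denominator in \eqref{eq:steady} stays positive. One point affects your write-up and the paper's proof alike: the implicit function theorem only gives a branch that is locally unique near $(R(r),r)$, so identifying that branch with the minimal nonnegative solution $R(r')$ for nearby $r'$ still needs the minimality/monotone-iteration argument you allude to; that step should be made explicit rather than dismissed as routine.
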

\begin{proof}
We show that the matrix $R(r)$ is continuous on $[0,\infty)$, from which the continuity of $R_1(r)$ and
$\pi_\ell(r)$ follow. Consider the map $f$ from $\mathbb{R}^{m^2+1}$ to $\mathbb{R}^{m^2}$ that maps
$(X,r)$, where $X$ is a square matrix of size $m$, to $A_1+XA_0(r)+X^2A_{-1}(r)$. Note that
$(R(r),r)$ is mapped to zero by $f$. Let $J_f$ be the Jacobian of $f$ such that
$J_f(X,r) =[Y(X,r) | Z(X,r)]$ where $Y(X,r)$ is a square matrix of size $m^2$ with entry $((i,j),(i',j'))$ equal
to the partial derivative $\partial (A_1+XA_0(r)+X^2A_{-1})_{i,j}/\partial x_{i',j'}$. 
It is easy to verify that 
\[Y(X,r) = (I \otimes (A_0(r) + X A_{-1}(r))^T) + (X \otimes A_{-1}(r)^T).\]
As $(R(r),r)$ is a zero of $f$, the implicit function theorem states that there exists an open set $U$ containing $r$ 
such that there exists a unique continuously differentiable function $g$ from $U$ to $\mathbb{R}^{m^2}$ such that 
$f(g(r'),r')=0$ for any $r' \in U$ provided that $Y(R(r),r)$ is non-singular.
Now, as $R(r) A_{-1}(r)  = \lambda G(r)$ (see Section \ref{sec:MPH1}), we have
\begin{align*}
Y&(R(r),r) = \\
&(I \otimes (A_0(r) + R(r)A_{-1}(r))^T) + (R(r) \otimes A_{-1}(r)^T) =\\
&(I \otimes (A_0(r) + \lambda G(r))^T) + (R(r) \otimes A_{-1}(r)^T), 
\end{align*}
where $R(r)$ is a non-negative matrix with spectral radius $sp(R(r))\in (0,1)$,
$A_0(r) + \lambda G(r) = (S - (1-\lambda)r I+ \lambda(G(r)-I))$ has negative diagonal elements, non-negative
off-diagonal elements and is invertible (as it is a subgenerator matrix), while
$A_{-1}(r) = \mu \alpha +(1-\lambda)r I$ is a non-negative matrix and $(A_0(r) + \lambda G(r) + A_{-1}(r))\1 =
(\mu \alpha + S)\1 + \lambda(G(r)-I)\1 = 0$
as $G(r)\1 = \1$ and $\mu \alpha +S$ is a generator matrix. The non-singularity of $Y(R(r),r)$ therefore follows from
Proposition \ref{prop:Neuts}. 
\end{proof}

The following basic result on the M/G/1 queue is used in combination with the continuity and Theorem \ref{th:comp} to describe
the stealing/sharing boundary for $r_{overall}$ sufficiently small:

\begin{prop}\label{lemma:MG1_pi1}
The probability $\pi_{1-}$ that the queue length is at most one in an M/G/1 queue with arrival rate $\lambda$
and mean service time $1$ is given by $(1-\lambda)/g(\lambda)$, where $g(s)$ is the 
Laplace transform of the service time. Further $\pi_{1-} \leq (1-\lambda)e^\lambda$
(with equality for deterministic service).
\end{prop}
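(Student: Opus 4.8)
The plan is to prove the two claims separately. For the identity $\pi_{1-} = (1-\lambda)/g(\lambda)$, I would use the standard decomposition of the M/G/1 queue. Recall that in an M/G/1 queue the stationary queue length $X$ (number in system) satisfies the Pollaczek--Khinchine transform relation. The generating function of the queue length is
\begin{align}\label{eq:PKtransform}
\E[x^X] = \frac{(1-\lambda)(1-x) g(\lambda(1-x))}{g(\lambda(1-x)) - x},
\end{align}
where $g$ is the Laplace transform of the service time. Now $\pi_{1-} = \P[X=0] + \P[X=1]$, so I would extract the first two coefficients. We have $\P[X=0] = 1-\lambda$ directly (the server is idle with probability $1-\lambda$). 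For $\P[X=1]$, one differentiates \eqref{eq:PKtransform} at $x=0$, or more cleanly uses the well-known embedded-chain fact that the stationary number left behind by a departure equals the stationary number in system, combined with the observation that the number in system is $0$ or $1$ precisely when a departing customer leaves behind $0$ customers, which in turn corresponds to the event that no Poisson arrivals occurred during that customer's service. The probability that no arrival occurs during a service time is exactly $\E[e^{-\lambda B}] = g(\lambda)$ by conditioning on the service length $B$. Since the long-run departure rate of ``empty-leaving'' customers equals $\lambda \pi_{1-}$ (arrivals that see $0$ in system, which by PASTA has probability $1-\lambda$, wait, let me be careful)... the cleanest route is: a customer leaves behind $0$ customers iff it arrived to an empty system \emph{and} no arrival came during its service, OR it arrived to a system with one customer in service and... this gets complicated, so I would instead just do the transform computation: expand $g(\lambda(1-x)) = g(\lambda) + \lambda x\, g'(\lambda)(-1)\cdot(-1) + O(x^2)$, more precisely $g(\lambda(1-x)) = \sum_k \frac{g^{(k)}(\lambda)}{k!}(-\lambda x)^k \cdot (-1)^k$, and plug into \eqref{eq:PKtransform}, matching the coefficient of $x^0$ and $x^1$ to read off $\P[X\le 1]$.

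Actually, there is a slicker identity-level argument worth trying first. Evaluate \eqref{eq:PKtransform} and note that $\pi_{1-}$ can be obtained as a suitable limit. Alternatively, the simplest self-contained derivation: by PASTA, $\P[X=0]=1-\lambda$. A customer departs leaving exactly $k$ behind; the stationary distribution of this equals that of $X$. A departure leaves $0$ behind iff the queue it found (including itself) was emptied during its sojourn with no new arrival in its own service period \emph{after} the queue hit size one --- equivalently, iff during the service of that customer no Poisson arrival occurs. Hmm, that's only true if the customer was alone upon starting service. I'll commit to the transform computation as the rigorous path, since it is routine: matching coefficients in \eqref{eq:PKtransform} gives $\P[X=1] = (1-\lambda)(1/g(\lambda) - 1)$ after simplification, hence $\pi_{1-} = (1-\lambda) + (1-\lambda)(1/g(\lambda)-1) = (1-\lambda)/g(\lambda)$.

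For the bound $\pi_{1-} \le (1-\lambda)e^\lambda$, I would apply Jensen's inequality to the convex function $u \mapsto e^{-\lambda u}$: we have $g(\lambda) = \E[e^{-\lambda B}] \ge e^{-\lambda \E[B]} = e^{-\lambda}$, since $\E[B]=1$. Therefore $1/g(\lambda) \le e^{\lambda}$, giving $\pi_{1-} = (1-\lambda)/g(\lambda) \le (1-\lambda)e^\lambda$. Equality in Jensen holds iff $B$ is deterministic, i.e. $B \equiv 1$, which gives the stated equality case for deterministic service.

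The main obstacle is establishing the exact coefficient extraction cleanly without sign errors in the Pollaczek--Khinchine expansion; once $\pi_{1-} = (1-\lambda)/g(\lambda)$ is in hand, the inequality is an immediate one-line Jensen argument. One should also double-check the convention for $g$ (Laplace transform $g(s) = \E[e^{-sB}]$) matches the paper's usage --- the excerpt writes ``$g(s)$ is the Laplace transform of the service time,'' which is consistent --- and confirm the queue-length convention (number in system, so that ``queue length at most one'' means the server busy with at most one job present, consistent with the paper's use of $\pi_{1-}$ and $\pi_{2+}$ throughout Section~\ref{sec:MPH1}).
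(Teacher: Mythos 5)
Your proposal is correct and follows essentially the same route as the paper: extracting $\P[X\le 1]$ from the Pollaczek--Khinchine generating function (the paper evaluates $\pi(z)$ and its derivative at $z=0$, which is exactly your coefficient matching and yields $\P[X=1]=(1-\lambda)(1/g(\lambda)-1)$), then applying Jensen's inequality to get $g(\lambda)\ge e^{-\lambda}$ with equality for deterministic service. The probabilistic detours you considered are unnecessary, but the path you committed to is the paper's proof.
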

\begin{proof}
Let $\pi(z)$ be the generating function of the queue length distribution of an M/G/1 queue with mean service time $1$.
The Pollaczek-Khinchin formula states that
\[\pi(z) = \frac{(1-z)(1-\lambda)g(\lambda(1-z))}{g(\lambda(1-z))-z}.\]
The first result can be obtained by evaluating the derivative of $\pi(z)$ in $z=0$.
The inequality $e^{-\lambda} \leq g(\lambda)$ follows from Jensen's inequality (as it implies that $e^{E[X]} \leq E[e^X]$
for any random variable $X$).
\end{proof}

We can now show that Conjectures \ref{conj:pullbound} and \ref{conj:DHR/IHR} hold for $r$ tending to zero.  

\begin{prop}\label{prop:small}
For $r$ tending to zero stealing is best if  $\lambda >\nu \approx 0.6589$, where
$\nu$ is the unique solution of $\lambda/(1-\lambda) = e^\lambda$ in $(0,1)$.
\end{prop}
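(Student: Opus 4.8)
The plan is to read off the comparison test from Theorem~\ref{th:comp}, pass to the limit $r\to 0$ using the continuity of $\pi_\ell(r)$ just established, and then recognise the resulting quantity as a feature of an ordinary M/G/1 queue. By Theorem~\ref{th:comp}, stealing achieves a lower mean response time than sharing exactly when $\pi_1(r_{overall}/(1-\lambda))\1 < 2\lambda-1$. Since $r_{overall}/(1-\lambda)\to 0$ as $r_{overall}\to 0$ and, by the preceding theorem, $\pi_1(\cdot)\1$ is continuous on $[0,\infty)$, it suffices to study the strict inequality $\pi_1(0)\1 < 2\lambda-1$: if it holds, it persists on a neighbourhood of $r_{overall}=0$, so stealing is best for all sufficiently small probe rates.

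Next I would identify the $r=0$ limit of the chain. From \eqref{eq:lam0v2} (equivalently \eqref{eq:lam0v3}) we get $\lambda_0(0)=\lambda$, so $Q(0)$ is precisely the generator of a standard M/PH/1 queue with arrival rate $\lambda$ and phase-type service $(\alpha,S)$ of mean $1$; hence $\pi_0(0)=1-\lambda$ and $\pi_1(0)\1=P[X=1]$ in that queue. Proposition~\ref{lemma:MG1_pi1} gives $P[X\le 1]=(1-\lambda)/g(\lambda)$ with $g$ the Laplace transform of the service time, so $\pi_1(0)\1=(1-\lambda)(1/g(\lambda)-1)$. Substituting this into $\pi_1(0)\1<2\lambda-1$ and clearing the positive factor $g(\lambda)$, the inequality collapses to the clean condition $g(\lambda)>(1-\lambda)/\lambda$; i.e. in the $r\to0$ regime stealing is best iff $g(\lambda)>(1-\lambda)/\lambda$ and sharing is best iff $g(\lambda)\le(1-\lambda)/\lambda$.

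Finally I would bound $g(\lambda)$ from below by Jensen's inequality, $g(\lambda)=E[e^{-\lambda X}]\ge e^{-\lambda E[X]}=e^{-\lambda}$ (this is where the deterministic distribution, for which equality holds, sits at the boundary, consistent with Conjecture~\ref{conj:pullbound}). Thus $g(\lambda)>(1-\lambda)/\lambda$ whenever $e^{-\lambda}>(1-\lambda)/\lambda$, i.e. whenever $\lambda/(1-\lambda)>e^{\lambda}$. To locate the threshold, set $h(\lambda)=\ln\lambda-\ln(1-\lambda)-\lambda$; then $h'(\lambda)=\frac{1}{\lambda(1-\lambda)}-1>0$ on $(0,1)$ since $\lambda(1-\lambda)\le 1/4$, so $h$ is strictly increasing from $-\infty$ to $+\infty$ and has a unique root $\nu$, numerically $\nu\approx 0.6589$, characterised by $\nu/(1-\nu)=e^{\nu}$. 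For $\lambda>\nu$ we get $h(\lambda)>0$, hence $\lambda/(1-\lambda)>e^\lambda$, hence $g(\lambda)\ge e^{-\lambda}>(1-\lambda)/\lambda$, which is the strict inequality $\pi_1(0)\1<2\lambda-1$; by continuity it extends to small $r_{overall}$, proving the claim.

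There is no serious obstacle here once the earlier machinery is in place: the two points that need care are (i) justifying the limit rigorously — handled by the continuity theorem together with the fact that the key inequality is \emph{strict} for $\lambda>\nu$, so it survives the perturbation from $r=0$ — and (ii) observing that at $r=0$ the M/PH/1 queue with negative arrivals degenerates to the plain M/PH/1 queue, which is what licenses the use of Proposition~\ref{lemma:MG1_pi1}. Everything else is the short algebraic reduction to $g(\lambda)>(1-\lambda)/\lambda$ and the elementary monotonicity argument pinning down $\nu$.
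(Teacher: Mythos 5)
Your proof is correct and follows essentially the same route as the paper: pass to the $r=0$ limit via the continuity theorem, identify the limiting chain as a plain M/PH/1 queue, and apply Proposition~\ref{lemma:MG1_pi1} (whose second statement already packages the Jensen bound $g(\lambda)\ge e^{-\lambda}$ you re-derive) together with the test of Theorem~\ref{th:comp} to reduce to $\lambda/(1-\lambda)>e^{\lambda}$. Your handling of the strictness of the inequality and its persistence for small $r_{overall}$ is a slightly more explicit version of what the paper leaves implicit.
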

\begin{proof}
Proposition \ref{lemma:MG1_pi1} implies that as $r$ tends to zero, $\pi_{2+}(r)$ is lower bounded by
$1 - (1-\lambda)e^\lambda $. The unique solution of \eqref{eq:test} is therefore smaller
than the unique solution $\nu$ to $1-\lambda = 1 - (1-\lambda)e^\lambda $.
\end{proof}

\begin{prop}\label{prop:dhr}
For $r$ tending to zero and a phase-type job size distribution with decreasing (increasing) hazard rate, 
stealing (sharing) is best if  $\lambda >\phi-1$ ($\lambda < \phi-1$), where
$\phi$ is the golden ratio.
\end{prop}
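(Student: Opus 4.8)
The plan is to deduce Proposition~\ref{prop:dhr} directly from the DHR/IHR bounds of Proposition~\ref{prop:boundhr} by letting $E[Y]$ tend to its limit as $r_{overall}\to 0$. Recall that $Y$ there denotes the minimum of the job size $X$ and an independent $\mathrm{Exp}(r_{overall})$ clock, so $P[Y>t]=\alpha e^{St}\1\,e^{-r_{overall}t}$ and hence $E[Y]=-\alpha(S-r_{overall}I)^{-1}\1=\int_0^\infty (1-H(x))e^{-r_{overall}x}\,dx$. By monotone convergence this increases to $\int_0^\infty (1-H(x))\,dx=E[X]=1$ as $r_{overall}\downarrow 0$, and $E[Y]\in(0,1]$ for every $r_{overall}\ge 0$. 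Rationalizing, $L(x)=2/(\sqrt{1+4x}+1)$, which is continuous and strictly decreasing on $[0,1]$ with $L(1)=2/(\sqrt5+1)=(\sqrt5-1)/2=\phi-1$; hence $L(E[Y])\ge\phi-1$ for every $r_{overall}\ge 0$ and $L(E[Y])\downarrow\phi-1$ as $r_{overall}\downarrow 0$.

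Given this, the IHR part is immediate, and in fact holds for every $r_{overall}$: Proposition~\ref{prop:boundhr} states that sharing is best whenever $\lambda<L(E[Y])$, and since $L(E[Y])\ge\phi-1$, sharing is best whenever $\lambda<\phi-1$; a fortiori this holds in the limit $r\to 0$. For the DHR part I would fix $\lambda>\phi-1$; since $L(E[Y])\to\phi-1<\lambda$, there is an $r_0>0$ with $L(E[Y])<\lambda$ for all $r_{overall}<r_0$, and then Proposition~\ref{prop:boundhr} gives that stealing is best, which is exactly the assertion for $r$ tending to zero.

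I do not expect a genuine obstacle: the statement is essentially the limiting specialization of Proposition~\ref{prop:boundhr}. The only step that needs (routine) justification is the interchange of limit and integral in $\lim_{r_{overall}\to 0}E[Y]=1$, which follows from monotone convergence since $(1-H(x))e^{-r_{overall}x}\uparrow (1-H(x))$ pointwise as $r_{overall}\downarrow 0$ (equivalently, one may invoke continuity of the matrix inverse to pass from $-\alpha(S-r_{overall}I)^{-1}\1$ to $-\alpha S^{-1}\1$). One must also keep the direction of the monotonicity of $L$ straight: it is precisely $E[Y]\le 1$ that forces $L(E[Y])\ge\phi-1$, which is what makes $\phi-1$ the sharp limiting threshold in both the DHR and IHR cases.
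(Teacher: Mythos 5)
Your proposal is correct and follows essentially the same route as the paper, which deduces the statement directly from Proposition~\ref{prop:boundhr} using the fact that $E[Y]\to 1$ as $r_{overall}\to 0$ and $L(1)=\phi-1$. Your added details (monotone convergence for $E[Y]\uparrow 1$, monotonicity of $L$, and the uniform-in-$r$ strengthening of the IHR case) are sound elaborations of the paper's one-line argument.
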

\begin{proof}
The result is immediate from Proposition \ref{prop:boundhr} as $E[Y]$ tends to $1$ as $r_{overall}$ tends to zero.
\end{proof}

We end this section by characterizing the limit of the stealing/sharing boundary when $r$ tends to zero for
Erlang, hypoexponential and hyperexponential distributions.

\begin{prop}\label{prop:smallErlang}
For $r$ tending to zero and Erlang-$k$ job sizes with mean one, 
sharing is best if and only if $\lambda < \lambda_k^*$, where $\lambda_k^*$ 
is the unique solution of $\lambda/(1-\lambda) = (1+\lambda/k)^k$ in $(0,1)$. 
Further, the sequence $(\lambda_k^*)_k$ increases to the unique solution $\nu \approx 0.6589$ of
$\lambda/(1-\lambda) = e^\lambda$.
\end{prop}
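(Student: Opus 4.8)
The plan is to apply Theorem \ref{th:comp} in the limit $r_{overall} \to 0$. By the continuity of $\pi_\ell(r)$ in $r$ on $[0,\infty)$ (established above) and Theorem \ref{th:comp}, sharing is best in the limit $r_{overall}\to 0$ precisely when $1-\lambda > \pi_{2+}(0)\1$, where $\pi_{2+}(0)\1$ is now the probability of having two or more jobs in the M/PH/1 queue with $r=0$, i.e.\ an ordinary M/PH/1 (here M/Erlang-$k$/1) queue with arrival rate $\lambda$ and mean service time one. So the first step is to reduce the statement to computing $\pi_{1-}$, the probability of at most one job, in the M/Erlang-$k$/1 queue, and then to solve $1-\lambda = 1 - \pi_{1-}$, i.e.\ $\pi_{1-} = 1-\lambda$.

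The second step is to evaluate $\pi_{1-}$ for Erlang-$k$ service using Proposition \ref{lemma:MG1_pi1}: $\pi_{1-} = (1-\lambda)/g(\lambda)$, where $g(s)$ is the Laplace transform of the Erlang-$k$ distribution with mean one. Since the Erlang-$k$ with mean one is a sum of $k$ exponentials each with rate $k$, we have $g(s) = (k/(k+s))^k$, hence $g(\lambda) = (k/(k+\lambda))^k = 1/(1+\lambda/k)^k$. Substituting, $\pi_{1-} = (1-\lambda)(1+\lambda/k)^k$. Setting $\pi_{1-} = 1-\lambda$ gives $(1+\lambda/k)^k = 1$ only at $\lambda=0$, so instead we want the boundary from Theorem \ref{th:comp}: sharing is best iff $1-\lambda > \pi_{2+}(0)\1 = 1 - \pi_{1-} = 1-(1-\lambda)(1+\lambda/k)^k$, i.e.\ iff $(1-\lambda)(1+\lambda/k)^k > \lambda$, i.e.\ iff $(1+\lambda/k)^k > \lambda/(1-\lambda)$. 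The left side is increasing in $\lambda$ more slowly than $\lambda/(1-\lambda)$ near the relevant crossing — more precisely, at $\lambda=0$ the left side is $1$ and the right side is $0$, while as $\lambda \to 1$ the left side stays bounded and the right side diverges — so there is a unique root $\lambda_k^*\in(0,1)$ of $\lambda/(1-\lambda) = (1+\lambda/k)^k$, and sharing is best iff $\lambda < \lambda_k^*$. This gives the first claim.

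The third step is the monotonicity and limit of $(\lambda_k^*)_k$. I would fix $\lambda\in(0,1)$ and show $k \mapsto (1+\lambda/k)^k$ is strictly increasing in $k$ (the standard fact that $(1+x/k)^k \uparrow e^x$), so the curve $y = (1+\lambda/k)^k$ rises pointwise toward $y=e^\lambda$ as $k$ increases. Since the crossing of this increasing family of curves with the fixed increasing curve $y=\lambda/(1-\lambda)$ moves to the right as the family rises, $\lambda_k^*$ is increasing in $k$; and because $(1+\lambda/k)^k \to e^\lambda$ uniformly on compact subsets of $(0,1)$, the limit of $\lambda_k^*$ is the root $\nu$ of $\lambda/(1-\lambda) = e^\lambda$, which matches Proposition \ref{prop:small}. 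A small amount of care is needed to turn ``the curves rise, so the crossing moves right'' into a rigorous monotonicity argument: since $(1+\lambda/k)^k < (1+\lambda/(k+1))^{k+1}$ for each $\lambda$, evaluating the inequality $(1+\lambda/k)^k > \lambda/(1-\lambda)$ at $\lambda = \lambda_k^*$ shows the $(k+1)$-version is strictly satisfied there, and since the $(k+1)$-version of the gap $(1+\lambda/(k+1))^{k+1} - \lambda/(1-\lambda)$ is positive at $\lambda_k^*$ and negative for $\lambda$ close to $1$, its root $\lambda_{k+1}^*$ exceeds $\lambda_k^*$.

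The main obstacle is essentially bookkeeping rather than conceptual: one must be careful that the $r\to 0$ limit is taken correctly (legitimized by the continuity theorem, so that $\pi_{2+}(r)\1 \to \pi_{2+}(0)\1$ and the boundary condition passes to the limit), and that the uniqueness of $\lambda_k^*$ and $\nu$ in $(0,1)$ is genuinely argued from the monotonicity of $\lambda/(1-\lambda)$ against the bounded curve, rather than merely asserted. Everything else — the Laplace transform of Erlang-$k$, the substitution into Proposition \ref{lemma:MG1_pi1}, and the limit $(1+\lambda/k)^k \to e^\lambda$ — is routine.
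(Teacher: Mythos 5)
Your proposal is correct and follows essentially the same route as the paper: pass to the $r\to 0$ limit via the continuity result and Theorem \ref{th:comp}, evaluate $\pi_{1-}$ for Erlang-$k$ service through Proposition \ref{lemma:MG1_pi1} with $g(\lambda)=(1+\lambda/k)^{-k}$ to obtain the condition $\lambda/(1-\lambda)<(1+\lambda/k)^k$, and then use the monotone convergence $(1+\lambda/k)^k\uparrow e^\lambda$ to get that $\lambda_k^*$ increases to $\nu$. Your added care in turning the pointwise increase of the curves into monotonicity of the roots is a slightly more explicit version of the paper's one-line derivative argument, not a different method.
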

\begin{proof}
As the Laplace transform of the Erlang-$k$ distribution with mean $1$ is given by $
(k/(k+\lambda))^k =(1+\lambda/k)^{-k}$,
Proposition \ref{lemma:MG1_pi1} and the continuity imply that 
\[\lim_{r \rightarrow 0} (1-\pi_{2+}(r)\1) = (1-\lambda)\left(1+\frac{\lambda}{k}\right)^k.\]
As such \eqref{eq:test} indicates that sharing is best, for $r$ tending to zero,
if and only if
\[\frac{\lambda}{1-\lambda} < \left(1+\frac{\lambda}{k}\right)^k.\]
Hence $\lambda_k^*$ is the unique solution in $(0,1)$ of $\lambda/(1-\lambda)=(1+\lambda/k)^k$. Further,
$\frac{d}{dk}(1+\lambda/k)^k > 0$ for $\lambda \in (0,1)$ and $\lim_{k \rightarrow \infty} (1+\lambda/k)^k  = e^\lambda$.
\end{proof}

\begin{prop}
For $r$ tending to zero and hypoexponential job sizes with $k$ phases and mean one, 
sharing is best if and only if $\lambda < \lambda^*$, where $\lambda^*$ 
is the unique positive solution of $\lambda/(1-\lambda) = \prod_{i=1}^k (1+\lambda/\mu_i)$ in $(0,1)$. 
Further, $\phi-1 \leq \lambda^* \leq \lambda^*_k$, where $\phi$ is the golden ratio and $\lambda_k^*$ is
defined in Proposition \ref{prop:smallErlang}.
\end{prop}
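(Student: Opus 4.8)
The plan is to follow the proof of Proposition \ref{prop:smallErlang}. A hypoexponential distribution with $k$ phases is the sum of $k$ independent exponentials with rates $\mu_1,\ldots,\mu_k$, and the mean-one normalization reads $\sum_{i=1}^k 1/\mu_i=1$; its Laplace transform is therefore $g(s)=\prod_{i=1}^k \mu_i/(\mu_i+s)=\bigl(\prod_{i=1}^k(1+s/\mu_i)\bigr)^{-1}$. Combining Proposition \ref{lemma:MG1_pi1} with the continuity of $\pi_\ell(r)$ in $r$ on $[0,\infty)$ established just above, I obtain
\[
\lim_{r\to 0}\bigl(1-\pi_{2+}(r)\1\bigr)=\frac{1-\lambda}{g(\lambda)}=(1-\lambda)\prod_{i=1}^k\Bigl(1+\frac{\lambda}{\mu_i}\Bigr),
\]
so by the test \eqref{eq:test} of Theorem \ref{th:comp}, for $r$ tending to zero sharing beats stealing if and only if $\lambda/(1-\lambda)<\prod_{i=1}^k(1+\lambda/\mu_i)$.

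Next I would turn this into a threshold statement. Let $f(\lambda)=\ln\lambda-\ln(1-\lambda)-\sum_{i=1}^k\ln(1+\lambda/\mu_i)$ on $(0,1)$, so that sharing is best iff $f(\lambda)<0$. Then $f(0^+)=-\infty$, $f(1^-)=+\infty$, and
\[
f'(\lambda)=\frac1\lambda+\frac1{1-\lambda}-\sum_{i=1}^k\frac1{\mu_i+\lambda}\ge 4-\sum_{i=1}^k\frac1{\mu_i}=3>0,
\]
using $1/\lambda+1/(1-\lambda)\ge 4$ on $(0,1)$ and $1/(\mu_i+\lambda)\le 1/\mu_i$. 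Hence $f$ is strictly increasing with a unique zero $\lambda^*\in(0,1)$, which is exactly the unique solution of $\lambda/(1-\lambda)=\prod_{i=1}^k(1+\lambda/\mu_i)$, and sharing is best iff $\lambda<\lambda^*$.

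The two bounds follow from elementary inequalities on $\prod_{i=1}^k(1+\lambda/\mu_i)$ under the constraint $\sum_i 1/\mu_i=1$. Expanding the product gives $\prod_{i=1}^k(1+\lambda/\mu_i)\ge 1+\lambda\sum_i 1/\mu_i=1+\lambda$, the value attained by a single exponential, whose small-$r$ threshold solves $\lambda/(1-\lambda)=1+\lambda$, i.e.\ $\lambda=\phi-1$; writing $f_{\mathrm{exp}}(\lambda)=\ln\lambda-\ln(1-\lambda)-\ln(1+\lambda)$ this inequality reads $f\le f_{\mathrm{exp}}$, so $f(\phi-1)\le f_{\mathrm{exp}}(\phi-1)=0$ and the monotonicity of $f$ yields $\lambda^*\ge\phi-1$. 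By the concavity of $\ln(1+\cdot)$ (equivalently AM--GM), $\prod_{i=1}^k(1+\lambda/\mu_i)\le\bigl(1+\tfrac\lambda k\sum_i 1/\mu_i\bigr)^k=(1+\lambda/k)^k$, the value for the Erlang-$k$ distribution; with $f_k(\lambda)=\ln\lambda-\ln(1-\lambda)-k\ln(1+\lambda/k)$ (whose zero is $\lambda_k^*$, cf.\ the proof of Proposition \ref{prop:smallErlang}) this reads $f\ge f_k$, so $f(\lambda_k^*)\ge f_k(\lambda_k^*)=0$ and monotonicity yields $\lambda^*\le\lambda_k^*$.

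There is no genuinely hard step here. The two points that require care are justifying that the limit $r\to 0$ may be taken inside $\pi_{2+}$, which is precisely the continuity theorem proved immediately before, and the strict monotonicity of $f$ (equivalently, the uniqueness of $\lambda^*$), which reduces to the elementary estimate $f'(\lambda)\ge 3$. As an alternative route to $\phi-1\le\lambda^*$ one can instead invoke that a hypoexponential density is log-concave, being a convolution of log-concave exponential densities, hence has an increasing hazard rate, so that Proposition \ref{prop:dhr} already gives that sharing is best when $\lambda<\phi-1$.
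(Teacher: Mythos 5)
Your proposal is correct and follows essentially the same route as the paper: identify $1/g(\lambda)=\prod_{i=1}^k(1+\lambda/\mu_i)$, combine Proposition \ref{lemma:MG1_pi1} with continuity and the test \eqref{eq:test} as in Proposition \ref{prop:smallErlang}, and then sandwich the product via $1+\lambda\le\prod_{i=1}^k(1+\lambda/\mu_i)\le(1+\lambda/k)^k$ (your AM--GM/concavity step is the paper's ``product with fixed sum is maximized at equal factors''). Your explicit monotonicity estimate $f'(\lambda)\ge 3$ and the alternative IHR argument for $\lambda^*\ge\phi-1$ are correct additional details beyond what the paper writes out, but they do not change the approach.
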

\begin{proof}
The first part of proof is identical to Proposition \ref{prop:smallErlang} except that
\[1/g(\lambda)  = \prod_{i=1}^k \left(1+\frac{\lambda}{\mu_i}\right).\]
The second part follows by showing that $(1+\lambda) \leq \prod_{i=1}^k (1+\lambda/\mu_i) \leq
\left(1+\frac{\lambda}{k}\right)^k$ for any $\mu_i \geq 0$ such that
$\sum_{i=1}^k 1/\mu_i = 1$.  The first inequality is immediate.
By defining $x_i = \lambda/\mu_i +1$, the second inequality follows from
the fact that $\prod_{i=1}^k x_i$ with $\sum_{i=1}^k x_i = \lambda + k$
is maximized by setting $x_i = (\lambda+k)/k$ (i.e., $\mu_i = k$).
\end{proof}

\begin{prop}\label{prop:smallhyp}
For $r$ tending to zero and hyperexponential job sizes with $k$ phases and mean one, 
sharing is best if and only if $\lambda < \lambda^*$, where 
$\lambda^*$ is the unique solution of $1/\lambda = 1 + \sum_{i=1}^k p_i \mu_i /(\lambda+\mu_i)$ on $(0,1)$.
Further $\lambda^* \leq \phi -1$ with $\phi$ the golden ratio.
\end{prop}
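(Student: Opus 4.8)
The plan is to follow the recipe of the preceding propositions of this section: use the continuity theorem proved just above (continuity of $\pi_\ell(r)$ on $[0,\infty)$) together with Theorem~\ref{th:comp} to pass to the $r=0$ limit, where the queue becomes an ordinary M/G/1 queue, and then invoke Proposition~\ref{lemma:MG1_pi1}. For fixed $\lambda$ the limit $r_{overall}\to 0$ amounts to $r=r_{overall}/(1-\lambda)\to 0$; by continuity $\pi_{2+}(r)\1\to\pi_{2+}(0)\1$, and since $\lambda_0(0)=\lambda$ by \eqref{eq:lam0v3}, the generator $Q(0)$ in \eqref{eq:Q} is exactly that of an M/G/1 queue with arrival rate $\lambda$ and the given (hyperexponential) job size. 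Hence, by \eqref{eq:test}, in the limit $r\to 0$ sharing is best iff $1-\lambda>\pi_{2+}(0)\1=1-\pi_{1-}$, i.e.\ iff $\lambda<\pi_{1-}=(1-\lambda)/g(\lambda)$ with $g$ the Laplace transform of the job size (Proposition~\ref{lemma:MG1_pi1}). For a $k$-phase hyperexponential job size with mean $1$ we have $g(\lambda)=\sum_{i=1}^k p_i\mu_i/(\lambda+\mu_i)$, so this reads $\lambda g(\lambda)<1-\lambda$, equivalently $G(\lambda):=\lambda(1+g(\lambda))<1$; the boundary $G(\lambda)=1$ is precisely $1/\lambda=1+\sum_i p_i\mu_i/(\lambda+\mu_i)$.

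I would then show that $G(\lambda)=1$ has a unique solution $\lambda^*\in(0,1)$ and that sharing is best iff $\lambda<\lambda^*$. The key point is that $G$ is strictly increasing on $(0,1)$: writing $G'(\lambda)=1+\frac{d}{d\lambda}\big(\lambda g(\lambda)\big)$ and $\lambda g(\lambda)=\sum_i p_i\,\lambda\mu_i/(\lambda+\mu_i)$, each summand has derivative $\mu_i^2/(\lambda+\mu_i)^2>0$, so $G'(\lambda)>1>0$. Since $G(0)=0<1$ and $G(1)=1+g(1)>1$, there is a unique $\lambda^*\in(0,1)$ with $G(\lambda^*)=1$, and $G(\lambda)<1\Leftrightarrow\lambda<\lambda^*$, which gives the first claim.

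For the bound $\lambda^*\leq\phi-1$ I would compare $G$ with its exponential counterpart. Putting $m_i=1/\mu_i$, the mean-$1$ constraint reads $\sum_i p_im_i=1$ and $g(\lambda)=\sum_i p_i\,(1+\lambda m_i)^{-1}$. Since $t\mapsto(1+\lambda t)^{-1}$ is convex on $[0,\infty)$ for $\lambda\geq 0$, Jensen's inequality gives $g(\lambda)\geq(1+\lambda)^{-1}$, the Laplace transform of an $\mathrm{Exp}(1)$ job size evaluated at $\lambda$. Consequently $G(\lambda)\geq G_{\exp}(\lambda):=\lambda\big(1+(1+\lambda)^{-1}\big)$ for all $\lambda\in(0,1)$. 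Solving $G_{\exp}(\lambda)=1$ yields $\lambda^2+\lambda-1=0$, i.e.\ $\lambda=\phi-1$ (the exponential threshold, in agreement with the $r_{overall}\to 0$ limit of \eqref{eq:testexp}). As $G$ is strictly increasing and $G(\phi-1)\geq G_{\exp}(\phi-1)=1=G(\lambda^*)$, we conclude $\lambda^*\leq\phi-1$, with equality only in the degenerate (exponential) case.

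None of these steps is deep; the two places that need a little care are (i) justifying that the $r\to 0$ limit may be taken inside the test \eqref{eq:test}, which is exactly the content of the preceding continuity theorem together with the observation that $Q(0)$ is an M/G/1 generator, and (ii) choosing the reformulation $G(\lambda)=\lambda(1+g(\lambda))$, which makes monotonicity --- hence uniqueness of $\lambda^*$ and the comparison with the exponential case --- immediate, whereas trying to show the difference $1/\lambda-1-g(\lambda)$ is monotone is awkward since both terms decrease in $\lambda$.
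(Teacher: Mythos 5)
Your proof is correct and follows essentially the same route as the paper: pass to the $r\to 0$ limit via continuity and Theorem~\ref{th:comp}, identify the limit with the M/G/1 quantity $(1-\lambda)/g(\lambda)$ from Proposition~\ref{lemma:MG1_pi1}, and obtain $\lambda^*\leq\phi-1$ by a Jensen-type comparison with the exponential case (you apply Jensen to $t\mapsto(1+\lambda t)^{-1}$ at $m_i=1/\mu_i$, the paper to $x\mapsto 1/x$ at $\mu_i/(\lambda+\mu_i)$; both give $1/g(\lambda)\leq 1+\lambda$). Your explicit monotonicity argument for $G(\lambda)=\lambda(1+g(\lambda))$ is a small addition that makes the uniqueness of $\lambda^*$ and the ``if and only if'' direction fully explicit, but it does not change the substance of the argument.
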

\begin{proof}
As $g(\lambda) = \sum_{i=1}^k p_i \mu_i/(\lambda+\mu_i)$, Proposition \ref{lemma:MG1_pi1} and the continuity imply
\begin{align*}
\lim_{r \rightarrow 0} (1-\pi_{2+}(r)\1) &= \frac{1-\lambda}{\sum_{i=1}^k p_i \mu_i/(\lambda+\mu_i)}.
\end{align*}
Thus, for $r$ tending to zero, sharing is best if and only if
\[\frac{\lambda}{1-\lambda} < \left( \sum_{i=1}^k p_i \mu_i/(\lambda+\mu_i) \right)^{-1}.\]
To establish the upper bound on $\lambda^*$ we need to show that the right hand side is bounded by $(1+\lambda)$.
Using the finite form of Jensen's inequality with $\phi(x)=1/x$, we get
\begin{align*} 
\phi(\sum_{i=1}^k p_i \mu_i/(\lambda+\mu_i)) &\leq \sum_{i=1}^k p_i \phi(\mu_i/(\lambda+\mu_i)) \\
&= \lambda \sum_{i=1}^k p_i/\mu_i + \sum_{i=1}^k p_i = \lambda +1,
\end{align*}
as the mean job length equals one.
\end{proof}

\section{Conclusions and future work}\label{sec:concl}
We introduced a mean field model for work sharing and work stealing with phase-type distributed job sizes and
indicated how to determine whether sharing or stealing is best for a given arrival rate, 
overall probe rate and job size distribution. Bounds that apply to any phase-type job size distribution
on the region where sharing/stealing is best were also discussed.
The main insight is that work stealing benefits considerably as the job sizes become more variable and may be
superior to work sharing for loads only marginally exceeding $1/2$ for some workloads.

The work sharing strategy considered in this paper is such that a server sends probe messages at a fixed rate $r$ whenever
it has pending jobs. One could also consider more advanced schemes where probing only starts if the number of pending jobs
exceeds some threshold or, more generally, where the probe rate depends on the queue length. Similarly one can also consider 
using a second (smaller) threshold and allowing servers with fewer jobs to accept job transfers. Such strategies were considered in
\cite{minnebo3},\cite{minnebo5} in case of exponential jobs sizes. We expect that the approach presented in this paper is also 
applicable to analyze such strategies with non-exponential job sizes.

Future work may exist in showing that the unique fixed point of the mean field model corresponds to the limit of the 
finite dimensional stationary distributions as well as proving the conjectured general work sharing bound. 
Other extensions might exist in studying these (or other) work sharing and stealing strategies in combination with load
balancing schemes such as Join-the-Shortest-Queue among $d$ randomly selected servers or Join-the-Idle-Queue.

\section*{Acknowledgment}
The author likes to thank Kostia Avrachenkov for a useful discussion on proving
the continuity of $\pi_\ell(r)$.

\bibliographystyle{plain}
\bibliography{../PhD/thesis}


\end{document}